\definecolor{defblue}{rgb}{0.121,0.47,0.705}
\definecolor{linkblue}{rgb}{0.098,0.098,0.4392}
\newcommand{\ConjOUT}[1]{}
\newcommand{\restateref}[1]{\IfAppendix{\hyperref[#1]{$\star$}}{\hyperref[#1*]{$\star$}}}
\newtheorem{theorem}{Theorem}
\newtheorem{lemma}[theorem]{Lemma}
\newtheorem{observation}{Observation}
\newtheorem{corollary}[theorem]{Corollary}
\newtheorem{claim}{Claim}
\newtheorem{myclaim}{Claim}{\bfseries}{\itshape}
\crefname{myclaim}{Claim}{Claims}
\crefname{conjecture}{Conjecture}{Conjectures}
\newtheorem{question}{Question}
\crefname{question}{Question}{Questions}
\Crefname{figure}{Figure}{Figure}
\newcommand{\ER}{\ensuremath{\exists\mathbb{R}}\xspace}
\newcommand{\NP}{\ensuremath{\text{NP}}\xspace}
\newcommand{\R}{\mathbb{R}}
\newcommand{\N}{\mathbb{N}}
\newcommand{\A}{\ensuremath{\mathcal{A}}\xspace}
\renewcommand{\S}{\ensuremath{\mathcal{S}}\xspace}
\DeclareMathOperator{\qn}{qn}
\newcommand{\stretchability}{\textsc{Stretchability}\xspace}
\newcommand{\segmentStretchability}{\textsc{Pseudo-Segment Stretchability}\xspace}
\newcommand{\geo}{\textsc{Geometric Thickness}\xspace}
\newcommand{\thechromaticnumber}{30}
\newcommand{\themaxdegree}{72}
\newcommand{\Sum} [2] {\the\numexpr #1 + #2 \relax}
\newcommand{\Prod} [2] {\the\numexpr #1 * #2 \relax}
\newcommand{\chromaticnumber}{\ensuremath{\thechromaticnumber}\xspace}
\newcommand{\chromaticnumberplusone}{\ensuremath{\Sum{\thechromaticnumber}{1}}\xspace}
\newcommand{\thekplanarnumber}{\ensuremath{\Prod{\themaxdegree}{\Sum{\Prod{\thechromaticnumber}{2}}{1}}\xspace}}
\newcommand{\kplanarnumber}{\ensuremath{\thekplanarnumber}\xspace}
\newcommand{\sge}{\textsc{Simultaneous Graph Embedding}\xspace}
\title{Geometric Thickness of Multigraphs is \ER-Complete}
\author[1]{Henry F\"{o}rster}
\author[2]{Philipp~Kindermann}
\author[3]{Tillmann~Miltzow}
\author[4]{Irene Parada}
\author[5]{Soeren Terziadis}
\author[6]{Birgit Vogtenhuber}
\affil[1]{Wilhelm-Schickard Institut für Informatik, Universit\"{a}t T\"{u}bingen}
\affil[2]{FB IV - Computer Science, Trier University}
\affil[3]{Department of Information and Computing Sciences, Utrecht University}
\affil[4]{Departament de Matemàtiques, Universitat Politècnica de Catalunya}
\affil[5]{Institute of Logic and Computation, Technische Universität Wien}
\affil[6]{Institute of Software Technology, Graz University of Technology}
\begin{document}

\maketitle

\abstract{
    We say that a (multi)graph $G = (V,E)$ has geometric thickness $t$ if there exists
    a straight-line drawing $\varphi : V \rightarrow \R^2$ and a $t$-coloring of its edges where 
    no two edges sharing a point in their relative interior have the same color.
    %Furthermore, if an edge has higher multiplicity, then each copy must have a different color.
    The \geo problem asks whether a given multigraph has geometric thickness at most $t$. 
    %where $t$ is given in the input. 
    This problem was shown to be NP-hard for $t=2$ [Durocher, Gethner, and Mondal, CG 2016]. 
    In this paper, we settle the computational complexity of \geo by showing that it is \ER-complete already for thickness \chromaticnumber. 
    Moreover, our reduction shows that the problem is \ER-complete for \kplanarnumber-planar graphs, where a graph is $k$-planar if it admits a topological drawing with at most $k$ crossings per edge. 
    In the course of our paper we answer previous questions on geometric thickness and on other related problems, in particular that
    simultaneous graph embeddings of \chromaticnumberplusone edge-disjoint graphs and pseudo-segment stretchability with chromatic number~\chromaticnumber are \ER-complete. 
    }

\section{Introduction}

This paper shows that \geo is \ER-complete, for multigraphs and geometric thickness at least \chromaticnumber.
We start with a tangible example, historic background, and motivation. 
Then we state our results formally,
followed by an in-depth discussion.
We conclude this section by giving an introduction to the existential theory of the reals and a sketch of the \ER-completeness proof of segment stretchability due to Schaefer~\cite{S21b}.

\begin{figure}
    \centering
    \includegraphics
    [width=.9\linewidth]
    {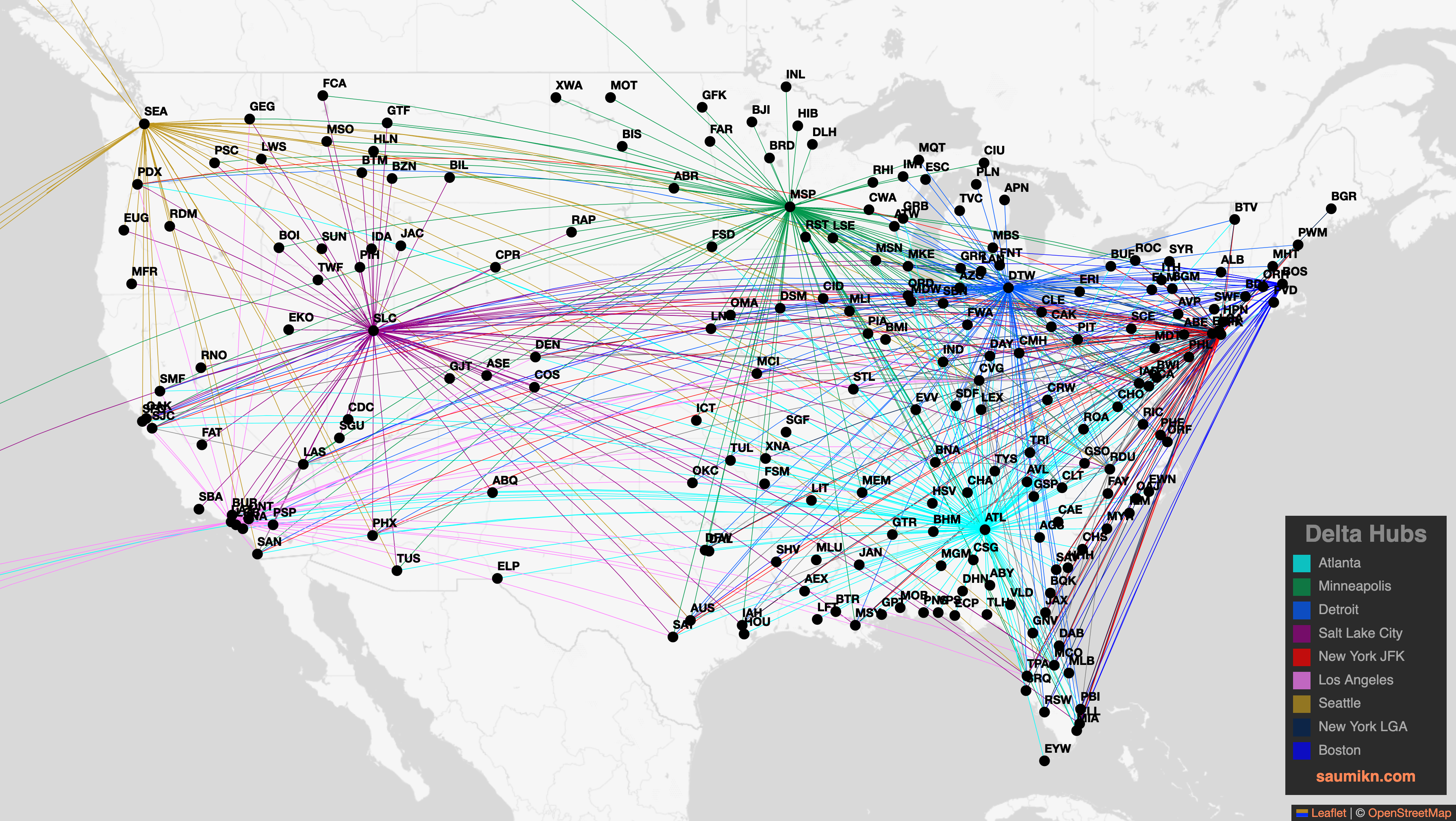}
    \caption{
    Illustration of Delta airline connections within the continental United States. Connections from different hubs are colored differently.
    Source: \href{https://saumikn.com/blog/airlinemaps/}{https://saumikn.com/blog/airlinemaps/}, used with permission.
    }
    \label{fig:practical_examples}
\end{figure}

\paragraph{Tangible Examples.} Assume you want to visualize the flight network of the continental United States of America, see \Cref{fig:practical_examples}.
A natural way to do so is to find a drawing with cities as points and connections by straight lines.
Now, if the edges are not crossing this often gives an insightful illustration revealing patterns that are not easily determined by looking at the adjacency matrix.
Unfortunately, often networks, like the airline connection network, have many edges and it is not possible to obtain a planar drawing.
In this case, we can color the edges in the drawing. 
In case that there is no monochromatic crossing and few colors, it is still relatively easy for humans to process the graph in a way that we can extract useful information from it.

In this work, we study the algorithmic question of finding such a drawing together with a correct coloring.

\paragraph{Historic Background}
The \emph{thickness} of a graph $G$ is the minimum number of planar subgraphs whose union is $G$. 
It is an old concept; it was formally introduced by Tutte~\cite{Tutte1963thickness} in 1963, 
but the concept of biplanarity (i.e., geometric thickness 2) had already appeared before, 
most relevantly, in connection with two open problems: 
First, Ringel's Earth–Moon problem on the chromatic number of biplanar graphs~\cite{Ringel1959coloring} 
and second, a question by Selfridge, formulated by Harary, asking whether $K_9$ is biplanar~\cite{Harary1961thickness,BattleHK1962K9,Tutte1963K9}. 
In 1983, Mansfield~\cite{Mansfield1983thickness} showed that deciding whether a graph is biplanar is NP-complete. 

In this article we study the geometric or straight-line version of thickness, which requires that all planar subgraphs are embedded simultaneously with straight-line edges. 
More precisely, a multigraph (In a multigraph, the edge set $E$ may contain multiple edges connecting the same pair of vertices.) 
$G = (V,E)$ has \emph{geometric thickness} $t$ if there exists a straight-line drawing $\varphi : V \rightarrow \R^2$ of $G$ and a $t$-coloring of all the edges such that no two edges of the same color share a point other than a common endpoint. \Cref{fig:Kn_examples} shows an illustration. 
Note that by definition, two edges connecting the same two endpoints must be assigned distinct colors in the $t$-coloring.

\begin{figure}[t]
    \centering
    \includegraphics[width=0.9\textwidth,height=3.2cm]{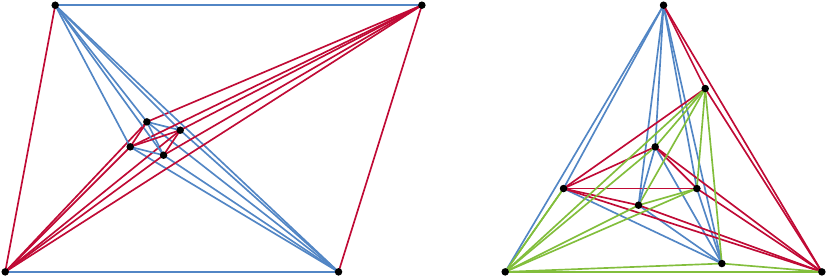}
    \caption{The (geometric) thickness of $K_8$ is 2 and of $K_9$ is 3.
    }
    \label{fig:Kn_examples}
\end{figure}

The concept of geometric thickness was introduced by Dillencourt, Eppstein, and Hirschberg~\cite{DEH00}, who studied the geometric thickness of complete and complete bipartite graphs. 
They already asked about the computational complexity of \geo, the problem of deciding whether given a graph $G$ and a value $t$, $G$ has geometric thickness at most $t$.
Durocher, Gethner, and Mondal~\cite{durocher2016thickness} partially answered this question by showing that \geo is  NP-hard 
even for geometric thickness~2.

\paragraph{Relation Between Geometric Thickness and Other Parameters}
~\Cref{fig:Kn_examples} shows straight-line drawings of $K_8$ and $K_9$ decomposed into two and three plane subgraphs, respectively; these bounds are tight for both the thickness and the geometric thickness. 
For $n>10$ the thickness of $K_n$ is $\lfloor \frac{n+2}{6} \rfloor$~\cite{AlekseevG1976thicknessKn,BeinekeH1965thicknessKn} while the geometric thickness, for which no tight bound is known, is lower bounded by $\frac{n}{5.646}$~\cite{DEH00}.

In general, the geometric thickness of a graph is not bounded by any function of its thickness. 
Eppstein~\cite{Eppstein2004thickness} proved that for every $t$, there exists a graph with thickness 3 and geometric thickness at least $t$.

The geometric thickness of graphs has also been studied in connection with the degree. 
The geometric thickness of graphs with maximum degree four is two~\cite{DuncanEK04lowdeg}. 
This result does not generalize: based on counting techniques, Bar\'{a}t, Matou\v{s}ek and Wood~\cite{BaratMW2006largethickness} showed that there exist bounded degree graphs with arbitrarily large geometric thickness. 
%There exist $(6t−1)-regular graphs with thickness $t$~\cite{Duncan2015regular}
A graph is \emph{2-degenerate} if every subgraph contains a vertex of degree at most 2. It was recently shown~\cite{JainRRS2023twodegenerate} that 2-degenerate graphs have geometric thickness at most 4 and some of them have geometric thickness at least 3.

Regarding the density, Dujmovi\'c and Wood~\cite{DujmovicW2018antithickness} showed that every graph with $n\geq3$ vertices and geometric thickness $k\le n/2$ has at most $k(3n-k-5)$ edges and there are arbitrarily large graphs with geometric thickness $k$ and $k(3n-4k-3)$ edges.

\paragraph{Motivation}
The \geo problem combines two computationally hard problems: 
splitting the edges into color classes and positioning the vertices. 
The first problem, where we are given the straight-line drawing and the goal is to decompose it into the minimum number of plane subgraphs, corresponds to a graph coloring problem for the corresponding segment intersection graph of the drawing. 
The 2022 Computational Geometry Challenge focused on this problem~\cite{CG-SHOP22}.

The second problem, when we are given the color classes and the goal is to position the vertices such that no two edges of the same color class intersect in their relative interior, is the \sge problem, which we will formally define later. Thus, \geo and \sge differ in whether or not the partition of the edges is already given. As the partition of edges is combinatorial property, it is not the part that makes \geo \ER-hard. However, existing \ER-hardness results for \sge~\cite{S21b} do not immediately transfer to \geo as it may be possible that for some partitions it is in fact easy to find a corresponding embedding.
Both \sge and \geo connect to practical applications.

In network visualization, in particular of infrastructure, social, and transportation networks, one often has to deal with
intersecting 
systems of connections belonging to different subnetworks. 
To represent them simultaneously, different visual variables such as colors are used to indicate edge classes. 
Drawing the edges with straight-line segments and removing/minimizing same-class crossings is often desirable for readability.

In some settings, the vertex positions can be freely chosen, while the edges classes are given.
An example would be visualizing a system of communication channels (phone, email, messenger services) between a set of persons. 
This case corresponds to the \sge problem. 
In other settings, both the positions and the edge classes can be freely chosen, which corresponds to the \geo problem. 
It appears in applications such as VLSI design, where a circuit using uninsulated wires requires crossing wires to be placed on different layers~\cite{MutzelOS1998thickness}.

Since its introduction as a natural measure of approximate planarity~\cite{DEH00}, geometric thickness keeps receiving attention in Computational Geometry. 
However, some fundamental questions remain open, including determining the geometric thickness of $K_n$ and the complexity of \geo. 
Moreover, the computational methods currently available are not able to provide straight-line drawings of low geometric thickness for graphs with more than a few vertices. 
 
Our results show that, even for small constant values of geometric thickness, computing such drawings is, under widely-believed computational complexity assumptions, harder than any NP-complete problem.

\subsection{Results and Related Problems}
In this section we state all our results, starting with our main result. 
To this end, we first need to introduce 
the complexity class \ER.
The class \emph{\ER} 
    can be defined as the set of problems that are at most as difficult as finding a real root of a multivariate polynomial with 
    integer coefficients. 
    A problem in \ER is \emph{\ER-complete} if it is as difficult as this problem.
    We give a more detailed introduction to \ER in \Cref{sub:etr}. 
We are now ready to state our main result.

\begin{theorem}
    \label{thm:para-ER}
   \geo is \ER-complete for multigraphs already for any fixed thickness $t \geq \chromaticnumber$. 
\end{theorem}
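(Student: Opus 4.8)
The plan is to prove \ER-membership and \ER-hardness separately. Membership is routine: given a multigraph $G$ and thickness $t$, we existentially quantify over the $2|V|$ real coordinates of the drawing $\varphi$ and over a $t$-coloring of the edges (the latter is combinatorial, so we can either enumerate colorings in the outer nondeterminism—since \ER is closed under NP-reductions—or encode colors by additional real variables constrained to be one of $t$ values). For each pair of same-colored edges, ``do not share an interior point'' is expressible by a first-order formula over the reals: two segments cross iff a conjunction of orientation (sign-of-determinant) predicates holds, and we simply require the negation. Hence the whole condition is an \ETR sentence of polynomial size, giving membership in \ER.

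The hard part is \ER-hardness, and here the strategy is to reduce from a geometric \ER-complete problem that already ``feels like'' a thickness problem—namely \segmentStretchability, i.e.\ deciding whether a pseudo-segment arrangement is stretchable to a straight-line arrangement, which (as the excerpt promises to establish) is \ER-complete even when the arrangement has chromatic number \chromaticnumber. The idea is that a pseudo-segment arrangement with chromatic number $c$ is essentially a drawing in which the segments can be $c$-colored so that same-colored segments are pairwise non-crossing; asking to realize it with straight segments and the same coloring is exactly a geometric-thickness-$c$ instance. So I would: (1) take a pseudo-segment arrangement $\mathcal S$ together with a proper $c$-coloring of its crossing graph, $c = \chromaticnumber$; (2) build a multigraph $G_{\mathcal S}$ whose edges correspond to the pseudo-segments, with the endpoints of the pseudo-segments as vertices, and subdivide/route carefully so that the \emph{combinatorial structure} of which pairs of segments must cross is forced by the graph and not merely by the arrangement; (3) argue that $G_{\mathcal S}$ has geometric thickness $\le c$ iff $\mathcal S$ is stretchable. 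The forward direction is easy: a stretching gives a straight-line drawing, and the given $c$-coloring certifies thickness $c$. The reverse direction is the crux: from \emph{any} thickness-$c$ straight-line drawing of $G_{\mathcal S}$ we must recover a stretching of $\mathcal S$, which requires gadgets that (a) force the cyclic/rotational order of segments around their shared structure, (b) force each required crossing actually to occur (so that the coloring must essentially be the prescribed one, up to symmetry), and (c) prevent the drawing from ``cheating'' by using extra freedom. This typically needs rigid frame gadgets and many parallel copies of edges (this is where multigraphs and the large constant \chromaticnumber come from) to pin down the combinatorics.

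The main obstacle, therefore, is the reverse direction of the hardness reduction: designing gadgets that rigidly transfer a low-thickness straight-line drawing back into a stretched pseudoline/pseudosegment arrangement, i.e.\ ensuring the graph \emph{enforces} the crossing pattern rather than merely permitting it. A natural way to organize this is to first reduce to a ``pinned'' or ``order-preserving'' variant of stretchability—fixing a large rigid substructure (a grid or a big complete-bipartite-like frame whose unique thickness-$c$ drawings are essentially rigid)—and then attach the pseudo-segments to this frame so their endpoints and forced crossings are locked in. Finally, once the hardness is set up for pseudo-segment stretchability with chromatic number \chromaticnumber, the same construction yields the claimed consequences: the frame plus $c{+}1$ edge-disjoint graphs gives \ER-hardness of \sge for \chromaticnumberplusone graphs, and tracking the maximum number of crossings per edge in the topological drawing underlying $\mathcal S$ gives the \kplanarnumber-planar statement. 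Throughout, I would keep $t$ as a parameter and note that padding with isolated parallel edges lets one push the same hardness to every fixed $t \ge \chromaticnumber$.
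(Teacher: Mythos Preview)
Your high-level strategy matches the paper's: reduce from \segmentStretchability restricted to arrangements of chromatic number~\chromaticnumber, represent each pseudo-segment by a single ``long'' edge, and use multiplicities to rigidify the picture so that any thickness-$t$ drawing recovers a stretching. Membership is also fine (the paper phrases it via the real-RAM verification characterisation, but your \ETR encoding is equivalent).

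Where your proposal stays too vague is exactly the part you flag as the crux. Your suggestion of a generic rigid scaffold (``a grid or a big complete-bipartite-like frame'') with segments ``attached'' to it does not by itself force the prescribed crossings to occur, nor does it prevent a long edge from wandering into regions where it shouldn't be. The paper's mechanism is much more specific and is the real content of the proof: the frame is built \emph{from the arrangement itself}. Around each pseudo-segment one places a \emph{tunnel} bounded by multiplicity-$t$ paths (hence uncrossable by anything), and at each crossing a $4$-cycle \emph{crossing box} of multiplicity~$t{-}1$ (hence crossable by edges of exactly one colour). Additional multiplicity-$t$ \emph{connectors} triangulate the complement so that the frame is a subdivision of a $3$-connected planar graph and therefore has a unique embedding. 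Finally, multiplicity-$1$ \emph{blockers} across each crossing box propagate the colour constraint from one tunnel segment to the next, forcing every long edge to stay inside its tunnel and hence to realise the intended crossing pattern. None of these ingredients---the $t$ versus $t{-}1$ multiplicities, the $3$-connectivity argument for uniqueness, or the colour-propagation via blockers---appears in your plan, and a generic rigid frame does not obviously supply them.

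One small correction: your ``padding with isolated parallel edges'' to go from $t=\chromaticnumber$ to larger $t$ does not work as stated. Adding a disjoint multiplicity-$t$ edge forces the new graph to have thickness exactly $t$ regardless of whether the original instance was a yes- or no-instance, so it is not a valid reduction. The paper instead parametrises the whole construction by $t$ from the start (tunnel boundaries get multiplicity~$t$, crossing boxes multiplicity~$t{-}1$); since the arrangement's chromatic number is at most $\chromaticnumber\le t$, the completeness direction still goes through.
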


Assuming $\ER \neq$ NP,
\Cref{thm:para-ER} shows that \geo is even more difficult than any problem in NP. 
This implies that even SAT and MILP solvers should not be able to solve \geo in full generality.
Even more, while a planar graph on $n$ vertices can be drawn on an $n\times n$ grid,
there are graphs with geometric thickness $\geq \chromaticnumber$
that will need 
more than exponentially large integer coordinates for any drawing.

We prove \Cref{thm:para-ER} in \Cref{sec:geothickness} via a reduction from the problem \segmentStretchability, 
which we discuss next.

\paragraph{Pseudo-Segment Stretchability} 
Schaefer showed that the problem \segmentStretchability (see below for a definition) is \ER-complete~\cite{S21b}.
We closely inspected the proof by Schaefer (see also \cref{sec:reduction_description}) and observed some small extra
properties that we use for our bound in \Cref{thm:para-ER} (see also \cref{sec:chromaticnumber}).
To state our additional observations, we first introduce the corresponding definitions.

A \emph{pseudo-segment arrangement}
in $\R^2$ is an arrangement of 
 Jordan arcs
 such that any two arcs intersect at most once.
 (A \emph{Jordan arc} is a continuous   curve described by an injective function from a closed interval to the plane.)
A pseudo-segment arrangement \A is called \emph{stretchable} if there exists a segment arrangement~\S 
such that \A and \S are isomorphic.
The curves 
can be encoded using a planar graph via planarization, as it completely determines the isomorphism type which %.
describes the order in which each pseudo-segment intersects
all other pseudo-segments.
However, in contrast to (bi-infinite) pseudolines, this information is not enough.
For example the pseudo-segment arrangement in \Cref{fig:clouds} (left) cannot be stretched,
although there exists a segment arrangement with the same intersection pattern (right). 
Additionally, we require
the clockwise cyclic order around each intersection point to be maintained.

\begin{figure}[tbp]
    \centering
    \includegraphics[scale=0.95]{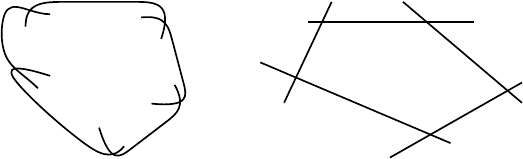}
    \caption{A stretchable (right) and a non-stretchable (left) pseudo-segment arrangement with the same intersection pattern but different cyclic order around the intersections.}
    \label{fig:clouds}
\end{figure}

\segmentStretchability asks whether a given pseudo-segment arrangement is stretchable.
Schaefer~\cite{S21b} showed that \segmentStretchability is  be \ER-hard, even 
if each pseudo-segment intersects at most $72$ other pseudo-segments.
We consider the intersection graph $G$ of all the
pseudo-segments, and we say a pseudo-segment arrangement has 
\emph{chromatic number} $\chi$
if and only if the underlying intersection graph $G$
has this chromatic number.
We prove the following corollary based on and strengthening a result of Schaefer~\cite{S21b} in \Cref{sec:chromaticnumber}. 

\begin{restatable}{corollary}{corColoring}
    \label{cor:coloring}
    \segmentStretchability is \ER-hard even for pseudo-segment arrangements 
    of chromatic number \chromaticnumber.
\end{restatable}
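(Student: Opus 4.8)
The plan is to take Schaefer's \ER-hardness reduction for \segmentStretchability as a black box and perform a bounded modification that controls the chromatic number of the intersection graph while preserving stretchability. Since Schaefer already shows hardness for instances where each pseudo-segment crosses at most $72$ others, the intersection graph $G$ has maximum degree at most $72$, hence chromatic number at most $73$ by greedy coloring. So the only work is to \emph{lower} the chromatic number from (at most) $73$ down to exactly \chromaticnumber\ without destroying the reduction. The key observation is that the chromatic number of $G$ is a combinatorial property of the arrangement's intersection pattern alone, and stretchability is unaffected by operations that do not change the isomorphism type in an essential way.

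\medskip

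First I would recall from \Cref{sec:reduction_description} the structure of the pseudo-segments produced by Schaefer's reduction, in particular which segments cross which, and identify the local ``hot spots'' where many pseudo-segments pairwise cross (these are what force a large clique / large chromatic number in $G$). Second, I would argue that each pseudo-segment can be \emph{subdivided} into several sub-pseudo-segments (by ``cutting'' it at points lying in no other pseudo-segment, i.e.\ in the interior of a cell of the arrangement), and that this operation (i) does not affect stretchability — a straightening of the original arrangement restricts to a straightening of the cut one and vice versa, since collinearity of consecutive pieces can be enforced, or more cleanly, one shows the cut arrangement is stretchable iff the original is — and (ii) strictly decreases the relevant degrees/cliques because a long pseudo-segment that participated in many crossings is replaced by short pieces each participating in only a few. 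Third, I would choose the cuts so that every sub-pseudo-segment meets at most \chromaticnumber$-1$ others and, crucially, so that the resulting intersection graph is \chromaticnumber-colorable (e.g.\ by exhibiting an explicit proper \chromaticnumber-coloring of the modified gadgets, or by bounding its clique number and invoking a structural coloring argument). Finally I would note that we also need a matching lower bound showing the chromatic number is not smaller — this is automatic as soon as a single gadget in the reduction contains a \chromaticnumber-clique (or more simply, we just need the reduction to remain \ER-hard, so ``at most \chromaticnumber'' suffices, and the phrase ``of chromatic number \chromaticnumber'' can be read as ``with chromatic number exactly \chromaticnumber'' by padding with a disjoint \chromaticnumber-clique of pairwise crossing tiny segments in an empty region, which trivially doesn't affect stretchability).

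\medskip

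The main obstacle I anticipate is the verification that subdividing pseudo-segments genuinely preserves stretchability in the presence of the ``cyclic order around each intersection point'' constraint highlighted in \Cref{fig:clouds}: one must check that a straight realization of the cut arrangement can always be ``glued'' back into a straight realization of the original, which requires the pieces of each former pseudo-segment to end up collinear. The clean way around this is to not glue at all: instead argue directly that \segmentStretchability on the cut instance is \ER-hard by a trivial reduction from \segmentStretchability on the original (map a straight realization of the original to one of the cut instance by cutting, and conversely observe that any straight realization of the cut instance already has each group of pieces forced collinear by the isomorphism type if the cuts are chosen at ``order-type-forced'' positions — or, failing that, add a few auxiliary pseudo-segments through each cut point to pin down collinearity). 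I would therefore spend most of the proof making this subdivision-and-pinning construction precise and bounding the chromatic number of the resulting gadget, then conclude by combining with Schaefer's theorem (\cite{S21b}) exactly as stated in the excerpt.
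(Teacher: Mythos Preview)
Your approach is fundamentally different from the paper's and contains a genuine gap. The paper does \emph{not} subdivide pseudo-segments at all. Instead, it opens up Schaefer's construction and modifies it internally: it replaces high-degree vertices of the dependence graph by binary trees so that every vertex has degree at most four, reroutes the orthogonal drawing so that for each von Staudt gadget exactly one transmission gadget attaches from above, and then exhibits an \emph{explicit} $10$-coloring of every von Staudt gadget together with its attached transmission gadgets (verified gadget by gadget). A separate lemma shows that the coloring propagates through chains of transmission gadgets because three consecutive inversion gadgets suffice to decouple the colors at the two ends. Finally, the Las Vergnas uniformization step (replacing some segments by two or four segments) is shown to multiply the chromatic number by at most $3$, not $4$, yielding $10\cdot 3=\chromaticnumber$.

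Your subdivision idea breaks at exactly the point you flag but do not resolve. Cutting a pseudo-segment strictly enlarges the set of stretchable realizations: a no-instance can become a yes-instance because the pieces are free to meet at an angle at the cut point. Your option (a), that collinearity of the pieces be ``forced by the isomorphism type'' at well-chosen cut points, is simply false in general; nothing in the combinatorics of a pseudo-segment arrangement encodes collinearity of two arcs meeting only at an endpoint. Your option (b), adding auxiliary pseudo-segments to pin down collinearity, is precisely the hard part: enforcing that three points are collinear is what the von Staudt machinery is for, and any such gadget introduces several new mutually crossing segments at every cut, so you are re-inflating the very degree you were trying to reduce. You would have to quantify this trade-off and show the net chromatic number still drops to $\chromaticnumber$, which you have not done and which does not follow from any black-box use of Schaefer's bound of $72$. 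In short, the corollary genuinely requires looking inside the gadgets, and the specific constant $\chromaticnumber$ comes from the explicit $10$-coloring of those gadgets together with the factor-$3$ analysis of the uniformization step.
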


\paragraph{Simultaneous Graph Embedding}
A slight modification of our main result happens to also resolve a question by Schaefer about  simultaneous graph embeddings in the sunflower setting~\cite{S21b}.
Before stating the result, we need further definitions.
Given simple graphs $G_1,\ldots, G_k$ on the same vertex set $V$, we say $\varphi : V \rightarrow \R^2$ is a \emph{simultaneous} graph embedding, if the straight-line drawing of each $G_i$, $i = 1,\ldots, k$, on this vertex set is crossing-free.
We say $G_1,\ldots, G_k$ form a \emph{sunflower} if every edge is either in all graphs or in exactly one of the graphs.
An edge that is present in all graphs is \emph{in the center} of the sunflower or a \emph{public} edge.
Edges that are only in one graph are \emph{private} edges or \emph{belonging to a petal} of the sunflower. 
If all edges are private, we say that $G_1,\ldots, G_k$ form an \emph{empty} sunflower.
We define \sge as the algorithmic problem with $k$ graphs 
as input that asks if there exists a simultaneous graph embedding.

\begin{corollary}
    \label{thm:SGE}
    \sge is \ER-complete already with $k = \chromaticnumberplusone$ and 
    $G_1,\ldots,G_k$ forming an {empty} sunflower.
\end{corollary}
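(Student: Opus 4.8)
The plan is to derive \Cref{thm:SGE} from the very reduction that proves \Cref{thm:para-ER}, the only change being that we \emph{fix} the edge colouring instead of leaving it to the algorithm. Membership in \ER is routine: a simultaneous embedding of $G_1,\dots,G_k$ on a common vertex set $V$ is a point $(\varphi(v))_{v\in V}\in\R^{2|V|}$ such that, for every $i$ and every pair of independent edges of $G_i$, the corresponding closed segments are disjoint, while a pair of adjacent edges meets only in its shared endpoint; both conditions are Boolean combinations of polynomial (in)equalities in the coordinates, so the existence of $\varphi$ is an \ETR sentence of size polynomial in $\sum_i|E(G_i)|$. Hence \sge $\in$ \ER, and so is its restriction to empty sunflowers. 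It thus remains to establish \ER-hardness in the claimed restricted form.

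For hardness I would reuse the construction behind \Cref{thm:para-ER}. It turns a pseudo-segment arrangement \A into a multigraph $G$ on a vertex set $V$ such that $G$ has geometric thickness at most \chromaticnumber iff \A is stretchable. The intended drawing consists of a rigid, plane \emph{frame} $F\subseteq G$ that pins every vertex into the neighbourhood of a fixed grid, together with, for each pseudo-segment $p$ of \A, a \emph{thread} $T_p\subseteq G$ whose straight-line realisation plays the role of $p$; moreover, in any thickness-\chromaticnumber drawing two threads receive a common colour only if their pseudo-segments are disjoint. By \Cref{cor:coloring} — more precisely, by its proof, which produces arrangements of chromatic number \chromaticnumber together with an explicit proper \chromaticnumber-colouring assembled from bounded gadgets — we may assume that \A comes with a proper \chromaticnumber-colouring $c$ of its intersection graph. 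I would now define the \sge instance $G_0,\dots,G_{\chromaticnumber}$ on the vertex set $V$ by placing all non-thread edges (the frame and any connectors) into $G_0$ and, for $1\le j\le\chromaticnumber$, letting $G_j$ be the union of the threads $T_p$ with $c(p)=j$. Parallel edges of $G$ are distributed into distinct parts exactly as they receive distinct colours in the thickness argument, so every $G_j$ is a \emph{simple} graph; the $G_j$ partition $E(G)$, hence form an empty sunflower with $k=\chromaticnumberplusone$, and the whole construction is polynomial.

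For correctness, if \A is stretchable then the intended thickness-\chromaticnumber drawing of $G$ exists; it draws $F$ (hence $G_0$) plane, and any two threads of the same colour correspond to disjoint pseudo-segments and therefore do not cross, so this drawing is a simultaneous embedding of $G_0,\dots,G_{\chromaticnumber}$. Conversely, a simultaneous embedding $\varphi$ of $G_0,\dots,G_{\chromaticnumber}$ is in particular a straight-line drawing of $G$ together with the \chromaticnumberplusone-colouring ``$e$ has colour $j$ iff $e\in E(G_j)$'' in which equally-coloured edges never cross. I would then invoke the structural half of the proof of \Cref{thm:para-ER}: since $G_0=F$ is drawn plane, the frame forces every vertex near its intended grid position, each thread is forced (essentially) straight, and the combinatorial intersection pattern of the threads — determined by how threads are routed through the frame, not by the colouring — together with the prescribed clockwise orders reproduces \A; as $c$ is a proper colouring, the requirement that equally-coloured threads be non-crossing is consistent with this pattern, so $\varphi$ witnesses that \A is stretchable. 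Thus the \sge instance is a yes-instance iff \A is stretchable, which is \ER-hard by \Cref{cor:coloring}, answering the question of Schaefer~\cite{S21b}.

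The main obstacle — and the reason the bound is $k=\chromaticnumberplusone$ rather than \chromaticnumber — is the converse direction: in a thickness-\chromaticnumber drawing the colouring is adversarial and may reuse the \chromaticnumber thread colours on the frame, whereas for \sge the partition is part of the input, so the cleanest partition that remains ``as rigid as'' the \geo construction isolates $F$ in its own graph $G_0$, costing one extra part. One must check that this rigid separation does not make the instance easier, i.e. that every simultaneous embedding of the \chromaticnumberplusone graphs still triggers the grid-forcing argument of \Cref{thm:para-ER}; this is immediate, because that argument uses only that $F$ is drawn plane and that equally-coloured threads do not cross — both of which hold for any \sge solution with our partition — and because restricting to a fixed colouring can only shrink, never enlarge, the admissible vertex placements. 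The remaining work is the bookkeeping flagged above: verifying that $F$ together with the connectors is a planar graph, that no thread is split across parts, and that the multiplicities of $G$ are compatible with all $G_j$ being simple.
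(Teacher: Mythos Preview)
There is a genuine gap in your soundness argument, and it stems from a misreading of the \Cref{thm:para-ER} construction. The multiplicities are the reverse of what you assume: the long edges (``threads'') are simple, while the frame edges carry multiplicity $t$ (tunnel boundaries, connectors) or $t{-}1$ (crossing-box edges). So ``placing all non-thread edges into $G_0$'' either leaves $G_0$ a multigraph, or --- if you keep one copy per pair and spread the rest among the $G_j$ --- fails to yield an empty sunflower, since the same vertex pair is then joined in several of the $G_j$.

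More importantly, with each $G_j$ ($j\geq 1$) consisting only of long edges of colour $j$, nothing confines a long edge to its tunnel. In the soundness proof of \Cref{thm:para-ER}, confinement holds because a long edge cannot cross a tunnel boundary: all $t$ colours occur there. In your partition the tunnel boundaries live only in $G_0$, so a long edge in $G_j$ may cross them at will. \Cref{lem:frame} gives the frame a unique \emph{combinatorial} embedding, not fixed vertex coordinates; it does not ``force every vertex near its intended grid position''. Thus one can draw $G_0$ plane in many geometrically different ways, and the straight segments between the long-edge endpoints need not pass through their designated crossing boxes at all. A simultaneous embedding of your $G_0,\dots,G_{\chromaticnumber}$ therefore need not witness stretchability of~\A.

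The paper closes exactly this gap: besides the long edges of colour $i$, each $G_i$ contains a $1$-\emph{subdivision} of almost all of the frame, using private subdivision vertices $x_i(e)$; this keeps the $G_i$ simple and pairwise edge-disjoint (hence an empty sunflower together with the unsubdivided frame~$H$). Crucially, $G_i$ \emph{omits} the subdivisions of precisely the crossing-box edges bounding tunnel segments of colour $i$, so in $G_i$ the entire tunnel of each colour-$i$ long edge is enclosed by edges of $G_i$. This forces each long edge through all its crossing boxes in order and replaces the blocker mechanism, which is dropped in the \sge reduction.
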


The proof can be found in \Cref{sec:SGE}.
In a series of papers 
it was shown that \sge is \ER-complete~\cite{CK15,EBGJPSS08,S21,S21b},
already for $k = 240$.
We contribute to this line of research by lowering the
bound to \chromaticnumberplusone and by restricting it to families of graphs
that form an empty sunflower. 
Lowering the number of graphs from $240$ to \chromaticnumberplusone is not so significant as \chromaticnumberplusone is still relatively high.
The  significance lies in the fact that we show \ER-hardness also for the case that the input graphs form an empty sunflower, which answers a question of Schaefer~\cite{S21b}. %remark 7 in S21b

\paragraph{Relation Between Geometric Thickness and $k$-Planarity}
In a \mbox{\emph{$k$-planar}} drawing of a graph each  edge can be crossed at most $k$ times. 
A $k$-planar graph is a graph admitting such a drawing. 
For the class of $1$-planar graphs, $2$ is a tight upper bound for both the thickness and the geometric thickness. 
The upper bound for the thickness follows immediately from the degree of the intersection graph.
 
The upper bound of $2$ for the geometric thickness is non-trivial since not every 1-planar graph has a straight-line 1-planar drawing and 
was only recently shown~\cite{Brandenburg2021oneplanar}.
This in particular means that the (geometric) thickness of $1$-planar graphs can be decided in polynomial time. 
In contrast, we show at the end of \Cref{sec:geothickness} that determining the geometric thickness of \kplanarnumber-planar multigraphs is \ER-complete. 

\begin{corollary}
        \label{cor:kpla-ER}
   \geo is \ER-complete for $\kplanarnumber$-planar multigraphs. 
\end{corollary}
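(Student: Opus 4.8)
The plan is to observe that the reduction already used to prove \Cref{thm:para-ER} outputs multigraphs that happen to be $\kplanarnumber$-planar, so no new reduction is needed. Membership of \geo in \ER is established together with \Cref{thm:para-ER}, so it suffices to re-examine the hardness direction. Recall that \Cref{thm:para-ER} reduces from \segmentStretchability, which by \Cref{cor:coloring} is \ER-hard even when the input pseudo-segment arrangement $\A$ has chromatic number \chromaticnumber{} and every pseudo-segment crosses at most \themaxdegree{} others; from such an $\A$ the reduction of \Cref{sec:reduction_description} builds a multigraph $G_\A$ by replacing each pseudo-segment with a gadget of size bounded in terms of $t=\chromaticnumber$ and wiring consecutive gadgets and crossing gadgets together. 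The structural statement I would prove is: \emph{for every input arrangement $\A$ (stretchable or not), $G_\A$ admits a topological drawing with at most $\kplanarnumber=\themaxdegree\,(2\chromaticnumber+1)$ crossings per edge}. Combined with correctness of the reduction this immediately yields \ER-hardness on the class of $\kplanarnumber$-planar multigraphs, and with membership, \ER-completeness.

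To prove the structural statement I would draw $G_\A$ \emph{on top of} the pseudo-segment arrangement $\A$ itself. Route the gadget associated with a pseudo-segment $s$ inside a thin tubular neighbourhood $N(s)$ of $s$, so that all its edges run essentially parallel to $s$; perform the wiring between gadgets inside pairwise-disjoint small disks, one around each endpoint of a pseudo-segment and one around each intersection point. Since $N(s)$ is a topological strip, two edges of the same gadget can be drawn so as not to cross one another, so all crossings on an edge $e$ of the gadget of $s$ occur either inside a crossing disk of $s$ with some other pseudo-segment $s'$, or inside an endpoint disk of $s$. In a crossing disk of $s$ and $s'$, the edge $e$ traverses the strip $N(s')$ essentially transversally, hence crosses only edges of $s'$'s gadget that pass through that cross-section, of which there are at most $2\chromaticnumber+1$ by the construction of the gadget. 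As $s$ meets at most \themaxdegree{} other pseudo-segments, $e$ accumulates at most $\themaxdegree\,(2\chromaticnumber+1)=\kplanarnumber$ such crossings; the wiring inside each endpoint disk is planar and can be arranged to contribute no further crossing to any single edge, so the bound $\kplanarnumber$ holds.

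The main obstacle is the careful local analysis, in particular at the crossing gadgets: one must verify that the gadget of a pseudo-segment really can be placed inside an arbitrarily thin tube with only $O(1)$ edges (indeed at most $2\chromaticnumber+1$) crossing any transversal cross-section, and that the auxiliary gadgetry sitting at the intersection points of $\A$ — which is where the reduction encodes its combinatorial constraints — can be absorbed into the small crossing disks without inflating the per-edge crossing count beyond the claimed bound. This amounts to revisiting the construction of \Cref{sec:reduction_description} and tracking, gadget by gadget, how many edges any fixed edge can meet; everything else (membership in \ER, correctness of the reduction) is inherited verbatim from the proof of \Cref{thm:para-ER}.
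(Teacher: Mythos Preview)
Your high-level plan is exactly the paper's: the corollary falls out of the reduction for \Cref{thm:para-ER} because the multigraph $G$ built there already comes with a topological drawing (on top of the input arrangement $\A$) witnessing $\kplanarnumber$-planarity, so no new reduction is needed and membership is inherited.

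Two remarks on the execution. First, the construction you must inspect is the one in \Cref{sec:geothickness}, not \Cref{sec:reduction_description}; the latter is Schaefer's reduction producing the pseudo-segment arrangement~$\A$, not a multigraph. Second, your tubular-neighbourhood accounting is more work than the paper does and hides the actual reason the bound holds. In the drawing of $G$ given by the construction, the entire frame (tunnel boundaries, crossing boxes, connectors) is \emph{planar}; crossings occur only on long edges and blockers. A long edge picks up, at each of its at most $\themaxdegree$ crossing boxes, one crossing with the other long edge, $2(t-1)$ with the two crossing-box multiedges it passes through, and $2$ with the transversal blockers, i.e.\ $2t+1$ in total --- and that two-line count is the paper's entire argument. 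Your cross-section estimate of ``$2t+1$ edges through a tube'' happens to land on the same number, but via a different drawing in which tunnel boundaries would cross each other; there is no need to analyse that drawing when the construction already hands you a better one.
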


Brandenburg~\cite{Brandenburg2021oneplanar} explicitly asked about the geometric thickness of \mbox{$k$-planar} graphs. 
A combination of previous results~\cite{vidapat22} shows that \emph{simple} $k$-planar graphs have bounded geometric thickness:
A queue layout of a graph $G$ is a total vertex ordering and a partition of the edges into \emph{queues} such that no queue contains two nested edges $ad$, $bc$ such that $a < b < c < d$. The \emph{queue number} $\qn(G)$ of $G$ is the minimum number of queues in such a queue layout.
First, geometric thickness is known to be bounded by queue number: any graph $G$ with queue number $\qn(G)$ has geometric thickness $\theta(G) \le 8\qn(G) ^3$~\cite[Corollary 9]{DujmovicPW2004track}.
Second, it was recently shown that any $k$-planar graph $G$ has bounded queue number~\cite{DujmovicJMMUW2020queue,DujmovicMW2023kplanar}, more precisely $\qn(G) \le 2\cdot490^{k+2}$.
Together, this yields:

\begin{corollary}
    Any simple $k$-planar graph has geometric thickness at most \mbox{$64\cdot490^{3(k+2)}$}.
\end{corollary}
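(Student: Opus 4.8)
The plan is to obtain the bound by directly composing the two external results quoted just above the statement. First I would invoke the inequality relating geometric thickness to queue number: by \cite[Corollary 9]{DujmovicPW2004track}, every graph $G$ satisfies $\theta(G) \le 8\,\qn(G)^3$. Since this holds for arbitrary graphs, no extra work is needed on this side of the argument.

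Next I would apply the queue-number bound for $k$-planar graphs. By \cite{DujmovicJMMUW2020queue,DujmovicMW2023kplanar}, any ($\mathit{simple}$) $k$-planar graph $G$ has $\qn(G) \le 2\cdot 490^{k+2}$. Substituting this into the previous inequality yields
\begin{equation*}
    \theta(G) \;\le\; 8\,\qn(G)^3 \;\le\; 8\bigl(2\cdot 490^{k+2}\bigr)^3 \;=\; 8\cdot 8\cdot 490^{3(k+2)} \;=\; 64\cdot 490^{3(k+2)},
\end{equation*}
which is exactly the claimed bound. The only point that needs care is bookkeeping: one must make sure the simplicity hypothesis is carried through, since the queue-number bound is stated for simple $k$-planar graphs, and this is precisely why the corollary is phrased for simple graphs rather than multigraphs.

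I do not expect a genuine obstacle here; the statement is a composition of two black boxes, and the ``hard part'' is only cosmetic, namely choosing how much of the constant-chasing to display. I would present the single displayed chain of inequalities above and leave it at that, possibly remarking that any improvement in either of the two input bounds (the cubic dependence on queue number, or the singly-exponential dependence of queue number on $k$) immediately improves this corollary.
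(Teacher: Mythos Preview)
Your proposal is correct and matches the paper's approach exactly: the corollary is stated immediately after the two cited bounds with the phrase ``Together, this yields,'' and the proof is precisely the substitution $8\cdot(2\cdot 490^{k+2})^3 = 64\cdot 490^{3(k+2)}$ that you wrote out. There is nothing to add.
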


%%%%%%%%%%%%%%%%%%%%%%%%%%%%%%%%%%%%%
\subsection{Existential Theory of the Reals}
\label{sub:etr}
%%%%%%%%%%%%%%%%%%%%%%%%%%%%%%%%%%%%%

The complexity class \ER (pronounced as ``ER'') has gained a lot of interest in recent years.
It is defined via its canonical complete problem ETR (short for \emph{Existential Theory of the Reals}) and contains all problems that are polynomial-time many-one reducible to it.
In an ETR instance we are given an integer~$n$ and a sentence of the form
\[
    \exists X_1, \ldots, X_n \in \R :
    \varphi(X_1, \ldots, X_n),
\]
where~$\varphi$ is a well-formed and quantifier-free formula consisting of polynomial equations and inequalities in the variables and the logical operators $\{\land, \lor, \lnot\}$.
The goal is to decide whether this sentence is true.
As an example consider the formula $\varphi(X,Y) :\equiv X^2 + Y^2 \leq 1 \land Y^2 \geq 2X^2 - 1$.
Among (infinitely many) other solutions, $\varphi(0,0)$ evaluates to true, witnessing that this is a yes-instance of ETR.
We find it worth noting that one can define \ER with a real witness and a real verification algorithm, similar to the way that NP is defined. 
The difference is that the witness is allowed to have real number input and the verification runs on the real RAM instead of the word RAM~\cite{EvdHM20}. %
It is known that
\[
    \textrm{NP} \subseteq \ER \subseteq \textrm{PSPACE}
    \text{.}
\]
Here the first inclusion follows because a SAT instance can trivially be written as an equivalent ETR instance (in particular, note that  $X^2=X$ has real solutions $X=0$ and $X=1$).
The second inclusion is highly non-trivial and was first shown by Canny~\cite{Canny1988_PSPACE}.

Note that the complexity of problems involving real numbers was studied in various contexts.
To avoid confusion, let us make some remarks on the underlying machine model.
As already noted above, the underlying machine model for \ER (over which sentences need to be decided and where reductions are performed in) is the word RAM (or equivalently, a Turing machine) 
and not the real RAM~\cite{EvdHM20} or the Blum-Shub-Smale model~\cite{BSS89}.

The complexity class \ER gains its importance by numerous important algorithmic problems that have been shown to be complete for this class in recent years.
The name \ER was introduced by Schaefer in~\cite{S10} who also pointed out that several \NP-hardness reductions from the literature actually implied \ER-hardness.
For this reason, several important \ER-completeness results were obtained before the need for a dedicated complexity class became apparent.

Common features of \ER-complete problems are their continuous solution space and the nonlinear relations between their variables.
Important \ER-completeness results include the realizability of abstract order types~\cite{M88,S91} and geometric linkages~\cite{S13}, as well as the recognition of geometric segment~\cite{KM94,M14}, unit-disk~\cite{McDM10},    and ray intersection graphs~\cite{CFMTV18}.
More results appeared in the graph drawing community~\cite{S21,DKMR18,E19,LM20}, regarding polytopes~\cite{AKM23,APT15,DHM19,RG96,RGZ95,V23}.
Furthermore, other important areas are the study of Nash-equilibria~\cite{AOB23,BHM23,B16,CIJ14,EY08,H19,HS20,SS17}, machine learning~\cite{AKM21,BHJMW22,B22b,MII22} matrix factorization~\cite{L09,SS18,S16,shitov2017complexity,S23PSD}, or continuous constraint satisfaction problems~\cite{MS22}.
In computational geometry, we would like to mention the art gallery problem~\cite{AAM18,S22}, geometric packing~\cite{AMS20} and covering polygons with convex polygons~\cite{A22}.

Recently, the community started to pay more attention to higher levels of the ``real polynomial hierarchy'', which also capture several interesting algorithmic problems~\cite{BH21,BC09,DCLNOW21,DKMR18,JKM22,SS22}.

\subsection{\ER-hardness of \stretchability}\label{sec:reduction_description}
In this section, we review Schaefer's proof of \ER-hardness of \stretchability~\cite{S21b}. In \cref{sec:chromaticnumber}, we will show that a modified version of the pseudo-segment arrangement constructed by Schaefer can be colored with at most $30$ colors which gives rise to the values of $t$ and $k$ in \cref{thm:para-ER} and \cref{thm:SGE}, respectively. 
To understand where our modifications happen and why they do not impact the correctness of the reduction, we consider it necessary to recall how Schaefer's reduction works in detail.

In~\cite{S21b}, Schaefer reduces the so-called \textsc{Strict Inequality} problem to \stretchability. In \textsc{Strict Inequality}, the input, also called \emph{program}, is a set of strict inqualities of polynomials in the variables $X_1, \ldots, X_n$ and the problem is to decide whether there are real values for    $X_1, \ldots, X_n$ such that all inequalities are fulfilled. \textsc{Strict Inequality} has  been shown to be \ER-hard; see e.g. \cite{SS17}. 
In fact, it is sufficient to consider only restricted inputs for \textsc{Strict Inequality}. Richter-Gebert~\cite{Richter1995} showed that for any program $P$ for \textsc{Strict Inequality} there exists a program $P^*$ in \emph{Richter-Gebert normal form} (\textrm{RG-NF}) for \textsc{Strict Inequality} such that $P^*$ is a positive instance if and only if $P$ is a positive instance.

Aside from the to-be-determined variables $X_1, \ldots, X_n$, a program in Richter-Gebert normal form (\textrm{RG-NF}) uses a set of \emph{computed variables} ($V_1,\ldots,V_m)$ which are used to store computed results of the  polynomials that are to be compared. Moreover, there are only six types of polynomials, namely:
\begin{enumerate}
    \item\label{type:unit} $V_i=1$,
    \item\label{type:equality} $V_i=X_j$ with $X_j > 1$,
    \item\label{type:negation} $V_i=-V_j$ with $i > j$ and $V_j > 1$ or $V_j=1$,
    \item\label{type:negated-addition} $V_i=-V_j+V_k$ with $i > j,k$, $V_j < 0$ and $V_k > 1$ or $V_k=1$,
    \item\label{type:inversion} $V_i=(1/V_j)$ with $i > j$ and $V_j > 1$, and
    \item\label{type:invertedMultiplication} $V_i=(1/V_j)\cdot V_k$ with $i > j,k$, $0 < V_j < 1$ and $V_k > 1$.
\end{enumerate}
Note that the types of polynomials implicitly provide some strict inequalities of type $V_i < V_j$ for $1 \leq i < j \leq m$ (e.g. the negated addition, i.e., Type \ref{type:negated-addition}, requires $V_i$ to be greater than both $V_j$ and $V_k$). In addition, there may be extra constraints of that type.

In the following we describe how Schaefer constructs an instance of \stretchability that is positive if and only if a given program in \textrm{RG-NF} is positive.

\subparagraph{Step 1: The Dependence Graph and Its Embedding} 
Based on the \textrm{GR-NF} program, Schaefer builds the \emph{dependence graph} $G$ which contains a vertex for  every variable $X_1,\ldots,X_n,V_1,\ldots,V_m$ and a special vertex $s$ representing the unit value $1$. In $G$, there is an edge to the computed variable $V_i$ from each variable occurring in its computation as well as from $s$ (the connection from $s$ will be used to ensure that the unit value is represented the same way in all variables); see Fig.~\ref{fig:graph_degree_a}. Moreover, there is an edge from $V_i$ to $V_j$ if $V_i < V_j$ is required. Graph $G$ is used in the reduction to compute positions for the gadgets actually computing $V_1,\ldots,V_m$ as well as fixing how information is routed between gadgets. To this end, Schaefer first computes a drawing of $G$ that fits in a polynomial sized grid. This is an important step in the reduction as it ensures that $G$ admits a easy to compute drawing, i.e., a positive instance will admit a certificate segment arrangement.

\begin{figure}
    \centering
    \includegraphics[page=1]{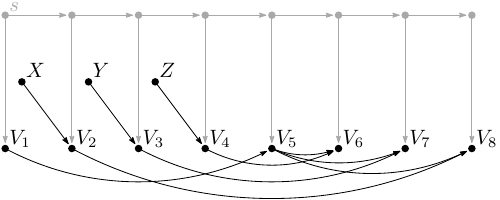}
    \caption{
    An example graph $G$ (example from \cite{S21b}) for a program in Richter-Gebert normal form, where $V_1 = 1$, $V_2=X$, $V_3=Y$, $V_4=Z$, $V_5=-V_1$, $V_6 = -V_5+V_4$, $V_7 = -V_5+V_3$ and $V_8 = -V_5+V_2$, resulting in a degree 5 vertex labeled $V_5$.
    }
    \label{fig:graph_degree_a}
\end{figure}

More precisely, Schaefer first computes a $1$-bend orthogonal drawing of $G$ using the following result by Biedl and Kaufmann:

\begin{theorem}[Biedl,Kaufmann~\cite{DBLP:conf/esa/BiedlK97}]\label{thm:biedl}
    Every connected graph on $n$ vertices and $m$ edges has a $1$-bend orthogonal drawing on an $\frac{n+m}{2} \times \frac{n+m}{2}$ grid. Each vertex is represented by a rectangle of perimeter at most twice its degree.
\end{theorem}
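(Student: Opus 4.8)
The plan is to use the standard ``diagonal staircase'' construction for orthogonal drawings in which vertices are drawn as axis-parallel boxes; what follows is essentially the argument of Biedl and Kaufmann. First I would fix a linear order $v_1,\dots,v_n$ of the vertices (some care in the choice of order is needed for the constants) and draw the box $B_i$ of $v_i$ so that, for $i<j$, $B_i$ lies entirely above and to the left of $B_j$; consecutive boxes are placed back to back along the diagonal, so that the grid width equals $\sum_i w_i$ and the grid height equals $\sum_i h_i$, where $w_i$ and $h_i$ are the side lengths of $B_i$ (degenerate ``rectangles'', i.e.\ segments, are allowed, since it is the perimeter of each box and not its area that is bounded).

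Each edge $\{v_i,v_j\}$ with $i<j$ is then drawn with a single bend in the empty ``upper-right'' region between $B_i$ and $B_j$, in one of two shapes: \emph{type~1} leaves $B_i$ on its right side, runs rightward along a row that belongs to $B_i$, turns downward along a column that belongs to $B_j$, and enters $B_j$ from the top; \emph{type~2} is the mirror image, leaving $B_i$ on its bottom side and entering $B_j$ from the left. A short case check shows that in either case the $L$-shape meets only the two boxes it joins (every other box lies in a strictly smaller or strictly larger row/column range), so no edge ever runs through a vertex box; two edges may cross transversally, which is permitted since the drawing need not be planar. Ordering the ports along each side of a box by the index of the far endpoint makes edges sharing a side pairwise non-overlapping (they only cross). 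The crucial observation for the grid size is that the horizontal part of every edge lies in a row already occupied by one of its endpoint boxes and its vertical part in a column already occupied by the other, so edge routing needs \emph{no} new rows or columns: it suffices that $w_i \ge \max\{1,\#\text{top-ports}_i,\#\text{bottom-ports}_i\}$ and $h_i \ge \max\{1,\#\text{left-ports}_i,\#\text{right-ports}_i\}$.

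It remains to choose, for each edge, whether it is type~1 or type~2, so that the resulting boxes are small. Each edge contributes exactly one top-or-bottom port (at one of its two endpoints) and exactly one left-or-right port, so $\sum_i(\#\text{top}_i+\#\text{bottom}_i)=m=\sum_i(\#\text{left}_i+\#\text{right}_i)$ regardless of the choices; the point is to choose the types so that at every vertex the top and bottom counts are nearly balanced and the left and right counts are nearly balanced — a pairing/Eulerian-type argument, carried out after the order has designated one endpoint of each edge as the ``upper'' one, produces such an assignment — and then to balance $w_i$ against $h_i$ (never pushing all ports of a vertex onto a single side). Drawing the remaining low-degree vertices as unit segments, roughly half of them horizontal and half vertical, then yields $\sum_i w_i,\ \sum_i h_i\le \tfrac{n+m}{2}$, where connectivity ($m\ge n-1$) is used to absorb the lower-order terms; the same bookkeeping gives $w_i+h_i\le d(v_i)$, i.e.\ perimeter at most $2d(v_i)$, for every vertex.

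I expect the main obstacle to be exactly this final accounting: one must simultaneously (i) pick the type of every edge so that its two endpoints, which see the edge on \emph{coupled} sides, both end up with balanced side-counts, and (ii) distribute the unavoidable $\pm 1$ imbalances at small-degree vertices together with the choice of box orientations so that each of the two one-dimensional sums stays below $\tfrac{n+m}{2}$ rather than roughly $n+m$. By contrast, the geometric part — that the drawing is orthogonal, that every edge has exactly one bend, that no edge passes through a box, and that edges on a common side do not overlap — is routine once the staircase layout and the cyclic port order are fixed.
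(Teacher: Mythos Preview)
The paper does not prove this theorem at all: it is quoted verbatim as a result of Biedl and Kaufmann and used as a black box in Step~1 of Schaefer's reduction (to obtain a $1$-bend orthogonal drawing of the dependence graph before applying \Cref{thm:brass}). There is therefore no ``paper's own proof'' to compare against; any proof you supply goes strictly beyond what the paper does.

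That said, your sketch is a faithful outline of the original Biedl--Kaufmann argument: the diagonal staircase of boxes, the two $L$-shaped edge types living entirely in rows and columns already owned by their endpoints, and the observation that this makes the grid dimensions equal to $\sum_i w_i$ and $\sum_i h_i$. You are also right to flag the final accounting as the only non-routine part. In the original paper the vertex order is not arbitrary --- Biedl and Kaufmann use an $st$-ordering (after adding a dummy edge if needed) so that every vertex other than the first and last has at least one neighbour before it and one after it; this is what lets them send roughly half the incident edges ``up-left'' and half ``down-right'' at every vertex and obtain $w_i+h_i\le d(v_i)$ together with $\sum w_i,\sum h_i\le (n+m)/2$. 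Your ``pairing/Eulerian-type argument'' gestures at the right idea but would need this ordering ingredient to actually close; with an arbitrary order a vertex may have all neighbours on one side and the balancing fails. If you want to turn the sketch into a proof, inserting the $st$-ordering and then doing the side-by-side port counting is the missing step.
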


It is worth remarking that orthogonality is only required so to satisfy the constraints for the next step of the drawing procedure. Namely, a result by Brass et al. is applied to guarantee that no three vertices are collinear and that no two vertices use the same $x$- or $y$-coordinate:

\begin{theorem}[Brass, et al.~\cite{DBLP:journals/comgeo/BrassCDEEIKLM07}]\label{thm:brass}
    Suppose $H$ has a planar straight-line drawing in an $O(k)\times O(k)$ grid which has unit edge resolution.
    Then there is a straight-line drawing of $H$ in an $O(k^2)\times O(k^2)$ grid in which no three vertices of $H$ are collinear and no two vertices of $H$ lie on the same grid-line. The new drawing of $H$ can be found in polynomial time, and does not depend on the edges in $H$, just the position of its vertices.
\end{theorem}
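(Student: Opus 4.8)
The plan is to build the required drawing from the given one by a transformation that reads off only the \emph{positions} of the vertices and proceeds in two stages: first a linear blow-up of the grid that turns the drawing into a robustly plane one, and then a planarity-preserving perturbation that pushes the vertices into general position. Crucially, both stages are oblivious to the edge set of $H$; they only rearrange the multiset of vertex coordinates, so polynomiality and the independence-of-edges claim come for free, and it remains to argue that planarity and the general-position properties are achieved within a grid of side $O(k^2)$.

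First I would rescale: multiply every coordinate of the given drawing by a factor $S=\Theta(k)$. Since the original drawing has unit edge resolution, after rescaling any two disjoint features (vertex/vertex, vertex/edge, edge/edge) are at distance at least $S$. I claim that any subsequent displacement in which every vertex moves by less than $S/4$ keeps the drawing plane and preserves its rotation system: a non-incident vertex $w$ of an edge $uv$ starts at distance $\ge S$ from the segment $uv$ while $u,v,w$ each move by less than $S/4$, so $w$ never meets the moved segment; two disjoint edges are handled the same way through their endpoints; and two edges sharing a vertex only ever meet near that vertex, which a small move cannot change. Hence it suffices to relocate each vertex to an integer point inside the axis-aligned box $B_v$ of side $\Theta(k)$ centred at its rescaled position so that the resulting point set is in general position (pairwise distinct $x$-coordinates, pairwise distinct $y$-coordinates, no three points collinear). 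The new drawing then lies in a grid of side $O(S\cdot k)=O(k^2)$.

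Inside the boxes, distinct coordinates are cheap: the original $k\times k$ grid has only $O(k)$ rows, so at most $O(k)$ vertices share an original column, their boxes $B_v$ share a common $x$-range of length $\Theta(k)$, which contains more than enough integer values to assign them pairwise distinct $x$-coordinates (symmetrically for $y$), and vertices from different original columns are separated automatically because $S$ exceeds the box width. To destroy collinearities I would use a parabolic shear: for a point set with pairwise distinct $x$-coordinates, the map $(x,y)\mapsto(x,\,y+\varepsilon x^2)$ annihilates every collinear triple, because three image points $(a,\alpha+\varepsilon a^2),(b,\beta+\varepsilon b^2),(c,\gamma+\varepsilon c^2)$ are collinear iff
\[
\det\begin{pmatrix} a & \alpha & 1\\ b & \beta & 1\\ c & \gamma & 1\end{pmatrix}
+\varepsilon\det\begin{pmatrix} a & a^{2} & 1\\ b & b^{2} & 1\\ c & c^{2} & 1\end{pmatrix}=0,
\]
and the second (Vandermonde) determinant equals $(b-a)(c-a)(c-b)\neq 0$, so for small $\varepsilon$ the sum is nonzero. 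A parabolic shear is a homeomorphism of the plane, hence keeps the drawing plane; choosing $\varepsilon$ polynomially small keeps every vertex well inside its box $B_v$; finally one snaps the sheared points to the nearest integer grid point, which for a sufficiently fine preliminary subdivision flips no determinant sign and no coordinate comparison. (One first applies an analogous, even smaller shear $(x,y)\mapsto(x+\varepsilon' y^2,y)$ to separate the $x$-coordinates of vertices sharing an original column, using that their $y$-coordinates already differ and that after a preliminary integer scaling of the $x$-axis distinct original columns occupy disjoint unit intervals, and then checks that the order of the two shears cannot re-create a coincidence.)

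The main obstacle is the bookkeeping that keeps the final grid inside $O(k^2)\times O(k^2)$: the blow-up factor $S$, the shear parameters $\varepsilon,\varepsilon'$, and the refinement used to legalise the sheared coordinates each enlarge the coordinates, and they must be chosen so that the smallest nonzero value attained by any of the $O(|V(H)|^3)$ relevant determinants and by any coordinate gap stays above the total rounding error, while the product of all the blow-up factors remains $O(k)$. This is exactly where the unit-edge-resolution hypothesis is indispensable: it is what guarantees that a merely linear blow-up already makes the drawing robust enough to absorb a perturbation of the size we can afford. Everything else — planarity preservation under small moves, the Vandermonde computation, distinctness of coordinates, polynomial running time, and independence of the edge set — is straightforward.
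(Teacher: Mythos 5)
First, a point of reference: the paper does not prove this statement at all --- it is imported verbatim as a result of Brass et al.~\cite{DBLP:journals/comgeo/BrassCDEEIKLM07} and used as a black box --- so there is no in-paper proof to compare yours against, and I can only assess your argument on its own terms. Your first two stages are fine, granted the reading of ``unit edge resolution'' as ``every vertex is at distance at least $1$ from every non-incident edge'' (the surrounding text in \cref{sec:chromaticnumber}, which pays an extra factor of $n$ in grid size precisely to obtain unit edge resolution, supports this reading; note that this hypothesis is carrying your entire stage one, since a bare $k\times k$ grid drawing only guarantees vertex--edge distance $\Omega(1/k)$). The Vandermonde computation is also correct, and the independence from the edge set and the polynomial running time would indeed be automatic.

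The genuine gap is the last step: ``one snaps the sheared points to the nearest integer grid point, which for a sufficiently fine preliminary subdivision flips no determinant sign.'' This cannot be made to work inside an $O(k^2)\times O(k^2)$ grid. Your shear must keep every vertex inside its box of side $\Theta(k)$ while the $x$-coordinates range up to $\Theta(k^2)$, which forces $\varepsilon=O(k^{-3})$. Now take a triple that is still collinear after the integer relocation and that lies near the left edge of the grid, say with $x$-coordinates $O(k)$: each of its shear displacements is $\varepsilon x^2=O(1/k)<1/2$, so snapping to the nearest grid point returns all three vertices to exactly their pre-shear positions and the triple is still collinear. More generally, in any integer drawing a non-collinear triple has collinearity determinant of absolute value at least $1$, whereas the shear produces determinants as small as $2\varepsilon=O(k^{-3})$ (for a formerly collinear triple with consecutive $x$-values); the rounding step must therefore move the determinant by $\Omega(1)$, i.e.\ it is not a small perturbation at the level of the quantities you need to keep nonzero. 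A ``finer preliminary subdivision'' does not rescue this: any refinement factor $R$ multiplies the final grid size, so the $O(k^2)$ budget allows only $R=O(1)$, and the mismatch between $\varepsilon\cdot R^{3}$ and the $\Omega(R)$ rounding error persists (one can check that $R=\Omega(k^3)$ would be needed). The conclusion is that the continuous-shear-then-round strategy cannot produce the claimed \emph{integer} drawing; the integer offsets achieving general position have to be constructed directly by a discrete argument, and that construction is the actual content of the cited lemma rather than the ``straightforward bookkeeping'' your final paragraph defers it to.
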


For our adjustments in \cref{sec:chromaticnumber}, we point out the following two observations that are immediate from Schaefer's proof.

\begin{observation}\label{obs:subdivision}
    If a vertex in $G$ has high outdegree, its outgoing edges can be replaced by a binary tree along which information can be transmitted (see e.g. vertex $s$ in \cref{fig:graph_degree_a}). The same is true for high indegree.
\end{observation}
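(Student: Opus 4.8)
The plan is to unpack how Schaefer's reduction realizes the edges of the dependence graph $G$. An edge $uv$ of $G$ is not a single pseudo-segment; instead, Schaefer transmits the value stored at $u$ to the gadget computing $v$ by a short chain of \emph{transport} gadgets, each of which takes a sub-arrangement encoding a real value and reproduces an isomorphic sub-arrangement encoding the same value a bounded distance further along. The correctness of the whole reduction only uses that such a chain forces the values at its two ends to agree, so we are free to reshape the routing as long as we preserve this property.

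First I would isolate the \emph{forking} gadget: a transport gadget that produces two equal copies of its input value (choosing, where needed, the variant compatible with the sign and magnitude constraints that \textrm{RG-NF} guarantees for the value at $u$). This is a standard von Staudt-style building block and is exactly what is used at the vertex $s$ in \cref{fig:graph_degree_a} to feed the unit value to many computed variables. Given a vertex $v$ of outdegree $d$, I would attach a single wire to $v$, feed it into a complete binary tree of forking gadgets with $d$ leaves, and let the $i$-th leaf play the role of the tail of the $i$-th original out-edge of $v$. By induction on the depth $\lceil \log_2 d \rceil$ of the tree, every leaf carries the value stored at $v$, so the modified stretchability instance is positive if and only if the original one is.

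It then remains to check the bookkeeping. A binary tree with $d$ leaves has at most $2d-1$ internal nodes, each of degree at most $3$, so performing this replacement at every high-degree vertex of $G$ increases $|V|$ and $|E|$ by only a constant factor and keeps the maximum degree bounded; hence the polynomial grid bounds from \cref{thm:biedl} and \cref{thm:brass} still apply and the drawing of the enlarged graph is still computable in polynomial time. The claim for high indegree is symmetric: reversing edge orientations turns fan-in into fan-out, and a ``merge'' equality constraint is the same constraint as a ``copy'' one read in the opposite direction, so the same forking gadget serves.

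The one point that is not merely read off from Schaefer's text is that inserting the tree locally at $v$ must not create unintended intersections, either among the branches or between the tree and the rest of the arrangement. This is dealt with exactly as Schaefer deals with the placement of all other gadgets: the tree is a planar structure replacing a single vertex, so it fits inside a small disk around the former position of $v$, with its leaves ordered to match the rotation system at $v$, and the non-crossing drawing of $G$ leaves enough room. (For the chromatic-number refinement of \cref{sec:chromaticnumber} one must additionally verify that these tree gadgets do not push the chromatic number of the intersection graph above \chromaticnumber, but that check belongs to that section.)
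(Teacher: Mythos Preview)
Your proposal is correct, but it works harder than necessary and at a different level of abstraction than the paper. The paper does not prove this observation at all: it is stated as ``immediate from Schaefer's proof,'' and the later \cref{lem:degree_replacement} gives only a two-line justification. The point is that the replacement happens at the level of the abstract dependence graph $G$, \emph{before} any pseudo-segments are drawn: one simply declares new degree-$3$ vertices forming a binary tree in $G$, and then Schaefer's existing machinery (which already places a transmission gadget along every edge of $G$) does the rest automatically. There is no need to isolate a separate ``forking gadget'' or argue about local non-crossing of the tree inside a disk, because the tree is just part of the new $G$ and is handled by \cref{thm:biedl} and \cref{thm:brass} along with everything else.

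Your gadget-level argument is a valid alternative viewpoint and would also establish the observation, but it front-loads verification (correctness of forking, avoiding unintended intersections, chromatic-number impact) that the paper defers to the generic gadget-placement step. The payoff of the paper's graph-level phrasing is precisely that all such checks are done once, uniformly, for whatever $G$ is handed to Step~2.
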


\begin{observation}\label{obs:no-collinear}
Any embedding of dependence graph $G$ that fits in a polynomial sized grid such that no two vertices lie on the same grid-line and such that no three vertices are collinear, is a valid output of Step 1.     
\end{observation}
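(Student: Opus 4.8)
The plan is to prove \cref{obs:no-collinear} by auditing Schaefer's reduction and checking that the drawing produced in Step~1 is used \emph{only} through the three listed properties, so that any drawing having them can be substituted. Concretely, Step~1 as recalled above is a two-stage pipeline (a $1$-bend orthogonal drawing via \cref{thm:biedl}, then the perturbation of \cref{thm:brass}), but the guarantee this pipeline hands to the rest of the construction is captured entirely by: (i) the drawing lies in a grid of polynomially bounded size; (ii) no two vertices share an $x$- or a $y$-coordinate; and (iii) no three vertices are collinear. The remark after \cref{thm:brass} already flags that orthogonality per se is not needed downstream --- it is only there to meet the unit-edge-resolution hypothesis of \cref{thm:brass} --- which is exactly the kind of statement we need to make precise for all relevant properties.

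First I would go through the construction of the gadgets computing $V_1,\dots,V_m$ and the routing of the information-carrying pseudo-segments along the edges of $G$, recording at each point what it demands of the underlying drawing. The polynomial bound (i) is used in two places: it makes the whole \segmentStretchability instance of polynomial size, so the reduction runs in polynomial time, and it guarantees that a yes-instance of \textsc{Strict Inequality} yields a \emph{polynomially large} certificate segment arrangement --- the point emphasised in the description of Step~1. Property (ii) gives each gadget a private horizontal and vertical slab, so that gadgets can be placed at the vertex positions and the axis-parallel portions of the wiring do not run into one another. Property (iii) ensures that a pseudo-segment routed along an edge $uv$ can be kept in a thin neighbourhood of the segment $uv$ without entering the gadget placed at any third vertex $w$, and that no unintended collinearity --- hence no forced extra incidence --- appears in the arrangement. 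None of these steps inspects how the drawing was obtained; they only read off the coordinates.

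Next I would invoke \cref{obs:subdivision}: wherever the argument appeals to the vertex-rectangle-versus-degree bound of \cref{thm:biedl}, in order to fit a bounded-complexity gadget at a vertex, one may instead assume every vertex of $G$ has bounded degree after replacing high-in/outdegree vertices by binary trees, at the cost of only a polynomial blow-up that is absorbed by (i). Hence the degree bookkeeping in \cref{thm:biedl} is likewise not an essential feature of the Step~1 output. Putting these together, any drawing of (the possibly subdivided) dependence graph $G$ satisfying (i)--(iii) can play the role of the Step~1 drawing, and every later step of Schaefer's reduction --- and its correctness proof --- goes through verbatim.

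The main obstacle is the auditing itself: one must confirm that \emph{no} property beyond (i)--(iii) is silently used --- in particular that neither the orthogonality of the Step-1 drawing nor the precise grid dimensions, nor any quantitative relation between edge lengths and the bounding box, is needed once the gadgets have been placed. Since \cref{thm:brass} is itself stated as outputting a drawing characterised only by (i)--(iii) (it even notes that the output ``does not depend on the edges in $H$, just the position of its vertices''), and Schaefer's construction consumes its output as a black box, this check goes through, and the observation follows.
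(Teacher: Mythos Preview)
Your proposal is correct and matches the paper's approach: the paper states \cref{obs:no-collinear} as ``immediate from Schaefer's proof'' without giving a separate argument, and your auditing of how the Step~1 output is consumed downstream is exactly the justification the paper is implicitly invoking. Your write-up is in fact more detailed than what the paper provides, but the underlying idea --- that only the three listed properties of the vertex positions are read by the gadget-placement and transmission-routing in Step~2, and hence any drawing with these properties can be substituted --- is the same.
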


\subparagraph{Step 2: Von Staudt Gadgets and Transmission-Gadgets}
In the next step, the vertices and edges of $G$ are replaced by suitable gadgets. For the vertices, Schaefer uses a well established method to represent the instructions and conditions with so-called \emph{von Staudt gadgets}; see \cite{Richter1995}.
A von Staudt gadget consists of a baseline $\ell$, which contains a number of labeled points, which include three points labeled $\infty$, $0$ and $1$ as well as points representing the values of variables involved in an instruction; \cref{fig:gad_intro}.
The value of a variable is determined by its relative placement and distance to the three points $\infty$, $0$ and $1$.
Additionally a number of pseudo-segments are added, whose intersection pattern guarantees that in a stretched version of the gadget, i.e., one where all pseudo-segments have become straight-line segments, the position of the point representing the variable whose value is computed with the instruction is determined by the positions of the points for the variables which are involved in this computation.
Depending on the specific instruction that is being represented, the exact structure of the arrangement differs. However, there are essentially only two different types of gadgets: one encoding the (negated) addition (Type \ref{type:negated-addition}; see \cref{fig:gad_intro:add}) and (inverted) multiplication (Type \ref{type:invertedMultiplication}; see \ref{fig:gad_intro:mult}). The gadgets for other types of constraints are obtained by slight modifications: Assignment (Types \ref{type:unit} and \ref{type:equality}) simply use line $\ell$ where for the unit assignment (\ref{type:unit}), we additionally have $1=x$ on $\ell$.  Negation is obtained from the (negated) addition gadget by setting $y=0$ and inversion is  obtained from the (inverted) multiplication gadget by setting $y=1$. Finally, the condition $x_1 < x_2$ is checked with a (negated) addition gadget with $y=0$ and $-x=-x_1$ and $x+y=x_2$. 

\begin{figure}
    \centering
    \begin{subfigure}[t]{.4\linewidth}
        \centering
        \includegraphics[page=1]{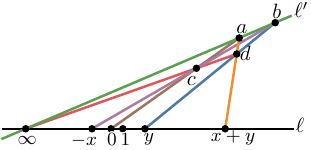}
        \subcaption{}
        \label{fig:gad_intro:add}
    \end{subfigure}
    \quad
    \begin{subfigure}[t]{.4\linewidth}
        \centering
        \includegraphics[page=2]{figures/gadgets-intro.pdf}
        \subcaption{}
        \label{fig:gad_intro:mult}
    \end{subfigure}
    \caption{Von Staudt gadgets: (a) (Negated) Addition and (b) (Inverted) Multiplication.}
    \label{fig:gad_intro}
\end{figure}

A variable appears in multiple gadgets, namely the one of the instruction determining its value, all following instructions, where its value is used to compute a later computed variable and all conditions involving this variable. To guarantee the same value of a variable in all its occurrences, Schaefer uses \emph{transmission gadgets} that are routed along the edges of $G$. The transmission gadgets shown in \cref{fig:transmission-intro} effectively copies the relative position of three points from one straight-line segment $\ell$ to another one $\ell'$ while at the same time negating the encoded information. By using an even number of transmission gadgets, the negation of the transmitted information can be ignored. Moreover, the gadget is stretchable, i.e., the relative position and rotations of $\ell$ and $\ell'$ are independent as long as the topology is maintained.This allows us to route it along the edges of $G$.

\begin{figure}
    \centering
    \includegraphics[page=1]{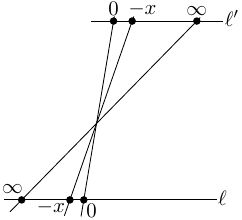}
    \caption{
    A transmission gadget transmitting variable $x$.
    % The three outgoing edges of $V_5$ (b) are instead replaced by a binary tree (c), s.t., no vertex has degree larger than 4.
    }
    \label{fig:transmission-intro}
\end{figure}

\subparagraph{Step 3: Making the Arrangement Constructible}
So far, more than two pseudo-segments may be required to cross at the same point; see \cref{fig:gad_intro,fig:transmission-intro}. Schaefer discusses how to avoid this circumstance to prove that realizability of a partial abstract order is \ER-complete. Conveniently, we also require this property for our construction in \cref{sec:geothickness}. Schaefer applies here a replacement technique by Las Vergnas~\cite{Vergnas86}. Namely, if a pseudo-segment $u$ crosses with two pairwise crossing pseudo-segments in the same point, it is replaced by two pairwise crossing segment $u_1$ and $u_2$ as shown in \cref{fig:replacement_intro:two}. Moreover, if $u$ contains the crossings points of at least two pairs of such pseudo-segments, it is instead replaced by four segments $u_1$, $u_2$, $u_3$ and $u_4$ as shown in \cref{fig:replacement_intro:four}. 
\begin{figure}
    \centering
    \begin{subfigure}[t]{1\linewidth}
        \centering
        \includegraphics[page=1]{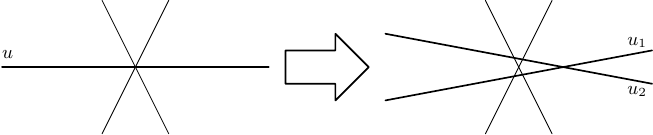}
        \subcaption{}
        \label{fig:replacement_intro:two}
    \end{subfigure}
    \quad
    \begin{subfigure}[t]{1\linewidth}
        \centering
        \includegraphics[page=2]{figures/lasvergnas-intro.pdf}
        \subcaption{}
        \label{fig:replacement_intro:four}
    \end{subfigure}
    \caption{Making a pseudo-segment arrangement uniform.}
    \label{fig:replacement_intro}
\end{figure}
Schaefer discusses that applying this transformation is sufficient to make the pseudo-line arrangement constructible (i.e., no three segments cross at the same point) which concludes the reduction from \textsc{Strict Inequality} (restricted to \textrm{RG-NF} inputs) to \stretchability establishing \ER-hardness of the latter problem.

\section{Geometric Thickness is \texorpdfstring{$\exists \mathbb{R}$}{ER}-Complete}
\label{sec:geothickness}

In this section, we first show \ER-membership.
Then we show \ER-hardness, splitting the proof into construction, completeness, and soundness.

\paragraph{\ER-membership}
To prove that \geo is in \ER, we follow the characterization provided by Erickson, Hoog, and Miltzow~\cite{EvdHM20}. According to this characterization, a problem \( P \) is in \ER if and only if there exists a real verification algorithm \( A \) for \( P \) that runs in polynomial time on the real RAM. Moreover, for every yes-instance \( I \) of \( P \), there should be a polynomial-size witness \( w \) for which \( A(I, w) \) returns ``yes''. For every no-instance \( I \) and any witness \( w \), \( A(I, w) \) should return ``no''.

We now describe a real verification algorithm \( A \) for \geo. Given an instance \( I = (G,t) \), % and a witness \( w \), 
the witness \( w \) consists of the coordinates of the vertices of $G$ and a \( t \)-coloring of the edges $\chi:E \rightarrow {[t]}$.

 \( A \) then verifies that there are no monochromatic crossings in the induced drawing by examining every pair of edges \( e \) and \( e' \) that cross. 
 Furthermore, $A$ checks that all edges with higher multiplicity receive distinct colors.
 This verification can be done on a real RAM in \( O(|E|^2) \) time.

Hence, \( A \) runs in polynomial time on the real RAM, confirming that \geo is in \ER.

\begin{figure}[tbp]
    \centering
    
    \includegraphics[page = 6]{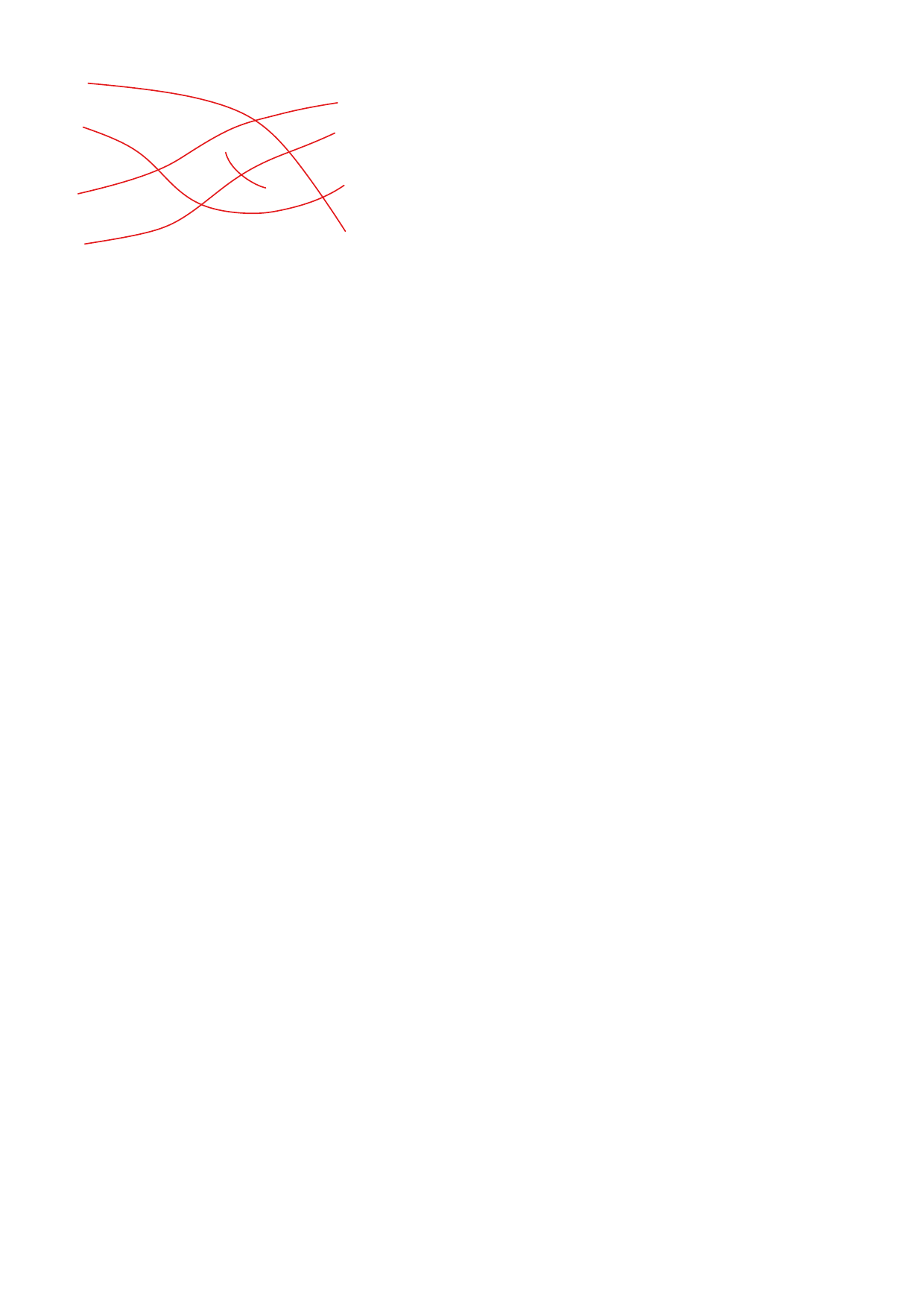}
    \hfil
    \includegraphics[page = 7]{figures/reduction.pdf}

    \caption{Left: The pseudo-segment arrangement is drawn in red. 
    The tunnel paths are drawn in green.
    The crossing boxes are drawn in blue and the blockers are not drawn.
    The connectors are drawn in purple (dotted).
    Right: Depiction for adding the blockers. }
    \label{fig:construction}
\end{figure}

\paragraph{Construction}
We reduce \segmentStretchability to \geo. 
For an illustration of this reduction, 
we refer to \Cref{fig:construction}.
% for an illustration.
Let \A be a given 
pseudo-segment arrangement with $n$ pseudo-segments
and let $t \geq \chromaticnumber$ be the geometric thickness we are aiming for.
We construct a graph $G$ as follows.
\begin{itemize}[leftmargin=*]
    \item For each pseudo-segment $a$ in \A, we add one \emph{long edge} (just a single edge) $(a_1,a_2)$ with multiplicity 1. More precisely, $a_1$ and $a_2$ are placed at the ends of $a$.
    \item For each crossing, we add a \emph{crossing box} ($4$-cycle)  with multiplicity $t-1$. More precisely, if pseudo-segments $a$ and $a'$ cross, we create the $4$-cycle $(c_{a_1,a'_1},c_{a_1,a'_2},c_{a_2,a'_2},c_{a_2,a'_1})$.
    \item   We connect the endpoints of each long edge with the corresponding crossing boxes by \emph{tunnel boundaries}
     (paths of length $n$). More precisely, consider a pseudo-segment $a$ and let $a'$ be the pseudo-segment $a$ closest to its end $a_i$. Then, we connect $a_i$ with $c_{a_i,a'_1}$ and $c_{a_i,a'_2}$ via a tunnel boundary.
     Moreover, we also connect consecutive crossing boxes via tunnel boundaries. More precisely, let $a$ be a pseudo-segment and $a'$ and $a''$ be two pseudo-segments crossing $a$ such that the crossings with 
 $a'$ and $a''$ are consecutive when traversing $a$ from $a_1$ to $a_2$. Moreover, assume that the endpoints of $a'$ and $a''$ are labeled such that $a'_1$ and $a''_1$ occur on the same side of $a$. Then, we connect $c_{a_2,a'_j}$ and $c_{a_1,a''_j}$ with a tunnel boundary for $j \in \{1,2\}$.%, as in \Cref{fig:construction}.
     All tunnel boundaries have multiplicity $t$.
    \item We add four \emph{blockers} (single edges with multiplicity 1) to each crossing box, s.t. they connect corresponding tunnel boundaries around each crossing box and two blockers cross each pair of opposite crossing box edges. More precisely, consider a crossing box $(c_{a_1,a'_1},c_{a_1,a'_2},c_{a_2,a'_2},c_{a_2,a'_1})$ corresponding to a crossing between $a$ and $a'$. Moreover, let $b_{a_i,a'_j}$ and $b'_{a_i,a'_j}$ denote the vertex incident to $c_{a_i,a'_j}$ on a tunnel boundary connected to another crossing box or an endpoint of $a$ or $a'$, respectively. Then, we add the four blockers $(b_{a_1,a'_1},b_{a_2,a'_1})$, $(b_{a_1,a'_2},b_{a_2,a'_2})$, $(b_{a_1,a'_1},b_{a_1,a'_2})$ and $(b_{a_2,a'_1},b_{a_2,a'_2})$.
    \item We add \emph{connectors} (paths of length~$n$) of multiplicity $t$ between the endpoints of the long edges and the crossing boxes. First, we add a cycle of them between the endpoints of the long edges on the outer face. Then, we ``triangulate'' the faces that are bounded by tunnel boundaries and connectors by adding connectors such that each such face is incident to three vertices which are endpoints of either long edges or edges belonging to crossing-boxes. In other words, each such face becomes a triangle after contracting all edges belonging to the tunnel boundaries and connectors.
    Furthermore, we add the paths in such a way that 
    no triangle contains both endpoints of a long edge.
\end{itemize}

This finishes the construction. 
Clearly, it can be done in polynomial time.
Note that the illustrations of the graph $G$ in \Cref{fig:construction} are meant 
to help to understand the definition of $G$.
However, we do not know (yet) that the graph $G$ 
needs to be embedded in the way depicted in the figures.

\paragraph{Completeness}
\label{sub:completeness}
Let \S be an arrangement of (straight-line segments) such that \S is isomorphic to \A. 
We need to show that there is a drawing of $G$ such that we can color it 
with $t$ colors avoiding monochromatic crossings.

We start by adding all the edges described in the construction. 
Note that all added parts are very long, thus, it is easy to see that they offer enough flexibility to realize each edge as a straight-line segment. 
Next we need to describe a coloring of all edges.
The tunnel edges have multiplicity~$t$, but they cross no other
edges so this is fine.
We know that we can color the long edges with $t$ colors due to
\Cref{cor:coloring}.
Each crossing box edge has multiplicity $t-1$. We use the $t-1$ colors
different from the color of the long edge crossing it.
Each crossing box has four edges. The two opposite edges receive the same set of colors.
The blockers receive the colors of the corresponding long edge that crosses the same crossing box edges.
It can be checked that no two edges that cross or overlap have the same color.

\paragraph{Soundness}
We now argue that if $G$ has geometric thickness $t$ then \A is stretchable.
Let $\Gamma$ be a straight-line drawing of $G$ %together 
with a $t$-coloring of $E$.

\paragraph*{Frame}
Let the \emph{frame} $H$ of $G$ be the subgraph of $G$ that consists only of the crossing boxes, tunnel boundaries, and connectors. Since all edges in $H$ have multiplicity at least $t-1$, no two edges may cross, so $H$ has to be drawn planar.
As a first step, we will argue that the frame of $G$ has a unique combinatorial embedding. It is well-known that a planar graph has a unique combinatorial embedding if and only if it is the subdivision of a planar 3-connected graph~\cite{Nishizeki1988PlanarGT}.

\begin{lemma}
\label{lem:frame}
The frame $H$ of $G$ has a unique combinatorial embedding.
\end{lemma}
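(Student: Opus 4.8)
The plan is to apply the cited characterization: a planar graph has a unique combinatorial embedding if and only if it is a subdivision of a 3-connected planar graph. So I would first identify the "topological skeleton" of the frame $H$ obtained by suppressing all degree-2 vertices, call it $H'$, and then argue that $H'$ is planar and 3-connected.

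First I would set up $H'$ explicitly. The tunnel boundaries and connectors are paths of length $n$, hence long chains of degree-2 vertices; after suppression they become single edges. The crossing boxes are $4$-cycles; each corner vertex $c_{a_i,a'_j}$ of a crossing box has degree exactly $3$ in $H$ (two incident crossing-box edges plus one tunnel boundary leaving toward a neighboring crossing box or a long-edge endpoint), so the corners survive suppression. The long-edge endpoints $a_1,a_2$ are incident to two tunnel boundaries and several connectors, so they also survive (they have degree $\ge 3$). Thus $H'$ has one vertex per crossing-box corner and one vertex per pseudo-segment end, and its edges are: the four sides of each crossing box, one edge per tunnel boundary, and one edge per connector. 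Planarity of $H'$ is immediate since $H$ is planar (it is drawn planar in $\Gamma$ because its edges have multiplicity $\ge t-1$) and suppression preserves planarity.

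The heart of the argument is $3$-connectedness of $H'$. Here I would lean on the fact that the construction deliberately "triangulates": after contracting all tunnel boundaries and connectors, every face becomes a triangle on three vertices that are endpoints of long edges or crossing-box edges. In $H'$ language, the connectors and tunnel boundaries glue the crossing-box corners and pseudo-segment ends into a planar triangulation-like structure with no separating pair. Concretely, I would show $H'$ has no $1$-cut and no $2$-cut. For a $2$-cut $\{u,v\}$: removing two vertices from a planar triangulation cannot disconnect it (triangulations on $\ge 4$ vertices are $3$-connected), and the crossing boxes are handled by noting each box is a $4$-cycle whose opposite corners are joined through the surrounding frame by the triangulating connectors, so no two box corners separate the graph either. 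The explicit "each such face is incident to three vertices which are endpoints of long edges or crossing-box edges" and "no triangle contains both endpoints of a long edge" clauses are exactly what prevents degenerate small separators, so I would invoke them carefully.

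The main obstacle I anticipate is making the $3$-connectivity argument fully rigorous despite the somewhat informal description of the connectors ("triangulate the faces bounded by tunnel boundaries and connectors"). I would need to pin down precisely which faces of the intermediate drawing the connectors subdivide and argue that the resulting planar graph, after contracting tunnel/connector paths, is genuinely a (near-)triangulation — or at least $3$-connected — uniformly over all pseudo-segment arrangements \A. A clean way is: contract every tunnel boundary and every connector to a point; the claim in the construction says the result is a planar graph in which every face is a triangle (a maximal planar graph on the crossing-box-corner/long-edge-endpoint vertices), hence $3$-connected by the standard fact; then observe that $H'$ is obtained from this triangulation by a sequence of edge subdivisions and the reverse of certain contractions that only replace an edge by a path, operations that preserve $3$-connectedness of the underlying skeleton; therefore $H'$ is $3$-connected, and $H$, being a subdivision of $H'$, has a unique combinatorial embedding by \cite{Nishizeki1988PlanarGT}. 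I would also double-check the small-case caveat (the characterization needs $H'$ to have at least $4$ vertices / not be a cycle), which holds as soon as \A has at least one crossing; the no-crossing case is trivial.
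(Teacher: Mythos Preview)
Your overall strategy matches the paper's: show $H$ is a subdivision of a $3$-connected planar graph. The paper also builds a contracted auxiliary graph and uses that triangulations are $3$-connected. However, your execution has a genuine gap at the step where you pass from the triangulation back to $H'$.

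The graph you describe on the crossing-box corners and long-edge endpoints is \emph{not} a triangulation: the construction only triangulates the faces bounded by tunnel boundaries and connectors, so the interior of every crossing box remains a $4$-face. To get an honest triangulation you must, as the paper does, further contract each crossing box to a single vertex, obtaining a graph $H^*$. But then $H'$ is recovered from $H^*$ by expanding each such vertex back into a $4$-cycle, which is \emph{not} ``replacing an edge by a path'' and does not automatically preserve $3$-connectedness. Your sentence ``opposite corners are joined through the surrounding frame by the triangulating connectors'' is precisely the non-trivial statement that needs proof: for two corners $u,v$ of the same crossing box you must exhibit a third $u$--$v$ path through the rest of $H$, and for corners in different boxes you must lift three disjoint $u^*$--$v^*$ paths in $H^*$ to $H$ while controlling which tunnel boundary each path uses at the box of $u$ and at the box of $v$ (otherwise the three lifted paths may collide inside one of the two $4$-cycles). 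The paper handles this with an explicit case analysis and a strengthened Menger-type claim guaranteeing that among the three disjoint paths in $H^*$ one can be forced to start with a prescribed edge at $u^*$ and one to end with a prescribed edge at $v^*$; this is exactly the missing ingredient in your sketch.

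A secondary inaccuracy: crossing-box corners need not have degree exactly $3$, since connectors may also attach there (the construction adds connectors ``between the endpoints of the long edges and the crossing boxes''). This does not break the suppression step, but it does mean your description of $H'$ is slightly off.
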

\begin{proof}
First, consider the \emph{contracted frame} $H^*$ obtained as follows; see \cref{fig:frame-contracted}. First, contract each tunnel boundary and each connector to a single edge. Then, contract each crossing box to a single vertex and remove multi-edges that appear by the contraction. By construction of the connectors, the resulting graph $H^*$ is a planar triangulated graph where each vertex $v^*$ corresponds to either the endpoint of a long edge or to a crossing box in $H$, and each edge $e^*$ corresponds to either a connector or to two boundary paths in $H$. Since $H^*$ is a planar triangulated graph, it has a unique embedding (up to mirroring).

\begin{figure}[t]
    \centering
    \subcaptionbox{The contracted frame $H^*$\label{fig:frame-contracted}}{\includegraphics[page=6]{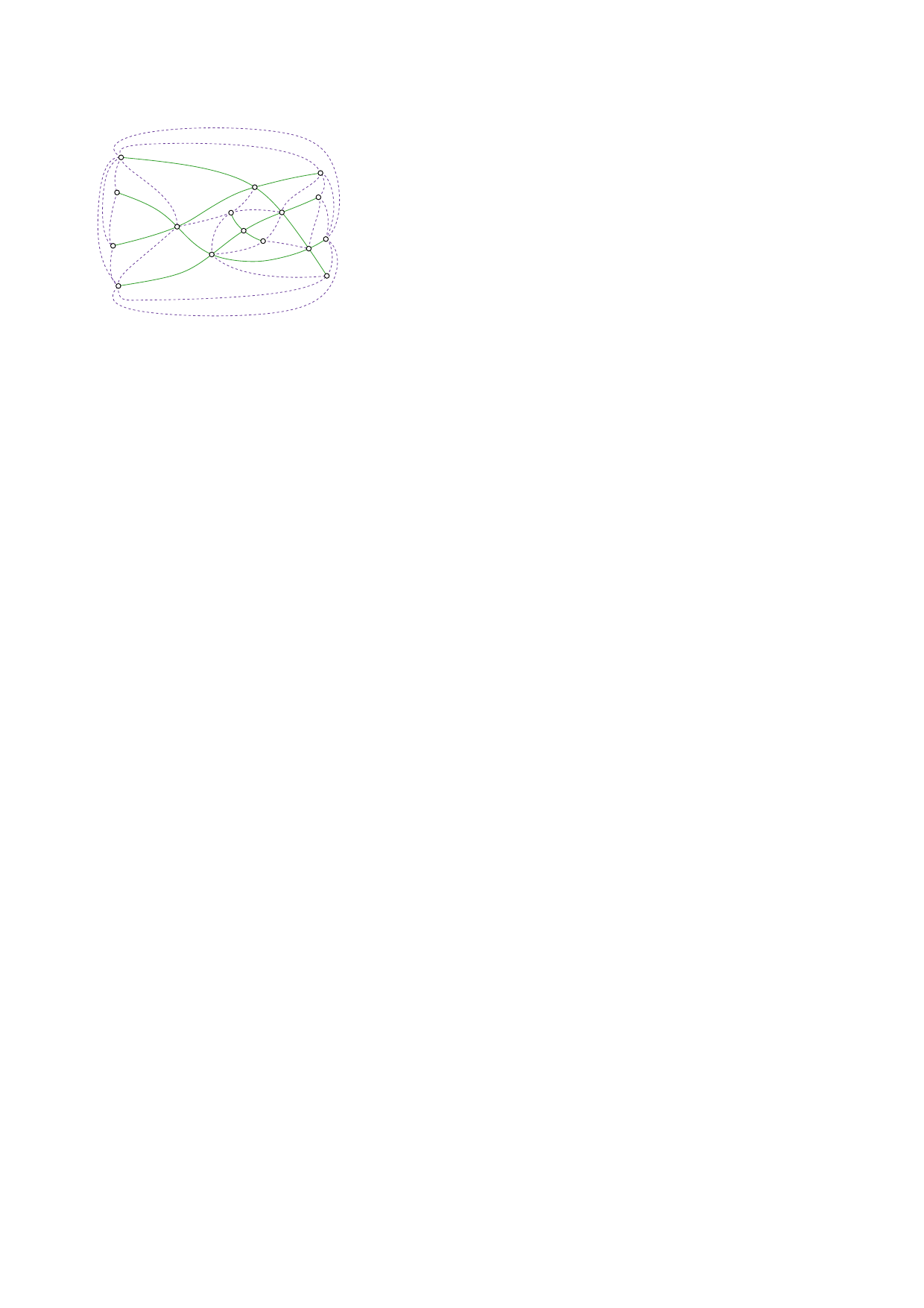}}
    \hfil
    \subcaptionbox{Three vertex-disjoint paths in $H$\label{fig:frame-paths}}{\includegraphics[page=8]{figures/frame.pdf}}
    \label{fig:frame}
    \caption{Illustration for the proof of \cref{lem:frame}.}
\end{figure}

We now argue that $H$ also has a unique embedding by proving that it is the subdivision of a planar 3-connected graph. Obviously, $H$ has no vertices of degree~1. Hence, we only have to prove that there are three vertex-disjoint simple paths between any two vertices of degree at least~3. Let $u,v$ be two such vertices of $H$.  

\looseness=-1
First, assume that $u$ and $v$ belong to the same crossing box. If $u$ and $v$ are adjacent, then they share two common faces. One of the two faces is the interior of the crossing box, the other face is bounded by two tunnel boundaries and the endpoint of a long edge or an edge of a different crossing box. We obtain the three paths from the edge $(u,v)$ and from following the boundaries of the common faces. If $u$ and $v$ are not adjacent, we find two paths following the edges of the crossing box. For the third path, let $w^*$ be the corresponding vertex in $H^*$, let $e_u$ and $e_v$ be tunnel paths incident to $u$ and $v$, respectively, and let $e^*_u$ and $e^*_v$ be the edges of $H^*$ that correspond to $e_u$ and $e_v$, respectively. Since $H^*$ is 3-connected, there exists a cycle $C^*=\langle w^*, e^*_0, w_1^*,e_1^*,\ldots, e_{k-1}^*,w_k^*,e_k^*,w^*\rangle$ in $H^*$ through $e_u^*=e_0^*$ and $e_v^*=e_k^*$. We find a path $P$ from $u$ to $v$ in $H$ by replacing each edge $e_i^*$ of $C^*$ by the corresponding tunnel path or connector in $H$, and each vertex $w_k^*$ by the corresponding vertex or a path through the corresponding crossing box.

\looseness=-1
Now, assume that $u$ and $v$ do not belong to the same crossing box. Let $u^*$ and $v^*$ be the vertices of $H^*$ corresponding to $u$ and $v$, respectively. Let $P_1^*,P_2^*,P_3^*$ be three vertex-disjoint paths between $u^*$ and $v^*$ in $H^*$; see \cref{fig:frame-contracted}. Similar to the previous case, we aim to find three paths $P_1,P_2,P_3$ by replacing the edges of $P_1^*,P_2^*,P_3^*$ by corresponding tunnel paths or connectors and interior vertices by paths through corresponding crossing boxes, if necessary. Since the paths are vertex-disjoint, we do not visit any crossing boxes more than once, except those of $u$ and $v$. In fact, if $u$ is a vertex of a crossing box, it might happen that the tunnel paths that correspond to the first edges of $P_1,P_2,P_3$ do not start in $u$, but in different vertices of the crossing box of $u$.

To get rid of this problem, we now prove a slightly stronger version of Menger's theorem: 
(We believe that this statement might have been proven before. We include a proof here for completeness.)
\begin{claim}
Let $G$ be a 3-connected graph with vertices $u$ and $v$ and edges $e_u=(u,u')$ and $e_v=(v,v')$. Then there exist three interior-vertex-disjoint paths $P_1,P_2,P_3$ from $u$ to $v$ such that $e_u,e_v\in P_1\cup P_2\cup P_3$. 
\end{claim} First, find three interior-vertex-disjoint paths $P_1',P_2',P_3'$ from $u$ to $v$. If these paths contain $e_u$ and $e_v$, we are done. Otherwise, assume that they do not contain $e_u$. By 3-connectivity, there are three vertex-disjoint paths from $u'$ to $v$, so at least one of them does not visit $u$. Let $Q$ be this path. If $Q$ is interior-vertex-disjoint from two of $P'_1,P'_2,P'_3$, say $P'_2$ and $P'_3$, then $P_1=\langle u,e_u,Q\rangle,P'_2,P'_3$ are three interior-vertex-disjoint paths from $u$ to $v$ with $e_u\in P_1$. Otherwise, follow $Q$ until it reaches an interior vertex  of $P'_1,P'_2,P'_3$ for the first time, say vertex $w$ on $P'_1$. Create a path $P_1$ from $u$ to $v$ by following $e_u$, then $Q$ until reaching $w$, then $P'_1$ until reaching $v$. Then $P_1,P'_2,P'_3$ are three interior-vertex-disjoint paths from $u$ to $v$ with $e_u\in P_1$. We can force $e_v$ to be part of one of the paths analogously.

Hence, we can assume that at least one of $P_1,P_2,P_3$ starts with a tunnel path at $u$ and at least one of $P_1,P_2,P_3$ ends with a tunnel path at $v$; see \cref{fig:frame-paths}. For the other two paths at $u$, we can reach the endpoint of the first tunnel path by following the crossing box, if necessary. An analogous argument can be applied for the last edges to reach $v$.
\end{proof}

As a next step, we will define tunnels precisely.

\paragraph*{Tunnels}
In \Cref{fig:tunnel-segment}, we illustrate the regions in the plane that
we refer to as \emph{middle tunnel segments}, \emph{end tunnel segments}, and \emph{crossing boxes}. A \emph{middle tunnel segment} is the region delimited by two crossing box edges and two tunnel boundaries which start and end at the corresponding crossing box edges. A \emph{end tunnel segment} on the other hand is the region delimited by the two tunnel boundaries starting at an endpoint of a long edge and the crossing box edge where these boundaries end.
We believe that those notions are intuitive to understand from the figure and thus we avoid a more formal definition. (End or middle) tunnel segments incident to the same crossing box are called \emph{consecutive}.

\begin{figure}[tb]
    \centering
    \includegraphics[width=0.9\linewidth]{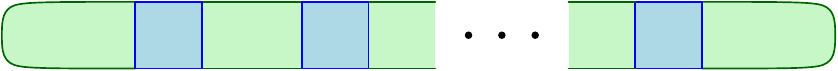}
    \caption{Each tunnel consists of two end tunnel segments and various middle tunnel segments, separated by crossing boxes. 
    }
    \label{fig:tunnel-segment}
\end{figure}

We say that a tunnel segment has \emph{color} $c$, if it contains a long edge with color $c$.
We will show that each tunnel segment has exactly one color.

\begin{myclaim}
Each end tunnel segment contains at least one long edge. 
\end{myclaim}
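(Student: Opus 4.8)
The plan is to argue by contradiction: suppose some end tunnel segment $T$ contains no long edge. Recall that an end tunnel segment is bounded by two tunnel boundaries meeting at an endpoint $a_i$ of a long edge $a$ and by a crossing box edge at the other end; by \Cref{lem:frame} the frame $H$ has a unique combinatorial embedding, so the cyclic picture of $T$ and its incident crossing box is forced. The long edge $a = (a_1,a_2)$ has multiplicity $1$ and its endpoint $a_i$ lies on the tunnel boundary bounding $T$. First I would pin down where $a$ can go: since the frame must be drawn planar and $a_i$ is a vertex of the frame incident to exactly the two tunnel-boundary edges (plus the long edge), the long edge $a$ leaves $a_i$ into one of the faces incident to $a_i$. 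One of those faces is the end tunnel segment $T$; the goal is to show $a$ must (at least partially) lie in $T$.

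The key step is to use the blockers and the connectors to trap the long edge. I would show that the only face incident to $a_i$ other than $T$ is a face bounded by connectors (the connectors form a cycle on the outer face and triangulate the remaining connector/tunnel faces), and that this ``connector side'' face does not let $a$ reach $a_{i'}$, the other endpoint of $a$, without crossing a frame edge — which is forbidden since frame edges have multiplicity $\ge t-1$ and $a$ would need a color distinct from all of them, but more to the point it would violate planarity of $H$ together with $a$ if $a$ crossed a connector, because $a$ plus the connectors... — here I need to be careful: $a$ is allowed to cross frame edges as long as the coloring works out. So instead the real argument is topological: in the unique embedding of $H$, the endpoints $a_1$ and $a_2$ of a long edge lie on the boundaries of two different end tunnel segments (the two ends of the pseudo-segment $a$), and any curve in the plane from $a_1$ to $a_2$ must pass through a sequence of tunnel segments of $a$; in particular it must enter the end tunnel segment at $a_i$. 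Because $a_i$ is only incident to $T$ and to connector-bounded faces, and connector-bounded faces are ``dead ends'' with respect to reaching $a_{i'}$ (the triangulation by connectors ensures that leaving $a_i$ on the connector side lands in a region of the plane separated from $a_{i'}$ by tunnel boundaries), the edge $a$ must have an initial portion inside $T$. Hence $T$ contains (part of) the long edge $a$, contradicting our assumption.

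To make the ``dead end'' claim precise I would invoke \Cref{lem:frame} and the contracted frame $H^*$: in $H^*$, the vertex $a_i^*$ corresponding to the endpoint $a_i$ is a vertex of the triangulation whose incident triangles correspond, on the connector side, to connector faces, and on the tunnel side to the end tunnel segment; removing from the plane the tunnel boundaries incident to $a_i$ together with the crossing box separates a neighborhood of $a_i$ on the connector side from the crossing box and hence from every other tunnel segment of $a$. Since the drawing of $H$ is planar and the long edge $a$ is a single Jordan arc from $a_1$ to $a_2$, and since $a_1, a_2$ lie on the boundaries of the two distinct end tunnel segments of $a$, the arc $a$ must leave $a_i$ into $T$ (otherwise it is confined to the connector side and cannot reach $a_{i'}$ without crossing a tunnel boundary of $a$ — which it may do, but then it crosses into a region still separated from $a_{i'}$, and an easy induction / Jordan curve argument rules this out).

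The main obstacle I anticipate is precisely formalizing this last topological separation argument: a long edge is permitted to cross tunnel-boundary and crossing-box edges (those crossings just constrain the coloring later), so one cannot simply say ``$a$ stays in $T$ by planarity.'' The clean way is to set up the argument so that it only uses the planarity of the \emph{frame} $H$ and the fact that the two endpoints of $a$ are forced (by the construction and \Cref{lem:frame}) to sit at the apexes of the two end tunnel segments associated to pseudo-segment $a$, and then observe that any arc from one apex to the other, no matter how it weaves through the frame, must at some point occupy each tunnel segment of $a$ — in particular $T$ — because the tunnel segments of $a$ form the ``corridor'' in the planar subdivision induced by $H$ that connects $a_1$ to $a_2$, and the connectors only triangulate the complementary faces without creating a shortcut. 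I would phrase this as: the union of the closures of the tunnel segments of $a$ is a topological disk $D_a$ with $a_1, a_2$ on its boundary, and the complement of $D_a$ in the plane, relative to the frame edges, does not connect $a_1$ to $a_2$; hence $a \cap T \neq \emptyset$, and since $a$ has an endpoint on $\partial T$, a nontrivial sub-arc of $a$ lies in $T$.
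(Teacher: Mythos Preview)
Your proposal contains a genuine gap rooted in a misreading of the construction. You write that ``a long edge is permitted to cross tunnel-boundary and crossing-box edges (those crossings just constrain the coloring later),'' and you build your entire topological separation argument around this premise. But tunnel boundaries and connectors have multiplicity~$t$, not $t-1$: in any valid $t$-colored drawing every color is already used on each such edge, so a long edge crossing one of them would produce an unavoidable monochromatic crossing. Only the crossing-box edges have multiplicity $t-1$ and are crossable by a long edge.

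Once you use this, the claim is essentially immediate and no Jordan-curve or corridor argument is needed. Since the long edge $\ell=(a_1,a_2)$ cannot cross any tunnel boundary or connector, the subgraph $H\cup\{\ell\}$ must be drawn planarly, and by \Cref{lem:frame} the embedding of $H$ is the unique one. At the endpoint $a_i$, the incident faces of $H$ are the end tunnel segment $T$ and one or more ``triangle'' faces bounded solely by tunnel boundaries and connectors; by the explicit stipulation in the construction that ``no triangle contains both endpoints of a long edge,'' the other endpoint $a_{3-i}$ does not lie on the boundary of any of those triangle faces. Hence $\ell$ cannot be drawn inside any of them, and the only remaining option is that $\ell$ enters $T$. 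This is exactly the paper's argument.

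Your elaborate fallback --- forming a disk $D_a$ from the tunnel segments and arguing that its complement does not connect $a_1$ to $a_2$ --- is not only unnecessary but also does not go through under your own assumption: if long edges really could cross tunnel boundaries, nothing would prevent $\ell$ from leaving $a_i$ on the connector side, piercing a tunnel boundary, and reaching $a_{3-i}$ through some other tunnel, so the ``dead end'' claim would be false.
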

Consider any long edge $\ell$ and the frame $H'$ without the blockers, and any embedding of $H'\cup\{\ell\}$. Since the frame has a unique combinatorial embedding and a long edge cannot cross a tunnel boundary or connector (as they have multiplicity~$t$), this embedding must be plane and coincide with the unique combinatorial embedding of $H$ on $H'$. The only way to add $\ell$ into the embedding of $H'$ is through its end tunnel segment.

\begin{myclaim}
    Each tunnel segment has at most one color $c$
    and its bounding crossing box edges have all the remaining colors.
\end{myclaim}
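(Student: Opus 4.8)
The plan is a ``multiplicity accounting'' argument resting on two ingredients: the planarity of the frame together with its unique combinatorial embedding (\Cref{lem:frame}), and the fact that an edge of multiplicity $m$ must receive $m$ pairwise distinct colors. In particular a crossing box edge has multiplicity $t-1$, so in the straight-line drawing $\Gamma$ it is a segment carrying $t-1$ distinct colors, leaving a \emph{unique} color of $[t]$ available for any edge crossing it. I would first note that, since every frame edge has multiplicity at least $t-1$, no two frame edges may cross, so by \Cref{lem:frame} the restriction of $\Gamma$ to the frame realizes its unique combinatorial embedding (up to reflection). In that embedding each end tunnel segment is a face bounded by two tunnel boundaries and exactly one crossing box edge, and each middle tunnel segment is a face bounded by two tunnel boundaries and exactly two crossing box edges, one from each of its two incident crossing boxes; these are the ``bounding crossing box edges'' of the claim. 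Also, a tunnel boundary has multiplicity $t$ and hence is crossed by no long edge, blocker, or crossing box edge.

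The core will be two facts about a single long edge $\ell$ (multiplicity $1$). \emph{Fact 1:} if $\ell$ crosses a bounding crossing box edge $f$, then $\ell$ receives the unique color of $[t]$ not used on $f$; equivalently, $f$ uses exactly the colors $[t]\setminus\{c(\ell)\}$. \emph{Fact 2:} if $\ell$ has a point in a tunnel segment $S$, then $\ell$ crosses every bounding crossing box edge of $S$, exactly once each. To prove Fact~2 I would argue: $\ell$ cannot cross the two tunnel boundaries on $\partial S$, and, being a straight segment, meets each bounding crossing box edge (also a segment) at most once; so $\ell$ crosses $\partial S$ at most once when $S$ is an end tunnel segment and at most twice when $S$ is a middle tunnel segment. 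If $S$ is a middle tunnel segment, then no endpoint of any long edge lies on $\overline S$ (the endpoints of long edges are where the two tunnel boundaries of an end tunnel segment meet, and lie on no crossing box edge), so $\ell$ both enters and leaves $S$, forcing exactly two crossings of $\partial S$, necessarily one on each of the two bounding crossing box edges. If $S$ is the end tunnel segment at an endpoint of $\ell=(a_1,a_2)$ itself, then $\ell$ must leave $S$ (its other endpoint lies elsewhere) and the only admissible place is the single bounding crossing box edge. Finally, an end tunnel segment contains no long edge other than the one at its apex, since any other long edge touching it would have to enter and leave through that same single bounding crossing box edge, crossing it twice.

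Given these facts the claim is immediate. For the second part: if $S$ contains a long edge $\ell$ of color $c$, then by Fact~2 $\ell$ crosses every bounding crossing box edge $f$ of $S$, and by Fact~1 every such $f$ uses exactly the colors $[t]\setminus\{c\}$. For the first part (uniqueness): if $S$ contained long edges $\ell,\ell'$ with $c(\ell)\neq c(\ell')$, pick any bounding crossing box edge $f$ of $S$ (which exists, being one for end and two for middle tunnel segments); Facts~2 and~1 applied to $\ell$ and to $\ell'$ show both receive the unique color of $[t]$ not used on $f$, so $c(\ell)=c(\ell')$, a contradiction.

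The step I expect to require the most care is Fact~2: a tunnel segment need not be drawn as a convex region, so a straight long edge could a priori weave in and out of $S$; this is exactly what ``a straight segment meets each bounding crossing box edge at most once'' rules out, after which the counting only relies on the description of $\partial S$ supplied by \Cref{lem:frame}. Combined with the preceding claim, this already yields that every end tunnel segment has exactly one color; the one remaining point, needed so that ``the remaining colors'' is meaningful everywhere, is that every \emph{middle} tunnel segment also contains a long edge, which I would obtain from a routing statement --- the long edge of each pseudo-segment traverses its entire tunnel. That routing statement is where the blockers enter: the multiplicity-$1$ blockers crossing a pair of opposite bounding crossing box edges of a crossing box force those two edges to use the same color set, so the color of a tunnel propagates across each crossing box once one rules out that a long edge escapes sideways through a bounding crossing box edge shared with a different tunnel.
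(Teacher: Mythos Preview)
Your argument is correct and is a more detailed version of the paper's own three-line proof, which rests on the same two observations: a long edge can only enter or leave a tunnel segment through its bounding crossing-box edges (your Fact~2), and those, having multiplicity $t-1$, must carry exactly the $t-1$ colors different from any long edge crossing them (your Fact~1). Your final paragraph about middle tunnel segments and blockers is not needed for this claim --- the statement is to be read conditionally, hence vacuous for a segment containing no long edge --- but it correctly anticipates the content of the paper's subsequent claims, where the blockers are indeed used to propagate the color across crossing boxes.
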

Note that each tunnel segment can only be entered or left by a long edge using the multiedges 
of the crossing box. 
As those have multiplicity $t-1$, it follows that the edges of the crossing box are colored with
all colors except the color of the long edge.
Now, we see that each tunnel segment is completely surrounded by edges of all but one colors, which implies the claim.

\begin{figure}[t]
    \centering
    \includegraphics[page = 3]{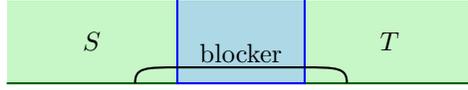}
    \caption{Consecutive tunnel segments must have the same color.}
    \label{fig:tunnel-colors}
\end{figure}

\begin{myclaim}
    Consecutive tunnel segments have the same color.
\end{myclaim}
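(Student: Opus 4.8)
The plan is to fix a crossing box $B$ corresponding to a crossing of pseudo-segments $a$ and $a'$, and two tunnel segments $T$ and $T'$ that are consecutive at $B$, meaning they are separated by one edge $e$ of the crossing box $B$ (i.e., $e$ lies on the common boundary of $T$ and $T'$). By the previous claim, $T$ has at most one color and the edges of $B$ carry all the remaining colors; likewise for $T'$. In particular the edge $e$, being an edge of $B$, has multiplicity $t-1$ and its $t-1$ colors are exactly the colors \emph{not} equal to the color of $T$ (if $T$ has a color) and simultaneously not equal to the color of $T'$ (if $T'$ has a color). First I would record the consequence: if both $T$ and $T'$ have a color, these colors must coincide, since $e$ uses all but one color and both missing colors equal the color of the long edge passing through $e$.

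The remaining work is therefore to rule out that a tunnel segment has \emph{no} color, i.e. to show every tunnel segment contains a long edge. For end tunnel segments this is exactly the earlier claim. For a middle tunnel segment $T$ between crossing boxes $B_1$ and $B_2$, I would argue by following the chain of consecutive tunnel segments: the tunnel containing $T$ runs, by the construction, from one end tunnel segment, through a sequence of middle tunnel segments, to the other end tunnel segment, and these segments are exactly the regions swept out alongside a fixed pseudo-segment $a$. A long edge $\ell_a$ for $a$ exists and, by the argument in the first claim (unique combinatorial embedding of the frame, plus the fact that $\ell_a$ cannot cross any tunnel boundary or connector since those have multiplicity $t$), $\ell_a$ must be routed through the end tunnel segment at one end of $a$. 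I would then show $\ell_a$ cannot stop there: the end tunnel segment is bounded by two tunnel boundaries and one crossing-box edge, all of multiplicity $\ge t-1$, so $\ell_a$ must leave it through the crossing box, and inductively it must traverse every middle tunnel segment of that tunnel until it reaches the other end. Hence every tunnel segment of the tunnel associated with $a$ contains (a piece of) $\ell_a$, so it has a color, and by the observation above consecutive segments share it.

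The main obstacle I expect is the bookkeeping that the segments I call "consecutive" along a single tunnel really are traversed in order by the unique embedding of the frame — that is, that once $\ell_a$ enters an end tunnel segment it is forced, crossing box after crossing box, to follow the whole tunnel and cannot "escape" into a neighboring tunnel or connector region. This is where the combinatorial rigidity from \Cref{lem:frame} (the frame has a unique combinatorial embedding) does the heavy lifting: every crossing box has a fixed rotation system, so the only faces adjacent to a tunnel segment across a crossing-box edge $e$ are the interior of the crossing box and the next tunnel segment, and a long edge crossing $e$ lands in exactly that next segment. Once this local "no escape" statement is nailed down, the global claim follows by a short induction along the tunnel, and combining with the first paragraph gives that all tunnel segments of one tunnel — in particular any two consecutive ones — carry the single color of that tunnel's long edge.
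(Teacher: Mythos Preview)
Your argument rests on a misreading of the geometry of ``consecutive'' tunnel segments. Two consecutive tunnel segments belong to the \emph{same} tunnel (the one associated with a fixed pseudo-segment $a$) and lie on \emph{opposite} sides of a crossing box $B$: one is bounded by the crossing-box edge $e_1$, the other by the \emph{opposite} crossing-box edge $e_2$, with the interior of $B$ in between. There is no single edge $e$ on their common boundary. Consequently your first step collapses: from the previous claim you only get that $e_1$ carries the colors $[t]\setminus\{c_T\}$ and $e_2$ carries $[t]\setminus\{c_{T'}\}$, and since $e_1\neq e_2$ this places no relation whatsoever between $c_T$ and $c_{T'}$.

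This is precisely why the paper introduces the \emph{blockers}, which your proposal never mentions. A blocker is a single (multiplicity-$1$) edge whose endpoints lie on the tunnel boundaries in $T$ and in $T'$ and which therefore crosses both $e_1$ and $e_2$. Once $T$ has color $c$, the blocker is forced to receive color $c$ (it crosses $e_1$), and then $e_2$ is forced to avoid $c$, pinning the color of $T'$. Without this bridging edge the two opposite crossing-box edges are simply independent.

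Your fallback plan in step~2 inherits the same gap. After $\ell_a$ crosses $e_1$ it lands in the interior of $B$, which is bounded by \emph{four} crossing-box edges; nothing you have said prevents $\ell_a$ from exiting through one of the two edges bounding the tunnel of $a'$ rather than through $e_2$. Your ``no escape'' paragraph asserts that the only face across a crossing-box edge is ``the next tunnel segment'', but that is false at the crossing box itself, and the combinatorial rigidity of the frame does not help here: the frame's unique embedding fixes which four tunnel segments surround $B$, not which of them $\ell_a$ enters. In the paper this is handled only later (Claim~5), and again via the blockers: the blockers, all forced to color $c$, together with the tunnel boundaries form a monochromatic cycle that encloses $\ell_a$. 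Your inductive traversal of the tunnel cannot replace this argument.
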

See \Cref{fig:tunnel-colors}, for an illustration.
Denote by $S,T$ the two consecutive tunnel segments and by $C = [t]$ the set of all colors. 
Say tunnel $S$ has color $c$.
We show that tunnel $T$ has the same color.
Due to the unique embedding there is a blocker
that intersects both tunnels. 
As the blocker is in a tunnel with color $c$,
the bounding crossing box edges have colors $C\setminus c$.
Thus, the blocker must have color $c$ as well.
Hence, also $T$ must have color $c$ as all bounding crossing box edges must have 
colors $C\setminus c$ (unless they are uncrossed).
This proves the claim and also immediately the following:

\begin{myclaim}
    All tunnel segments of a tunnel have the same color.
\end{myclaim}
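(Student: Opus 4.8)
The plan is to derive this claim directly from the preceding claim (``Consecutive tunnel segments have the same color'') together with the structural description of a tunnel given just before the statement. Recall that a tunnel was defined to consist of two end tunnel segments and a sequence of middle tunnel segments, separated by crossing boxes, corresponding to traversing a single pseudo-segment $a$ from $a_1$ to $a_2$. Two tunnel segments that are adjacent in this sequence (i.e.\ incident to the same crossing box along $a$) are exactly what the previous claim calls \emph{consecutive}.

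First I would observe that the tunnel segments of a fixed tunnel, listed in the order in which they occur along the pseudo-segment $a$, form a path in the ``consecutivity'' relation: the first end tunnel segment at $a_1$, then the middle tunnel segments between successive crossing boxes, then the final end tunnel segment at $a_2$. Each neighboring pair in this list shares a crossing box, so by the previous claim they have the same color. Since equality of colors is transitive, all of them share a common color $c$. (Each of them does have a well-defined color: by the earlier claim every tunnel segment has at most one color, and by the claim ``Each end tunnel segment contains at least one long edge'' the end segments — and hence, propagating along the consecutivity path, all segments of the tunnel — contain a long edge and therefore have exactly one color.)

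The only mild subtlety, and the step I would be most careful about, is making sure the combinatorial picture of ``a tunnel is a linear chain of tunnel segments'' is justified rather than merely read off \Cref{fig:tunnel-segment}. This follows from the uniqueness of the combinatorial embedding of the frame $H$ (\Cref{lem:frame}): the tunnel boundaries connected to a pseudo-segment $a$ — emanating from $a_1$, threading through the crossing boxes of $a$ in the order the crossings appear along $a$, and ending at $a_2$ — bound a sequence of regions which, by the unique embedding, really are arranged as one end segment, then middle segments interleaved with crossing boxes, then one end segment. Hence the consecutivity graph restricted to one tunnel is a simple path, and the argument above applies. I would therefore phrase the proof as: ``By construction and \Cref{lem:frame}, the tunnel segments of a tunnel form a chain of pairwise consecutive segments from one end tunnel segment to the other; applying the previous claim along this chain and using transitivity of color equality proves the statement,'' which is essentially the one-line remark ``This proves the claim and also immediately the following'' already made in the text.
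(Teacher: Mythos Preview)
Your proposal is correct and follows exactly the paper's approach: the paper itself proves this claim with the single remark ``This proves the claim and also immediately the following,'' meaning it is an immediate consequence of the preceding claim on consecutive tunnel segments by transitivity along the chain of segments. Your added care about the chain structure (via \Cref{lem:frame}) and well-definedness of the color just makes explicit what the paper leaves implicit.
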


We are now ready to show our central claim:

\begin{myclaim}\label{claim:insideTunnel}
    Each long edge
stays in its respective tunnel.
\end{myclaim}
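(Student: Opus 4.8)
The plan is to argue that a long edge $\ell$ corresponding to a pseudo-segment $a$ must remain entirely within the union of tunnel segments that form the ``tunnel of $a$''. Having already established (in the previous claims) that every tunnel segment carries exactly one color, that consecutive tunnel segments along a tunnel share that color, and that the bounding crossing-box edges of a tunnel segment carry all the \emph{other} colors, the key observation is that a long edge is a single edge of multiplicity $1$, so it is free to cross other edges --- but only edges whose color differs from its own. First I would fix the color $c$ of $\ell$; by the earlier claims this equals the common color of all segments of the tunnel of $a$, and in particular $\ell$ starts inside an end tunnel segment of color $c$ (by the claim that each end tunnel segment contains at least one long edge, combined with the fact that the unique combinatorial embedding of the frame forces $\ell$ into its own end tunnel segment).

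Next I would trace $\ell$ starting from this end tunnel segment and show it can never leave the tunnel of $a$. The boundary of the tunnel of $a$ consists of tunnel boundaries (multiplicity $t$, hence uncrossable by anything), crossing-box edges, and the blockers. A tunnel boundary is immediately ruled out as an escape route. To leave through a crossing-box edge, note that a crossing box belongs to two tunnels (the tunnel of $a$ and the tunnel of the other pseudo-segment $a'$ crossing there); its four edges, in two opposite pairs, carry the complementary color sets $C \setminus \{c\}$ and $C \setminus \{c'\}$ where $c'$ is the color of $a'$. The pair of crossing-box edges that $\ell$ would have to cross in order to pass from the tunnel of $a$ into the tunnel of $a'$ is exactly the pair colored $C \setminus \{c\}$ --- one of whose colors is $c$ if $c \neq c'$, and all of whose edges would then include an edge of color $c$ that $\ell$ cannot cross; more carefully, the relevant pair is the one separating $S$ (a segment of tunnel $a$) from the interior of the crossing box, and by the previous claims those edges carry all colors except $c$, so at least one of them has a color $\ne c$ that $\ell$ could cross --- here I need to be precise about \emph{which} pair of opposite edges separates consecutive segments of the same tunnel versus which pair separates the two different tunnels, and invoke the blocker structure.

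This is where the blockers do their real work, and I expect it to be the main obstacle. Recall that each crossing box receives four blockers: two ``copies'' running parallel to each opposite pair of crossing-box edges, connecting the incident tunnel-boundary vertices, and inheriting the color of the long edge crossing the corresponding crossing-box edges. By the claim ``consecutive tunnel segments have the same color,'' the blocker lying between tunnel segment $S$ of tunnel $a$ and the crossing-box interior has color $c$ and, together with the crossing-box edges of color $C \setminus \{c\}$, forms a closed curve (a ``barrier'') enclosing $S$ within the tunnel of $a$; since $\ell$ has color $c$ it cannot cross the blocker, and since the blocker plus the $C\setminus\{c\}$-colored crossing-box edges plus the tunnel boundaries fully enclose each segment, $\ell$ cannot escape. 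I would formalize this as: the union of the tunnel boundaries of the tunnel of $a$, the $C\setminus\{c\}$-colored crossing-box edges of every crossing box along $a$, and the color-$c$ blockers at those crossing boxes forms a closed region $R_a$ in the plane; no edge of color $c$ crosses $\partial R_a$; $\ell$ has color $c$ and one endpoint inside $R_a$; hence $\ell \subseteq R_a$, and $R_a$ is precisely the tunnel of $a$. The delicate points are checking that the blockers are attached to exactly the right tunnel-boundary vertices so that this region is genuinely closed (this is guaranteed by the construction, where each crossing box gets the four blockers $(b_{a_1,a'_1},b_{a_2,a'_1})$, $(b_{a_1,a'_2},b_{a_2,a'_2})$, $(b_{a_1,a'_1},b_{a_1,a'_2})$, $(b_{a_2,a'_1},b_{a_2,a'_2})$), and verifying that the unique combinatorial embedding of the frame (\Cref{lem:frame}) pins down the cyclic position of the blockers relative to the crossing-box edges so that ``the color-$c$ blocker is the one on the tunnel-of-$a$ side.''
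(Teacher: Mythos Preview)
Your overall strategy --- surround $\ell$ by a closed curve of edges it cannot cross --- is the paper's, but two things go wrong. First, your description of the enclosing curve is garbled. The correct cycle consists only of the tunnel boundaries of $a$ together with the color-$c$ blockers; no crossing-box edges are used. The color-$c$ blockers at a crossing box are the two running \emph{along} tunnel $a$: they cross the pair of crossing-box edges that bound the segments of tunnel $a$, and since those carry the colors $C\setminus\{c\}$, the blockers are forced to have color $c$. There is no blocker ``between $S$ and the crossing-box interior'' in the sense you describe; rather, each color-$c$ blocker bridges across the crossing box from one tunnel segment to the next, so that a tunnel boundary, interrupted at the crossing-box corners, is continued by a blocker. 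Your $\partial R_a$ instead includes the $C\setminus\{c\}$-colored crossing-box edges, which $\ell$ \emph{can} cross --- indeed must, in order to pass from one tunnel segment to the next --- so your assertion ``no edge of color $c$ crosses $\partial R_a$'' is false as stated.

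Second, and this is the real gap, you never justify why the blockers lie where you need them. You appeal to the unique embedding of the frame, but the blockers are not part of the frame; only their endpoints are, so the frame's embedding alone does not control how a blocker threads through the drawing. The paper closes this with a one-line geometric argument you are missing: a blocker is a single straight-line segment, so it can cross any given crossing-box edge (also a straight-line segment) at most once, and therefore cannot leave its tunnel and return. That observation is what guarantees that the tunnel boundaries together with the color-$c$ blockers really do form a closed curve around $\ell$.
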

First, we note that all the blockers of a tunnel have the same color as the long edge.
Furthermore, due to the combinatorial embedding, the tunnel paths together with the blockers
form a cycle surrounding the long edge. 
It remains to show that the blockers cannot leave their respective tunnels,
to show that the long edges cannot leave the tunnel.
To this end, notice that a blocker cannot cross a crossing box edge twice,
because they are both line segments and two line segments can cross at most once.

\begin{myclaim}
    \label{claim:combiEquiv}
    The arrangement \S formed by the long edges is combinatorially equivalent 
    to the arrangement \A of pseudo-segments.
\end{myclaim}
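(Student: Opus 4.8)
The plan is to combine the structural results established so far. By \Cref{claim:insideTunnel}, each long edge $\ell_a$ (corresponding to pseudo-segment $a$) is confined to its tunnel $\mathcal{T}_a$, which by \Cref{lem:frame} is a region with a fixed combinatorial shape: a "thickened path" whose end tunnel segments sit at the two endpoints $a_1,a_2$ of $a$ and whose middle tunnel segments are threaded through the crossing boxes in exactly the cyclic order dictated by the arrangement $\A$. Since $\ell_a$ is a single straight-line segment running from $a_1$ to $a_2$ and it cannot exit $\mathcal{T}_a$, its intersection pattern with the other long edges is forced by the topology of how the tunnels $\mathcal{T}_a$ overlap — and tunnels overlap precisely at crossing boxes, one crossing box for each crossing of $\A$.

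First I would argue that the tunnels $\mathcal{T}_a$ and $\mathcal{T}_{a'}$ share exactly the single crossing box $B_{a,a'}$ whenever $a,a'$ cross in $\A$, and are otherwise disjoint; this is immediate from the construction (middle tunnel segments of distinct tunnels are interior-disjoint) together with the uniqueness of the combinatorial embedding of $H$. Next, inside the shared crossing box $B_{a,a'}$, the segment $\ell_a$ enters through one of the two "crossing box edges" assigned to tunnel $\mathcal{T}_a$ and leaves through the opposite one, and likewise for $\ell_{a'}$; the four blockers of $B_{a,a'}$ were placed so that two of them cross each pair of opposite crossing box edges, which forces $\ell_a$ and $\ell_{a'}$ to actually cross inside $B_{a,a'}$ rather than slip past each other. (Here one uses again that a blocker is a straight segment and crosses any crossing box edge at most once, exactly as in the proof of \Cref{claim:insideTunnel}.) Conversely, if $a$ and $a'$ do not cross in $\A$ their tunnels are disjoint, so $\ell_a\cap\ell_{a'}=\emptyset$. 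Hence the intersection graph of $\S$ equals that of $\A$, and moreover each pair of segments that do cross, cross exactly once.

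It remains to verify the finer data of the isomorphism type: the order in which $\ell_a$ meets the other long edges along $\ell_a$, and the clockwise cyclic order around each intersection point. The order along $\ell_a$ is forced because $\ell_a$ traverses the middle tunnel segments of $\mathcal{T}_a$ in the fixed order given by \Cref{lem:frame}, and each crossing with another long edge happens in the corresponding crossing box; so consecutive crossings along $\ell_a$ correspond to consecutive crossing boxes along $a$, matching $\A$. The cyclic order around a crossing point inside $B_{a,a'}$ is pinned down by which pair of opposite crossing box edges carries $\mathcal{T}_a$ versus $\mathcal{T}_{a'}$, together with the labeling of the tunnel boundaries $b_{a_i,a'_j}$ fixed in the construction; since the embedding of $H$ (hence of every crossing box together with its attached tunnel boundaries) is unique, this cyclic order agrees with the one prescribed by $\A$. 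Therefore $\S$ is combinatorially equivalent to $\A$.

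The main obstacle I anticipate is the step showing that two long edges whose pseudo-segments cross are \emph{forced} to cross, and do so inside the common crossing box: one must rule out the possibility that $\ell_a$ and $\ell_{a'}$ both pass through $B_{a,a'}$ but on the "same side", failing to intersect. This is exactly what the four blockers are designed to prevent, and the argument is the planarity-plus-"two segments meet at most once" trick already used for \Cref{claim:insideTunnel}; making it airtight requires carefully tracking which blocker separates which pair of entry/exit edges, but it is routine once the picture in \Cref{fig:construction} is set up correctly.
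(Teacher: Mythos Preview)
Your approach is essentially the paper's: the paper dispatches this claim in a single sentence (``follows from the tunnels crossing combinatorially as in $\A$ and the long edges staying within their respective tunnels''), and you have unpacked exactly that into the order-along-$\ell_a$ and cyclic-order checks. One small correction: the step you flag as the main obstacle---that $\ell_a$ and $\ell_{a'}$ are \emph{forced} to cross inside $B_{a,a'}$---is not really the blockers' job; once \Cref{claim:insideTunnel} holds, $\ell_a$ enters and leaves $B_{a,a'}$ through one pair of opposite crossing-box edges and $\ell_{a'}$ through the other pair, and any two arcs in a topological disk joining the two pairs of opposite boundary arcs must intersect. The blockers are used earlier (to pin down colors and keep long edges in their tunnels), not to force the crossing itself.
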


This follows from the tunnels crossing 
combinatorially as in arrangement \A and the long edges
staying within their respective tunnels.

\Cref{claim:combiEquiv} concludes the proof of \Cref{thm:para-ER}. 
Notice that in our construction, for every crossing, we add $2t$ more crossings ($2(t-1)$ from the parallel crossing box edges and two from the blockers). 
Since the pseudo-segment graph from Schaefer's construction has maximum degree 72~\cite{S10}, the graph in our reduction is $\kplanarnumber$-planar, which proves \Cref{cor:kpla-ER}.

%%%%%%%%%%%%%%%%%%%%%%%%%%%%%%%%%%%%% start conjecture
\ConjOUT{
\section{Related \ER-Completeness Results}
\label{sec:related-results}

In this section, we apply slight modifications to our proof from \Cref{sec:geothickness} to obtain \Cref{thm:simple-geo-2,thm:SGE}. In order to avoid reiterating all the arguments again, which are almost identical to the previous proof, we decided to only highlight the modifications.

%%%%%%%%%%%%%%%%%%%%%%%%%%%%%%%%%%%%%%%%%%%%%%%%%%%%%%%%%%
\subsection{Simple \geo}
\label{sub:simple-geo}
%%%%%%%%%%%%%%%%%%%%%%%%%%%%%%%%%%%%%%%%%%%%%%%%%%%%%%%%%%
This section is devoted to showing \Cref{thm:simple-geo-2}.

The proof goes by some small modifications of the proof of \Cref{thm:para-ER}.
We first observe that \Cref{con:2coloring}
implies that we can reduce from two colorable \segmentStretchability.
Thus
we would only need edges of multiplicity two in the reduction.
Edges of multiplicity $t-1$ are now just normal edges 
and we do not need to worry about them.
Furthermore, our geometric thickness parameter would also drop
to two by just following the reduction and replacing \chromaticnumber by two everywhere.

As a next step, we use \Cref{con:connected} for the case 
of geometric thickness two. 
That is, we may assume that there is a graph $F$ of geometric thickness equal to two
that is connected in every color in any possible realization.
We replace every edge of multiplicity two by a copy of $F$,
with two vertices of $F$ being identified with the endpoints that
$F$ replaces. 

This step takes constant time per edge,  as $F$ is a fixed graph.
Furthermore, if the graph was had geometric thickness two before,
it has still geometric thickness two after the modification.
To see this take a drawing of the original multiedge $e$ and replace it
by a very thin drawing of $F$ using two colors.
Furthermore, the arguments about the edges with
higher multiplicity never used straightness of the edges, 
but only topological connectedness. 
Those arguments equally apply to $F$ instead of the original edge $e$.

This finishes the argument.
}
%%%%%%%%%%%%%%%%%%%%%%%%%%%%%%%%%%%%% end conjecture

%%%%%%%%%%%%%%%%%%%%%%%%%%%%%%%%%%%%%%%%%%%%%%%%%%%%%%%%%%
\section{\boldmath Sunflower \sge}
\label{sec:SGE}
%%%%%%%%%%%%%%%%%%%%%%%%%%%%%%%%%%%%%%%%%%%%%%%%%%%%%%%%%%
This section is devoted to proving \Cref{thm:SGE} by slightly modifying our construction in the proof of \Cref{thm:para-ER}. 
We reduce  \segmentStretchability to \sge with the additional restriction that the input graphs of the \sge instance form an empty sunflower. To this end, let $\mathcal{A}$ denote a pseudo-segment arrangement with $n$ pseudo-segments. Further, let $c$ be the chromatic number of the pseudo-segment intersection graph induced by $\mathcal{A}$ and let $\chi:\mathcal{A} \rightarrow [c]$ denote a corresponding $c$-coloring.

We construct an instance of \sge consisting of $k:=c+1$ simple graphs $H,G_1,\ldots,G_c$ on a shared vertex set as follows. The graph $H$ contains exactly the edges belonging to the frame graph as defined above, %in \Cref{sub:soundness}, 
i.e., the crossing boxes, the tunnel boundaries and the connectors. Moreover, for $i\in [c]$, the graph $G_i$ contains the long edges corresponding to each pseudo-segment $S \in \mathcal{A}$ for which $\chi(S)=i$ and a $1$-subdivision of the edges of $H$ aside from the crossing box edges bounding tunnel segments corresponding to pseudo-segments $S \in \mathcal{A}$ for which $\chi(S)=i$. More precisely, in the $1$-subdivision of the subgraph of $H$ belonging to $G_i$, each edge $e=(u,v)$ of $H$ is replaced by a path $(u,x_i(e),v)$ of length $2$ where $x_i(e)$ does not belong to $H$, i.e., $x_i(e)$ is an isolated vertex in all graphs except for $G_i$. Since $\chi$ is a proper $c$-coloring, the graphs $G_1,\ldots,G_c$ do not share any long edges, while their $1$-subdivisions of $H$ are edge-disjoint by construction. As also $H$ is edge-disjoint from any of the $1$-subdivisions, we observe that $H,G_1,\ldots,G_c$ form an empty sunflower. 
Note that the construction here does not require any blockers.

It remains to discuss that $\mathcal{A}$ is stretchable if and only if $H,G_1,\ldots,G_c$ admit a simultaneous geometric embedding. First, completeness can be easily shown following the argumentation in the corresponding paragraph in \cref{sec:geothickness}. In particular, we need to discuss how to place the subdivision vertices of edges of $H$. Namely, for an edge $e=(u,v)$ of $H$, we can place all subdivision vertices $x_i(e)$ arbitrarily close to the straight-line segment representing $(u,v)$. Completeness now immediately follows by observing that $G_i$ contains no subdivisions of crossing-box edges that bound tunnel segments corresponding to a pseudo-segment $S \in \mathcal{A}$ for which $\chi(S)=i$, i.e., the tunnel of a segment $S\in \mathcal{A}$ with $\chi(S)=i$ is a single face in $G_i$ minus the long edge representing $S$.

\begin{figure}[t]
    \centering
    \includegraphics[width=\linewidth,page=4]{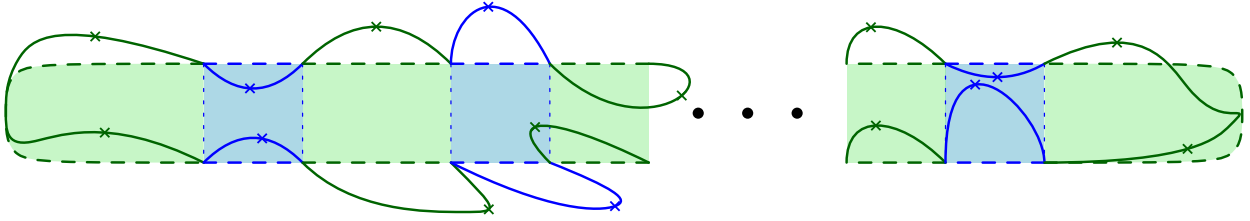}
    \caption{In our construction for proving \Cref{thm:SGE}, subdivided tunnels do not contain subdivisions of crossing box edges shared by consecutive segments. While subdivided tunnels may cover not all of the tunnel and also additional parts of the plane, each curve connecting between both end segments traverses all segments in order.}
    \label{fig:tunnel-segment:sge}
\end{figure}

Finally, we show {soundness}. Let $\varphi$ be a simultaneous geometric embedding of $H,G_1,\ldots,G_c$. For a pseudo-segment $S \in \mathcal{A}$, we define the \emph{subdivided tunnel} as the $1$-subdivision of the \emph{outer cycle} of its tunnel belonging to $G_{\chi(S)}$; see \Cref{fig:tunnel-segment:sge}. Note that $G_{\chi(S)}$ does not contain a subdivision of the crossing box edges shared by consecutive segments of a subdivided tunnel; see dotted edges in \Cref{fig:tunnel-segment:sge}. We now  prove the equivalent of \Cref{claim:insideTunnel}:

\begin{myclaim}
    Each long edge traverses all segments and crossing boxes of its respective tunnel in order. 
\end{myclaim}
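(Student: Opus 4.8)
The plan is to mirror the soundness argument of \Cref{thm:para-ER} as closely as possible, replacing the role of color classes and crossing-box multiplicities by the planarity constraints imposed by the graphs $G_1,\dots,G_c$ and by the $1$-subdivisions of the frame they each carry. First I would fix a pseudo-segment $S\in\mathcal{A}$, write $i=\chi(S)$, and restrict attention to the drawing of $G_i$. Since $G_i$ contains the full $1$-subdivision of the frame $H$ except for the crossing-box edges bounding tunnel segments of $S$, and since a $1$-subdivision of a graph is the subdivision of a $3$-connected planar graph exactly when the original is, \Cref{lem:frame} (applied to the relevant subgraph) still gives a unique combinatorial embedding of the subdivided frame inside $G_i$. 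The drawing of $G_i$ must realize this embedding, because $G_i$ is drawn crossing-free. In that embedding, the deleted crossing-box edges of $S$ mean that all the tunnel segments of $S$ — end segments, middle segments, and the crossing boxes in between — merge into a single face $F_S$ of the drawn subdivided frame of $G_i$, exactly as depicted in \Cref{fig:tunnel-segment:sge}. The long edge representing $S$ lies in $G_i$ as well, so it must be drawn inside $F_S$ (its endpoints $S_1,S_2$ are on the boundary of $F_S$, at the ends of the two end tunnel segments, and it cannot cross any subdivided frame edge of $G_i$ since those are edges of the same graph).

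Next I would argue that being confined to $F_S$ forces the long edge to traverse the tunnel of $S$ in the correct order. The key observation is that although $F_S$ is one face of $G_i$, its boundary is still structured by the subdivided tunnel boundaries and by the crossing boxes, whose four-cycle structure is inherited from $H$ through the embedding of the rest of the frame (the crossing-box edges of $S$ are missing in $G_i$, but the crossing-box edges and tunnel boundaries belonging to the \emph{other} pseudo-segment crossing $S$ at each crossing box \emph{are} present, since they are crossing-box edges bounding a tunnel segment of that other segment and hence belong to $G_i$ only when that segment also has color $i$ — here I need to be a little careful: the crossing-box edges shared by two consecutive segments of the tunnel of $S$ are precisely the dotted ones of \Cref{fig:tunnel-segment:sge} and are absent from $G_i$, but the crossing-box edges separating the tunnel of $S$ from the tunnel of a \emph{transverse} pseudo-segment $S'$ are present in $G_i$ iff $\chi(S')=i$, and when they are absent they are instead subdivided in $G_{\chi(S')}$; in either case the cyclic structure of the frame pins down the face $F_S$). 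Since $S_1$ and $S_2$ sit at opposite ends, any arc inside $F_S$ from $S_1$ to $S_2$ must pass through every middle tunnel segment and every intervening crossing box, and in the order in which they appear along the tunnel — otherwise the arc would have to cross a tunnel boundary of the subdivided frame of $G_i$, contradicting planarity of the $G_i$ drawing. This is the analogue of \Cref{claim:insideTunnel}, now obtained without blockers: the blockers were only needed in the \geo reduction to defeat the freedom afforded by reusing colors, whereas here the edge-disjointness of the $G_i$'s and the subdivision trick already force the right face.

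I would then package the per-color statements: for every $S$, the long edge of $S$ visits exactly the crossing boxes corresponding to crossings of $S$ in $\mathcal{A}$, in their cyclic order along $S$, and at each such crossing box it passes through the region where the transverse long edge also passes (again because the transverse long edge is confined, by the same argument applied with its own color, to its own face, and the two faces overlap precisely in that crossing box). Concluding as in \Cref{claim:combiEquiv}, the arrangement of long edges — which are genuine straight-line segments since $\varphi$ is a straight-line embedding — is combinatorially equivalent to $\mathcal{A}$, so $\mathcal{A}$ is stretchable. Together with the completeness direction already sketched, this proves \Cref{thm:SGE}.

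The main obstacle I anticipate is the bookkeeping in the middle paragraph: making precise which crossing-box edges survive in $G_i$ and which are subdivided in some other $G_j$, and checking that in every combination the boundary of the merged face $F_S$ still carries enough of the frame structure (the tunnel boundaries on both sides, and the crossing-box cross-edges at transverse crossings) to topologically force the traversal order. One has to rule out the possibility that $F_S$ is so large — because many transverse tunnels of color $i$ are also broken open, or because the subdivided tunnels "cover additional parts of the plane" as warned in \Cref{fig:tunnel-segment:sge} — that the long edge of $S$ could take a shortcut. The resolution is that the relevant obstructions are the \emph{tunnel boundary} paths, which belong to $H$ and are $1$-subdivided in \emph{every} $G_i$ (a tunnel boundary is never a crossing-box edge), hence are always present in $G_i$ and always separate the tunnel of $S$ from everything outside it; so $F_S$, however oddly shaped, is still bounded on its two long sides by these subdivided tunnel boundaries, and the order argument goes through. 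Getting this separation statement stated cleanly is the one place that needs real care rather than routine verification.
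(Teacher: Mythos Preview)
Your outline is the same as the paper's, but your bookkeeping is tangled in a way that hides the one observation that makes everything clean. You worry at length about transverse pseudo-segments $S'$ with $\chi(S')=i$ and whether the crossing-box edges on the outer boundary of $S$'s tunnel survive in $G_i$; but $\chi$ is a \emph{proper} coloring of the intersection graph, so any $S'$ crossing $S$ has $\chi(S')\neq i$. Hence those outer-boundary crossing-box edges (which bound tunnel segments of $S'$) are always present in $G_i$ as $1$-subdivisions, and the entire subdivided-tunnel cycle of $S$ lies in $G_i$ with no case distinction. This is exactly the paper's ``by construction, the entire subdivided tunnel belongs to $G_i$'', and it makes your anticipated obstacle in the last paragraph disappear. (Your ``iff $\chi(S')=i$'' has the condition reversed.)

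Two smaller points where the paper is cleaner. First, you invoke \Cref{lem:frame} for the subdivided frame inside $G_i$, but that graph is $H$ with edges removed and the lemma does not obviously survive deletions; the paper instead applies \Cref{lem:frame} to $H$ itself (which is one of the input graphs, hence drawn planarly in $\varphi$) and then uses the subdivided-tunnel cycle as a barrier in $G_i$. Second, for the traversal-order step, the paper pins down the separating objects explicitly: the crossing-box edges between consecutive segments of $S$'s tunnel are edges of $H$ (so drawn in $\varphi$) whose endpoints sit on the subdivided-tunnel cycle, and they still cut the enclosed region into the pieces that $\ell$ must cross in order. Your version gestures at this but does not isolate which edges do the separating.
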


First note that by \Cref{lem:frame}, we know that $H$ has a unique combinatorial embedding. Now consider a pseudo-segment $S$ with $\chi(S)=i$. Since $G_i$ contains subdivisions of all tunnel boundaries and all connectors, the corresponding long edge $\ell$ must be drawn completely inside tunnels. By construction, the endpoints of $\ell$ are contained only on the boundary of the subdivided tunnel corresponding to $S$, i.e., it must start and end inside its respective subdivided tunnel.   Moreover, by construction, the entire subdivided tunnel belongs to $G_i$. Thus, $\ell$ cannot enter other tunnels in between. While the subdivided tunnel may be covering a superset of the tunnel, $\ell$ must still traverse all of its segments as the crossing box edges between consecutive segments still separate subdivided segments. 

The fact that each long edge still traverses each crossing box in order implies immediately  \Cref{claim:combiEquiv} in our reduction to \sge and the theorem follows.

\newcommand*{\myrulefill}[3][]{%
\noindent
  \makebox[#2]{#1%
    \leaders\hrule height \dimexpr.5ex+.2pt\relax depth \dimexpr -.5ex+.2pt\relax \hfill% Left rule
    \enskip{#3}\enskip% Text (and surrounding spaces)
    \leaders\hrule height \dimexpr.5ex+.2pt\relax depth \dimexpr -.5ex+.2pt\relax \hfill\kern0pt}% Right rule
}
\newcommand{\wavefront}{\myrulefill{\linewidth}{\textbf{WF}}\vspace{2cm}}

\section{Chromatic Number of Pseudo-Segment Arrangement}
\label{sec:chromaticnumber}

\begin{figure}
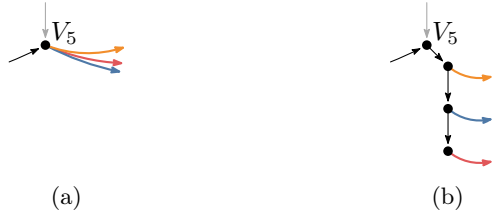

    \centering
    \begin{subfigure}[t]{.3\linewidth}
        \centering
        \includegraphics[page=2]{figures/graph.pdf}
        \subcaption{}
        \label{fig:graph_degree_b}
    \end{subfigure}
    \begin{subfigure}[t]{.3\linewidth}
        \centering
        \includegraphics[page=3]{figures/graph.pdf}
        \subcaption{}
        \label{fig:graph_degree_c}
    \end{subfigure}
    \caption{ 
    Example of the adaption of the graph shown in Figure~\ref{fig:graph_degree_a} in which the vertex $V_5$ has degree 5.
    The three outgoing edges of $V_5$ (b) are instead replaced by a binary tree (c), s.t., no vertex (in the entire graph) has degree larger than 4.}
    \label{fig:graph_degree}
\end{figure}

In this section, we will show that the pseudo-segment arrangement from Schaefer's construction~\cite{S21b} -- the construction is also explained in Section~\ref{sec:reduction_description} -- has chromatic number $\le 30$.
This is done via small adaptions and a detailed analysis of the constructions in Schaefer's proof.
Recall that the reduction creates the \emph{dependence graph} $G$ based on a Richter-Gebert normal form program.
Schaefer now creates a 1-bend orthogonal drawing of $G$.
Here, we modify  $G$ to obtain a new graph $G'$, which is similar to $G$, but vertices of high degree are replaced by a binary tree, which guarantees that every vertex in $G'$ has degree at most four; see \cref{fig:graph_degree}.
We replace $G$ with $G'$ and continue following the construction.
A 1-bend orthogonal drawing $\Gamma$ of $G'$ on a grid of size $O(n^2) \times O(n^2)$ in which all vertices are boxes of small perimeter is created using Theorem~\ref{thm:biedl}.
Before this drawing is further refined, we make our second adaption.
We change how edges are routed in a small local area around every vertex, s.t., exactly one edge connects to the box from the top, while all other edges connect from the bottom (a choice which can be made independently for every vertex; see \cref{fig:box_redrawing_2}) and the drawing still fits on a grid of size $O(n^2) \times O(n^2)$.
Schaefer now obtains a straight line graph drawing with unit edge resolution based on the orthogonal drawing.
A straight line drawing with the same necessary properties can be obtained based on our adapted drawing.
Based on this drawing, it is possible to create a second straight-line drawing on a grid of size $O(n^4) \times O(n^4)$ and with an additional factor of $n$ to the grid size it can be ensured that such a drawing has unit edge resolution (Theorem~\ref{thm:brass}).
Finally the drawing is replaced with a set of pseudo-segments in the form of von Staudt gadgets at positions of specific vertices and transmission gadgets in place of the edges.
We explain how these pseudo-segments can be colored in a way that no two crossing segments have the same color using only 10 colors.
To this end we will utilize the properties guaranteed by our previous adaptions.
Finally Schaefer uses the dual construction by Las Vergnas~\cite{Vergnas86} to ensure that in a stretched version of the pseudo-segment arrangement, no three crossing points are collinear.
This method increases the chromatic number of the arrangement trivially by at most a factor of four.
We show that in in fact increases it by at most a factor of three resulting in a final chromatic number of \chromaticnumber for the pseudo-segment arrangement.

\paragraph{Guaranteeing low degree in $G$}

\begin{figure}
    \centering
    \begin{subfigure}[t]{.18\linewidth}
        \centering
        \includegraphics[page=6]{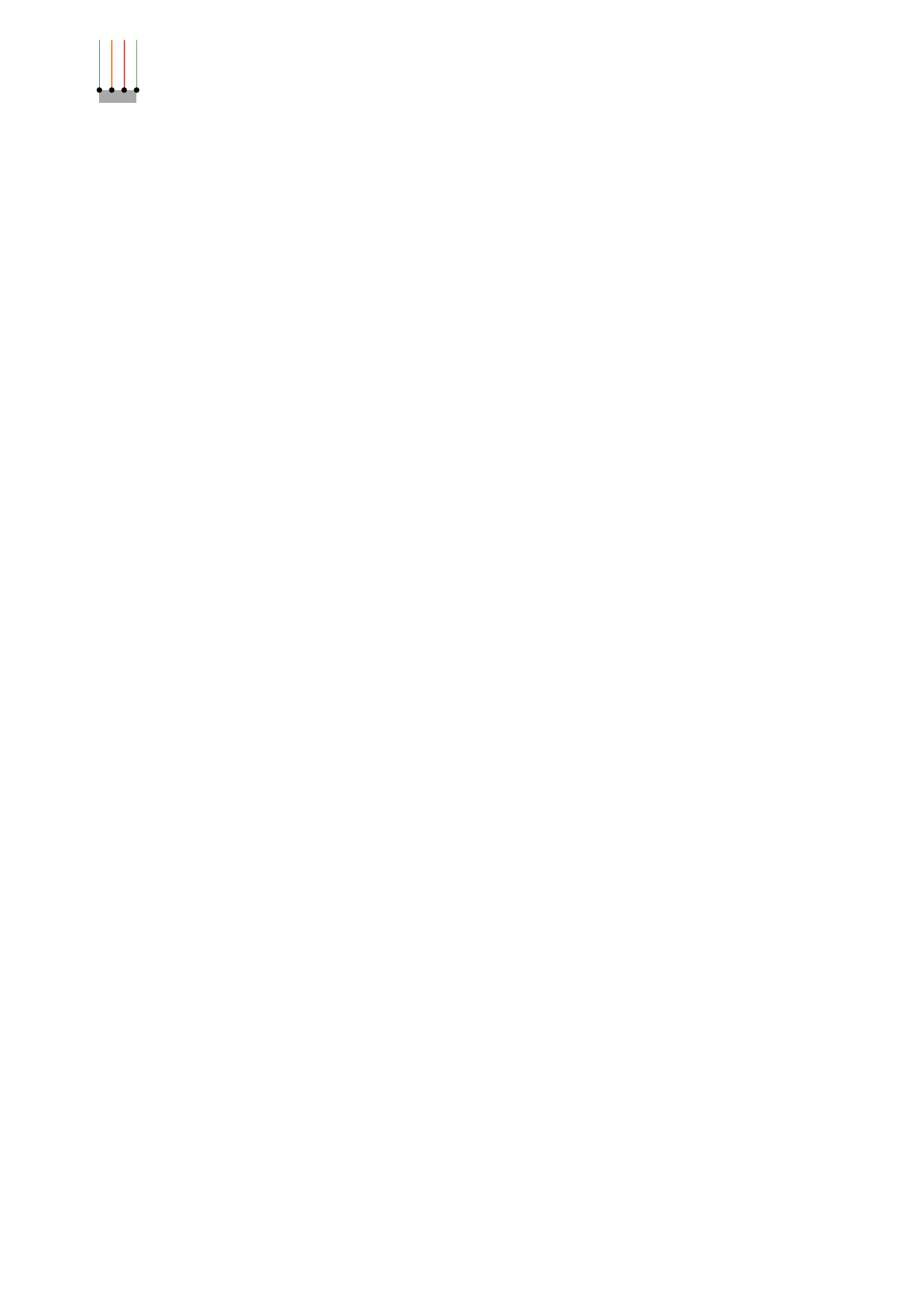}
        \subcaption{}
    \end{subfigure}
    \hfill
    \begin{subfigure}[t]{.18\linewidth}
        \centering
        \includegraphics[page=7]{figures/redrawing_box.pdf}
        \subcaption{}
    \end{subfigure}
    \hfill
    \begin{subfigure}[t]{.18\linewidth}
        \centering
        \includegraphics[page=8]{figures/redrawing_box.pdf}
        \subcaption{}
    \end{subfigure}
    \hfill
    \begin{subfigure}[t]{.18\linewidth}
        \centering
        \includegraphics[page=9]{figures/redrawing_box.pdf}
        \subcaption{}
    \end{subfigure}
    \hfill
    \begin{subfigure}[t]{.18\linewidth}
        \centering
        \includegraphics[page=10]{figures/redrawing_box.pdf}
        \subcaption{}
    \end{subfigure}
    \caption{Any vertex rectangle in an orthogonal drawing can, regardless of the way the edges are connected, be replaced by one, where all edges connect from the top (a-e).
    Every edge is subdivided either 0, 2 or 4 times.% retaining the parity of the number of subdivision vertices on every edge.
    }
    \label{fig:box_redrawing}
\end{figure}

Since every clause in a program in Richter-Gebert normal form computes the value of a computed variable based on at most two previous values, by construction every vertex in $G$ has at most 3 incoming edges (one incoming edge from the binary tree representing the rootnode $s$ as well as two for the input values).
However since an underlying or a computed variable can possibly be used in many computations of computed variables, with a higher index in $\pi$, some of the vertices in Schaefers construction might have a degree larger than 4.
In order to guarantee that the number of intersection on the pseudo-segments of von Staudt gadgets is low, it needs to be guaranteed that the degree of every vertex is at most four.

\begin{lemma}\label{lem:degree_replacement}
    Any drawing of the graph $G$ can be transformed into a drawing of $G$, in which no vertex has degree larger than 4.
\end{lemma}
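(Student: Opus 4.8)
The plan is to obtain the degree bound by the binary-tree replacement licensed by \cref{obs:subdivision}, and then to argue that this replacement can be carried out inside any given drawing by a purely local substitution. First I would classify the edges incident to a vertex of $G$. As already observed, a vertex representing a computed variable $V_i$ has in-degree at most $3$: at most two \emph{operand} edges from the variables occurring in its defining clause, plus the single edge coming from (the binary tree representing) the unit vertex $s$. Consequently the only way a vertex can have degree larger than $4$ is by having large out-degree -- this happens for a variable that is used as an operand in many later clauses (or appears in many order constraints) and, in the unmodified graph, for $s$ itself, which links to every computed variable. So it suffices to push every out-degree down to $1$.

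Next I would apply \cref{obs:subdivision} uniformly. For each vertex $v$ whose out-neighbours are $w_1,\dots,w_d$, replace the out-star $\{vw_1,\dots,vw_d\}$ by a fresh balanced binary tree $T_v$ whose root is joined to $v$ by a single edge and whose $d$ leaves are $w_1,\dots,w_d$, oriented away from $v$. Every internal node and the root of $T_v$ then has in-degree $1$ and out-degree at most $2$, hence degree at most $3$; each former out-neighbour $w_j$ exchanges one old edge for one tree edge, so its degree is unchanged; and $v$ now contributes a single out-edge, so its degree drops to at most $3+1=4$. Only polynomially many new vertices are introduced (one tree has $O(d)$ vertices, and the out-degrees sum to $O(|E(G)|)$), so the subsequent application of \cref{thm:biedl} still takes place on a polynomial-size grid. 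Since $T_v$ merely transmits the value of $v$ to each $w_j$ -- using transmission gadgets along the tree edges, after subdividing so that every root-to-leaf path has even length, which only adds more transmission gadgets and makes the negations cancel -- \cref{obs:subdivision} guarantees that the resulting graph $G'$ is still a valid output of Step~1 encoding the same Richter--Gebert-normal-form program. Thus $\Delta(G')\le 4$.

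Finally, given any drawing of $G$, I would produce a drawing of $G'$ by local substitution: inside a small disk around each modified vertex $v$ -- for a Biedl--Kaufmann drawing, inside the box representing $v$, after rescaling the grid by a constant factor and subdividing incident edges $0$, $2$, or $4$ times as in \cref{fig:box_redrawing} -- place the internal nodes of $T_v$, draw $T_v$ planarly, and route each leaf edge along the curve of the corresponding original edge $vw_j$. This preserves planarity, keeps the drawing on a polynomial grid, and maintains the small-perimeter property of the vertex boxes needed for \cref{thm:brass}. I expect the main obstacle to be exactly this bookkeeping rather than any new idea: one must check that no category of incident edge has been double-counted so that the bound of $4$ is actually met at every original vertex, and that threading transmission gadgets through the trees with the correct parity leaves the correctness of Schaefer's reduction from \textsc{Strict Inequality} intact -- and \cref{obs:subdivision} already does the conceptual heavy lifting here.
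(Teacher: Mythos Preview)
Your proposal is correct and takes essentially the same approach as the paper: both use \cref{obs:subdivision} to replace the outgoing edges of any high-degree vertex by a binary tree rooted at that vertex, reducing its out-degree to $1$ and hence its total degree to at most $4$. You provide considerably more detail than the paper's own proof (which is just a short paragraph describing the tree replacement and referring to \cref{fig:graph_degree}); in particular, your discussion of polynomial size, transmission parity, and how to realize the replacement inside an existing drawing goes beyond what the paper actually argues for this lemma, but none of it is wrong.
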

\begin{proof}
    This can be achieved by using the same structure that Schaefer uses for $s$, i.e., by replacing the set of outgoing edges of a vertex with a binary tree (see also \cref{obs:subdivision}).
    By construction only the vertices representing computed variables of the program in Richter-Gebert normal form can have a degree larger than 5.
    Let $v$ be a vertex with degree 5 or larger.
    Then $v$ has three incoming edges and a set of outgoing edges (one for each later use of a variable).
    We replace the set of outgoing edges with a binary tree, s.t., $v$ has exactly one outgoing edge, which points to the root of the binary tree and for every vertex $u$ of a later computation or condition using the computed variable represented by $v$ in $G$ there is a unique leaf node of the binary tree connected to $u$.
    This is illustrated in Figure~\ref{fig:graph_degree}.
\end{proof}

\paragraph{Guaranteeing the connection direction for incoming edges}

\begin{figure}[t]
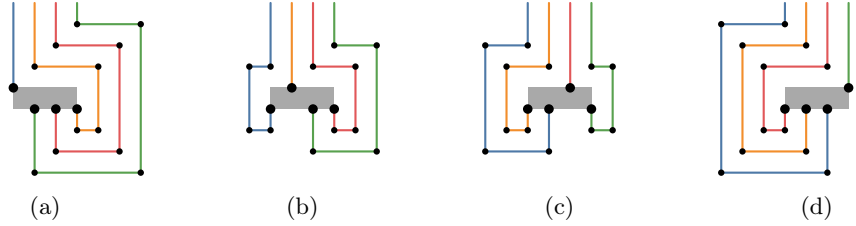

    \centering
    \begin{subfigure}[t]{.2\linewidth}
        \centering
        \includegraphics[page=2]{figures/redrawing_box.pdf}
        \subcaption{}
    \end{subfigure}
    \begin{subfigure}[t]{.2\linewidth}
        \centering
        \includegraphics[page=3]{figures/redrawing_box.pdf}
        \subcaption{}
    \end{subfigure}
    \begin{subfigure}[t]{.2\linewidth}
        \centering
        \includegraphics[page=4]{figures/redrawing_box.pdf}
        \subcaption{}
    \end{subfigure}
    \begin{subfigure}[t]{.2\linewidth}
        \centering
        \includegraphics[page=5]{figures/redrawing_box.pdf}
        \subcaption{}
    \end{subfigure}
    \caption{Any vertex rectangle in an orthogonal drawing to which all edges are connected from the top can be transformed into a drawing, where exactly one (arbitrary) edge connects on the top of the rectangle and all others connect at the bottom.
    Every edge is subdivided either 0 or 4 times.% retaining the parity of the number of subdivision vertices on every edge.
    }
    \label{fig:box_redrawing_2}
\end{figure}

After application of Theorem~\ref{thm:biedl}, we obtain a drawing in which every vertex is represented by a rectangle with a perimeter of at most twice its degree.
We change this drawing so that one edge can be chosen to connect to the rectangle from the top, while all other connections are made from the bottom. This will come at the cost of increasing the number of bends per edge (by \cref{obs:no-collinear}, this is not important for the remainder of the discussion). 

\begin{lemma}\label{lem:connection_direction}
    Any $1$-bend orthogonal drawing of a connected graph $H$ on a $O(n^2)\times O(n^2)$ grid, where each vertex is represented by a rectangle of perimeter of at most twice the degree of the vertex can be transformed into a $9$-bend orthogonal drawing of $H$ on a $O(n^2)\times O(n^2)$ grid, s.t., every rectangle representing a vertex is connected to exactly one edge on its top side and all other edges on its bottom side.
    Moreover, we can freely decide which edge connects from the top independently for every vertex and every edge is subdivided an odd number of times.
\end{lemma}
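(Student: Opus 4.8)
The plan is to prove the lemma by two rounds of purely local surgery around each vertex rectangle, mirroring \cref{fig:box_redrawing} and \cref{fig:box_redrawing_2}, followed by a parity fix.

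\emph{Stage 1: route all incident edges to the top.} Fix a vertex $v$ with rectangle $R_v$ in the given $1$-bend orthogonal drawing. Each edge incident to $v$ attaches to one of the four sides of $R_v$. Edges already attaching on the top are left untouched. An edge attaching on the left or right side is rerouted near $v$ only: it now leaves a point on the top of $R_v$, runs a few units outward past the corner, descends parallel to that side of $R_v$, and splices back into the remainder of the original edge; this local detour uses a bounded number of extra bends and an even number of new subdivision vertices. An edge attaching on the bottom must in addition pass the full height of $R_v$ on one chosen side before reaching its splice point, again with a bounded number of bends and an even number of subdivisions. Edges leaving the same side of $R_v$ are given nested, pairwise non-crossing detour tracks, and — since the perimeter of $R_v$ is at most $2\deg(v)$ — the entire surgery around $v$ is confined to a region of width and height $O(\deg(v))$.

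\emph{Stage 2: keep exactly one edge on the top.} After Stage~1 all edges of $v$ attach to the top of $R_v$, say at points $p_1,\dots,p_k$ from left to right. We choose an index $j$; since this is a local choice, it can be made independently at every vertex. For $i<j$, the edge at $p_i$ is rerouted up, then left across the top of $R_v$, down its left side, and onto the bottom side; for $i>j$ we use the mirror image on the right side; the edge at $p_j$ stays on top. Each such detour uses a bounded number of additional bends and an even number of new subdivision vertices, the tracks are again nested and hence crossing-free, and the surgery stays inside an $O(\deg(v))\times O(\deg(v))$ region around $R_v$. This produces exactly one edge on the top of $R_v$ and all the others on the bottom, as claimed.

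\emph{Grid size, bends, and parity.} To carry out all these local surgeries simultaneously without creating crossings and without blowing up the grid, we first scale the input drawing by a constant factor, which surrounds every rectangle by an empty ``moat'' of constant width; since the maximum degree is a constant in our setting (by \cref{lem:degree_replacement}, which we may assume has been applied), this moat is wide enough to host the Stage~1 and Stage~2 detours of $v$, and the drawing stays on a grid of size $O(n^2)\times O(n^2)$. Counting bends as in the figures, each edge of the resulting drawing — which is a subdivision of $H$ — has at most $9$ bends. Finally, each of the two transformations adds an even number of subdivision vertices to every edge, so after both stages every edge carries an even number of them; subdividing each edge once more, with the new degree-$2$ vertex placed on a straight part of the edge's polyline (so that no bend is added and its rectangle has perimeter $4\le 2\cdot 2$), makes the number of subdivisions per edge odd, as required.

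\emph{Main obstacle.} No single step is conceptually deep; the work lies in the simultaneous bookkeeping. The delicate point is the interaction, at a single vertex $v$, between Stage~1 and Stage~2: after Stage~1 every edge enters $R_v$ from the top, and Stage~2 must then push all but one of them down a side of $R_v$ to the bottom without colliding with the Stage~1 detours of the other edges at $v$; the nesting order of the detour tracks in \cref{fig:box_redrawing,fig:box_redrawing_2} is exactly what makes this consistent. One must also verify that detours around different vertices stay in their own moats, that each edge accrues at most $9$ bends and an even number of subdivisions before the parity fix, and that a constant grid refinement really does leave room for every vertex's surgery.
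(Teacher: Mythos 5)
Your two-stage local surgery is exactly the paper's argument: Stage~1 is \cref{fig:box_redrawing} (reroute every incident edge to the top of the rectangle, adding $0$, $2$ or $4$ bends per edge), Stage~2 is \cref{fig:box_redrawing_2} (push all but one freely chosen edge down to the bottom, adding $0$ or $4$ bends), both confined to a constant-size neighbourhood of each rectangle --- which is enough because \cref{lem:degree_replacement} caps the degree at four --- and both preserving the cyclic order of attachment points, so the grid stays $O(n^2)\times O(n^2)$ and each edge ends with at most $1+4+4=9$ bends. The one place you diverge is the parity claim, and there your accounting goes wrong. In this construction the subdivision vertices of an edge are precisely its bend points (that is how the orthogonal drawing is subsequently converted into a straight-line drawing and then into a chain of inversion gadgets), so the input $1$-bend drawing already carries one subdivision per edge; after the two stages the count is $1+\{0,2,4\}+\{0,4\}$, which is already odd --- this is exactly why the lemma pairs ``$9$ bends'' with ``odd number of subdivisions.'' Your final ``parity fix,'' which inserts one more degree-$2$ vertex per edge, would therefore make every count even, and that would break the downstream requirement that each transmission path consist of an even number of inversion gadgets (each gadget negates the transmitted value). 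Delete that last step and your proof coincides with the paper's.
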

\begin{proof}
Using Lemma~\ref{lem:degree_replacement} we ensure that the maximum degree in $H$ is four.
Let $v$ be a vertex of $H$.
We label the edges connected to $v$ in clockwise order as $a, b, c$ and $d$.
It is possible to move all edges, s.t., they connect to the top of the box representing $v$ (Figures~\ref{fig:box_redrawing}a-e) by adding 0, 2 or 4 bends to every edge and adding at most a constant number of additional grid lines per vertex.
Then we can transform this state into one of the four configurations shown in Figures~\ref{fig:box_redrawing_2}a-d (depending on which edge should be connected on the top), which adds either 0 or 4 bends per edge and can be done by adding again only a constant number of grid lines per vertex. 
Note that the circular order of connection points around the rectangle is fixed and never changed during these operations.
\end{proof}

\paragraph{Lowering the chromatic number of the von Staudt and transmission gadgets}
% \label{sec:gadget_coloring}

We now investigate the von Staudt gadgets for every computation and condition individually to show how they can be colored locally.
Note that we only consider the construction of the gadgets and how the segments can be colored in order to avoid mono-colored crossings.
For a detailed explanation of the correctness of these gadgets, we refer to Schaefer~\cite{S21b} and Richter-Gebert~\cite{Richter1995}.
When comparing the illustrations of the gadgets below to the illustrations found by Schaefer and Richter-Gebert, it is important to keep in mind that we are also illustrating the connected transmission gadgets.
This is important in particular, because all pseudo-segments of the von Staudt gadgets are colored with pairwise different colors and we only save colors (so to speak) by reusing these colors in the connected transmission gadgets.

\begin{figure}
    \centering
    \begin{subfigure}[t]{.4\linewidth}
        \centering
        \includegraphics[page=17]{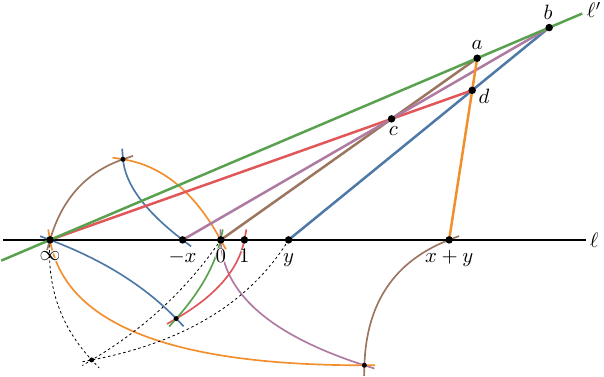}
        \subcaption{$V_i = 1$}
        \label{fig:gad_assignment_1}
    \end{subfigure}
    \quad
    \begin{subfigure}[t]{.4\linewidth}
        \centering
        \includegraphics[page=16]{figures/gadgets.pdf}
        \subcaption{$V_i = x$}
        \label{fig:gad_assignment_x}
    \end{subfigure}
    \caption{Coloring of the gadgets to encode the assignment of $1$ (a) or $x$ (b) to a computed variable $V_i$.}
    \label{fig:gad_assignment}
\end{figure}

\paragraph{Assignment}
This is a computation in which a computed variable is set to the value of 1 or of an underlying variable $X_j$, i.e., $V_i = 1$ or $V_i = X_j$.
As Schaefer points out, this can be done with three or four points on a line, respectively.
We can (but do not need to) reuse any colors when coloring the pseudolines of these gadgets, which are illustrated in Figures~\ref{fig:gad_assignment_1} and~\ref{fig:gad_assignment_x}.
Note that in the first gadget, since all pseudolines are colored in pairwise different colors, we do not need to consider if the transmission gadgets are connected from above or below $\ell$.
The second gadget uses the same color for pseudolines of two different transmission gadgets.
Since such lines might cross, if the transmission gadgets connect from the same side of $\ell$, i.e., both from above or both from below.
Therefore we have to guarantee the two specific transmission gadgets are connecting from two different sides.
This guarantee can be made by using the rerouting around vertices as described above.

\paragraph{(Negated) Addition}
The negated addition is a computation, which assigns the value of the sum of two variables (underlying or computed) to a new computed variable.
To compute the sum $x+y$, the gadget requires the input of the values $-x$ and $y$.
The negated value of $-x$ can be obtained with the negation gadget shown in the next paragraph.
As shown in Figure~\ref{fig:gad_addition}, the pseudolines of the von Staudt gadget (bold drawn curves) are colored in 7 pairwise different colors and three of the four connected transmission gadgets (thin curves) entirely reuse these colors, requiring only three additional colors for the fourth transmission gadget (dashed curves) and resulting in a total of 10 required colors.
We highlight the correctness of this coloring by showing the individual ``pages'', i.e., pseudolines with the same color in Figure~\ref{fig:gad_addition_pages}.
It is also important to note that the coloring is still valid as shown if the point labeled $y$ is identified with either the point $0$ or the point $1$, which correspond to an alternative variant of the assignment gadget (as described by Schaefer) and the sum $x+1$, respectively.

\begin{figure}
    \centering
        \includegraphics[page=1]{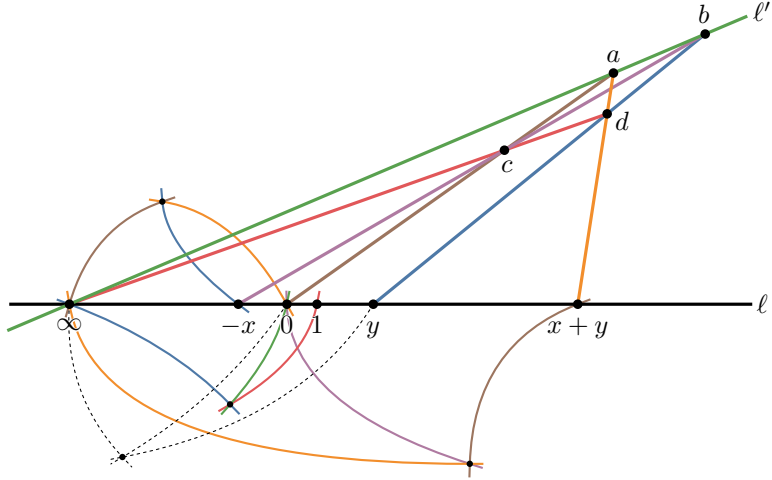}
    \caption{Coloring of the (inverted) addition gadget with four connected inversion gadgets.
    The dashed lines indicate that this inversion gadget is not reusing colors from the gadget, but instead introducing three entirely new colors.}
    \label{fig:gad_addition}
\end{figure}

\begin{figure}
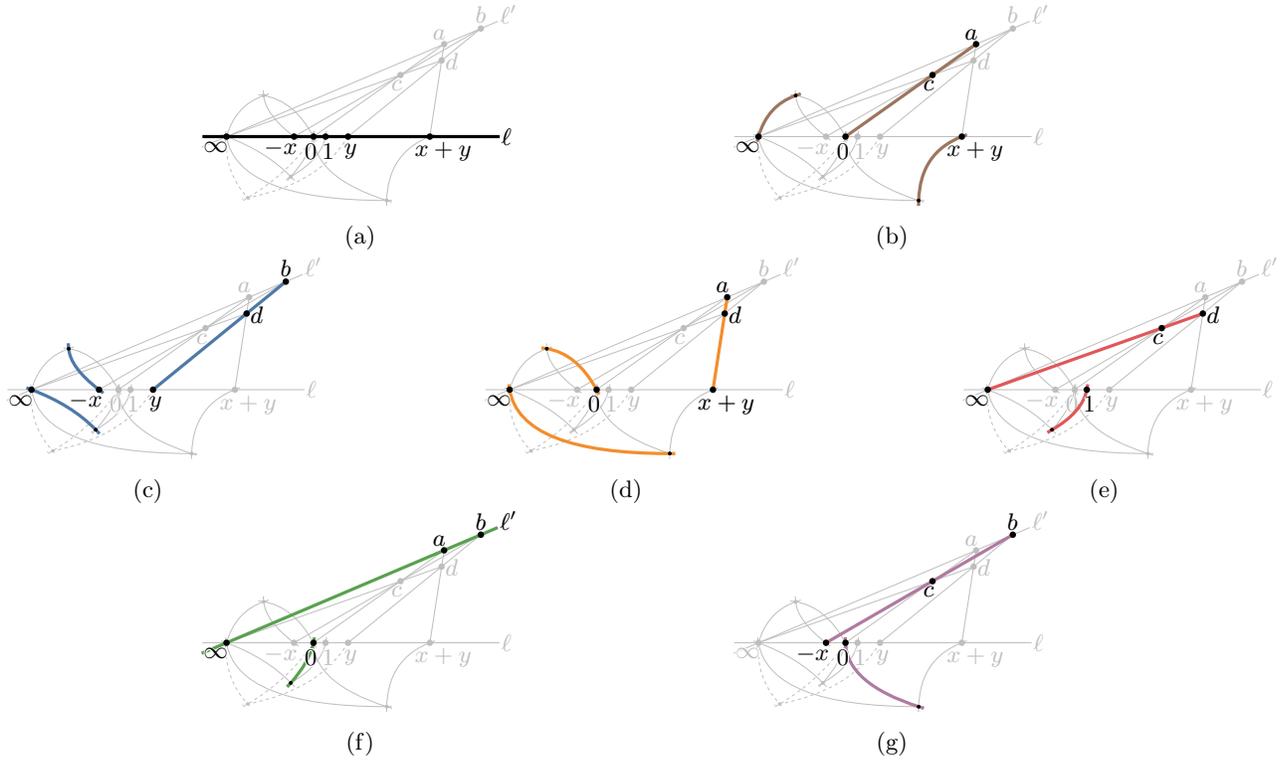

    \centering
    \begin{subfigure}[t]{.4\linewidth}
        \centering
        \includegraphics[page=2]{figures/gadgets.pdf}
        \subcaption{}
    \end{subfigure}
    \quad
    \begin{subfigure}[t]{.4\linewidth}
        \centering
        \includegraphics[page=3]{figures/gadgets.pdf}
        \subcaption{}
    \end{subfigure}
    \begin{subfigure}[t]{.23\linewidth}
        \centering
        \includegraphics[page=4]{figures/gadgets.pdf}
        \subcaption{}
    \end{subfigure}
    \hfill
    \begin{subfigure}[t]{.23\linewidth}
        \centering
        \includegraphics[page=5]{figures/gadgets.pdf}
        \subcaption{}
    \end{subfigure}
    \hfill
    \begin{subfigure}[t]{.23\linewidth}
        \centering
        \includegraphics[page=6]{figures/gadgets.pdf}
        \subcaption{}
    \end{subfigure}
    \begin{subfigure}[t]{.4\linewidth}
        \centering
        \includegraphics[page=7]{figures/gadgets.pdf}
        \subcaption{}
    \end{subfigure}
    \quad
    \begin{subfigure}[t]{.4\linewidth}
        \centering
        \includegraphics[page=8]{figures/gadgets.pdf}
        \subcaption{}
    \end{subfigure}
    \caption{Illustration of the different ``pages'' containing segments colored in the same color.
    Note that no page contains a mono-colored crossing.
    }
    \label{fig:gad_addition_pages}
\end{figure}

\paragraph{Negation}
Since one of the values of the negated addition gadget is required to be a negative value, the von Staudt gadget for obtaining the negative value of a given value $x$ can be seen as a special case of the negated addition gadget for the computation $x+0$.
Specifically the second value is identified with the point representing the value 0.
Nevertheless a similar coloring as for the addition gadget is possible.
Since a negation only requires the scale and one input value as well as the output value to be transmitted to or from the gadget, there is no fourth transmission gadget needed and a similar coloring as for the negated addition gadget results in 7 required colors (see Figure~\ref{fig:gad_negation}). 

\paragraph{Multiplication}
Similar to the addition gadget, the multiplication gadget (representing the computation assigning the value of a product $x\cdot y$ to a computed variable) has four connected transmission gadgets, three of which are completely colored by reusing the seven colors of the pseudolines of the von Staudt gadget, resulting again in 10 needed colors.
Such a coloring is shown in Figure~\ref{fig:gad_multiplication}.
The multiplication requires one of the factors to be represented as its reciprocal value $1/x$, which is achieved with the inversion gadget.

\begin{figure}
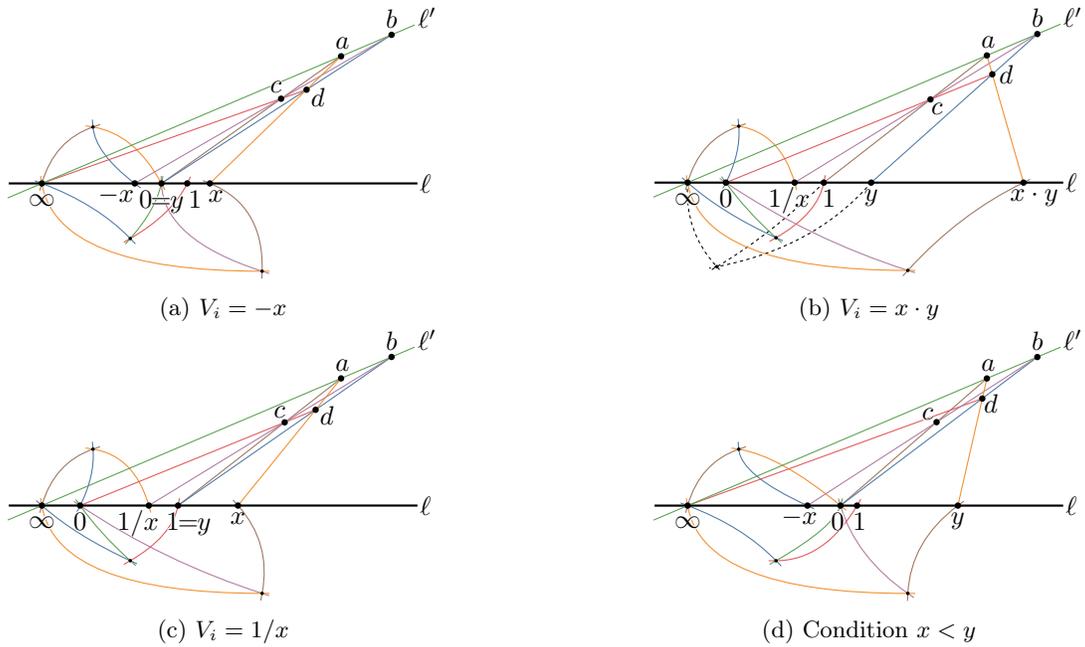

    \centering
    \begin{subfigure}[t]{.48\linewidth}
        \centering
        \includegraphics[page=22]{figures/gadgets.pdf}
        \subcaption{$V_i = -x$}
        \label{fig:gad_negation}
    \end{subfigure}
    \hfill
    \begin{subfigure}[t]{.48\linewidth}
        \centering
        \includegraphics[page=23]{figures/gadgets.pdf}
        \subcaption{$V_i = x\cdot y$}
        \label{fig:gad_multiplication}
    \end{subfigure}
    \hfill
    \begin{subfigure}[t]{.48\linewidth}
        \centering
        \includegraphics[page=24]{figures/gadgets.pdf}
        \subcaption{$V_i = 1/x$}
        \label{fig:gad_inversion}
    \end{subfigure}
    \hfill
    \begin{subfigure}[t]{.48\linewidth}
        \centering
        \includegraphics[page=25]{figures/gadgets.pdf}
        \subcaption{Condition $x<y$}
        \label{fig:gad_condition}
    \end{subfigure}
    \caption{Colorings of the four von Staudt gadgets for negating a variable value (a), multiplying to variable values (b), creating the reciprocal of a variable value (c) and encoding an inequality condition between two variables (d).}
    \label{fig:gad_rest}
\end{figure}

\paragraph{Inversion}
A similar relation as between the negation and addition gadget can be seen between the inversion and the multiplication gadget.
Specifically the inversion gadget, which represents a computation assigning the reciprocal value of a underlying or computed value to a new computed value.
The coloring for the inversion gadget is shown in Figure~\ref{fig:gad_inversion}.
Due to only three needed transmission gadgets, the required colors are at most 7.

\paragraph{Condition}
Finally a program in Richter-Gebert normal form can compare the value of two variables in a condition.
The von Staudt gadget for the condition (with its coloring shown in Figure~\ref{fig:gad_condition}) does not produce an output, which can be transmitted to a new computation and therefore only requires three transmission gadgets for the scale and two input variables.
The coloring can be done with at most $7$ colors.
\medskip

We have assumed that in the (negated) addition, negation, multiplication, inversion and condition gadgets exactly one transmission gadget connects from above and we can choose, which one.
This is guaranteed with Lemma~\ref{lem:connection_direction}.
Note that the pseudo-segment of the up to three transmission gadgets, which are connected from below are all covered in pairwise different colors, and therefore can cross freely.
This removes any dependence on the order of the edges incident to the vertex rectangle in the orthogonal drawing of $G$.

Based on the colorings described in the previous paragraphs and illustrated in the corresponding figures, we state the following observation.

\begin{lemma}\label{lem:staudt_gadget_coloring}
    Every von Staudt gadget including all pseudo-segments of connected transmission gadget, which intersect at least one pseudo line segment of the gadget can locally be colored with at most 10 colors if we can choose the direction from which all transmission gadgets connect.
\end{lemma}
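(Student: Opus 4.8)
The plan is to prove the lemma by a direct case analysis over the six gadget types of the Richter--Gebert normal form --- assignment (Types~\ref{type:unit} and~\ref{type:equality}), (negated) addition (Type~\ref{type:negated-addition}), negation (Type~\ref{type:negation}), (inverted) multiplication (Type~\ref{type:invertedMultiplication}), inversion (Type~\ref{type:inversion}), and the comparison condition $x_1 < x_2$ --- reusing the explicit colorings exhibited in \cref{fig:gad_assignment,fig:gad_addition,fig:gad_addition_pages,fig:gad_rest}. For each type the argument has the same shape: the pseudo-segments of the von Staudt gadget proper form a fixed arrangement of at most seven curves, which we color with seven pairwise distinct colors, so that no monochromatic crossing occurs inside the gadget; it then remains to extend the coloring to the (at most four) attached transmission gadgets, each a fixed arrangement of three pseudo-segments, so that no new monochromatic crossing is created, neither between a transmission gadget and the gadget proper, nor between two transmission gadgets attached to the same baseline $\ell$.

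The structural point I would isolate first is the role of \cref{lem:connection_direction}: of the up to four transmission gadgets meeting $\ell$ we may assume that exactly one connects from above and the remaining at most three connect from below, and we may freely choose which one is on top. The three ``below'' transmission gadgets cross $\ell$ on a common side, so a priori their pseudo-segments could cross one another; but the coloring only assigns to these three transmission gadgets colors that are pairwise distinct across the three of them in the region where they could meet, so any such crossing is automatically bichromatic. This is precisely what removes the dependence on the cyclic order of the edges at the corresponding vertex rectangle in the orthogonal drawing of $G$, which is what \cref{obs:no-collinear,lem:connection_direction} were set up for. For the remaining at most one transmission gadget --- the ``above'' one, present in the addition and multiplication gadgets --- its three pseudo-segments genuinely cross the von Staudt gadget, so they receive three fresh colors, and together with the seven colors of the gadget proper this yields the bound $7 + 3 = 10$.

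For the individual gadgets I would then check, one color class (``page'') at a time, as displayed in \cref{fig:gad_addition_pages} for the addition gadget, that no color class contains two crossing curves. In the assignment gadgets the baseline carries only three or four labeled points and the attached transmission gadgets can reuse colors freely; the only subtlety, for $V_i = X_j$, is that two transmission gadgets must be forced to opposite sides of $\ell$, which is again \cref{lem:connection_direction}. In the negation, inversion, and condition gadgets there is no output to transmit, hence at most three transmission gadgets, all of which can reuse the seven colors, giving a bound of $7$. In the (negated) addition and (inverted) multiplication gadgets there are four transmission gadgets, three reusing and one fresh, giving $10$; here one also verifies that the displayed colorings stay valid when the point $y$ is identified with $0$ or with $1$, covering the degenerate variants (the alternative assignment, the sum $x+1$, and the negation/inversion specializations). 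Taking the maximum over all six types gives the claimed bound of $10$.

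The main obstacle I anticipate is bookkeeping rather than conceptual difficulty: for each of the six fixed arrangements one must verify that the specific color assignment drawn in the figures is proper, i.e., that each of its at most ten color classes is crossing-free, including the crossings between the gadget proper and each attached transmission gadget and the crossings among the three same-side transmission gadgets. This is a finite check that the figures are designed to make transparent; the one genuinely delicate ingredient is making the same-side crossings harmless, which is exactly why the lemma is stated conditionally on being free to choose the connection directions.
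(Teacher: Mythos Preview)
Your proposal is correct and follows essentially the same approach as the paper: a case analysis over the six gadget types, coloring the at most seven pseudo-segments of each von Staudt gadget with pairwise distinct colors, reusing these colors for the (up to three) transmission gadgets connecting from below, and spending three fresh colors on the single transmission gadget connecting from above, with \cref{lem:connection_direction} guaranteeing the freedom to choose sides. One small slip: negation and inversion do have an output to transmit; they end up with only three transmission gadgets because they have a single input (scale, one input, one output), not because the output is absent---only the condition gadget genuinely lacks an output.
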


It remains to consider the transmission gadgets.
Recall that a transmission gadget is made up of a chain of inversion gadgets.
A single inversion gadget consists of five pseudo-segments arranged as shown in Figure~\ref{fig:inversion_gadget}.
We will call the pseudolines $\ell_1$ and $\ell_2$ the \emph{end-segments} of the inversion gadget, while the other three pseudolines are referred to as the \emph{inner segments} of the gadget.
In a straight line segment realization of this gadget, the relative distances between the three points on line $\ell$ are preserved as the relative distances between the three points on $\ell'$.
In Schaefers construction, the edges between nodes (which represent von Staudt gadgets) are subdivided (an odd number of times, see also Figure~\ref{fig:subdivision_of_path}) and then every edge in these subdivided paths is replaced by an inversion gadget (as shown in Figure~\ref{fig:gad_transmission}).

\begin{figure}
    \centering
    \begin{subfigure}[t]{.15\linewidth}
        \centering
        \includegraphics[page=1,width=\linewidth]{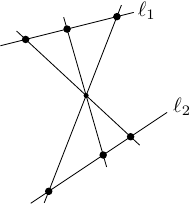}
        \subcaption{}
        \label{fig:inversion_gadget}
    \end{subfigure}
    \hfill
    \begin{subfigure}[t]{.15\linewidth}
        \centering
        \includegraphics[page=3,width=\linewidth]{figures/transmission.pdf}
        \subcaption{}
        \label{fig:subdivision_of_path}
    \end{subfigure}
    \hfill
    \begin{subfigure}[t]{.67\linewidth}
        \centering
        \includegraphics[page=4,width=\linewidth]{figures/transmission.pdf}
        \subcaption{}
        \label{fig:gad_transmission}
    \end{subfigure}
    \caption{Edges of $G$ are represented by chains of inversion gadgets (a).
    Any path in $G$ is subdivided an odd number of times and at least of length 4 (b).
    The gadgets are placed according to the vertex positions in the subdivision of $G$ (c).}
    \label{fig:enter-label}
\end{figure}

We now investigate how this sequence of inversion gadgets can be colored. In particular we want to proof the following statement.

\begin{lemma}\label{lem:independence_transmission}
    Assume we are given 10 available colors $C= \{c_1, \ldots, c_{10}\}$ and a sequence of $3$ inversion gadgets, where the inner segments of the first inversion gadget are already assigned a color each.
    Then for every sorted triple of pairwise different colors $c, c', c''$ there exists a coloring of all pseudo-segments in the gadgets, which respects the already assigned colors and the inner pseudoline-segments of the third inversion gadgets are colored with the colors $c, c', c''$.
\end{lemma}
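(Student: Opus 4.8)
The goal is to show that a chain of three inversion gadgets acts as a "universal recoloring channel": no matter how the inner segments of the first gadget are colored, and no matter which sorted triple $c,c',c''$ we want on the inner segments of the third gadget, we can complete a proper coloring consistently. The plan is to analyze a single inversion gadget first and understand exactly which color constraints propagate from its left end-segment/inner-segments to its right end-segment/inner-segments, and then chain this analysis three times. The key structural fact (from Figure~\ref{fig:inversion_gadget}) is that within one inversion gadget all five pseudo-segments pairwise cross (or enough of them cross) so that they must all receive pairwise-distinct colors; hence one inversion gadget "uses up" five colors out of the ten available, leaving five free colors that can be chosen arbitrarily for the next gadget's fresh segments. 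The end-segments $\ell_1,\ell_2$ are shared with the neighbouring gadgets, so they are the interface through which constraints pass.

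The main steps I would carry out are: (1) Fix notation for the three gadgets $I_1,I_2,I_3$, writing $A_j$ for the (ordered) triple of inner-segment colors of $I_j$ and $e_j, f_j$ for the two end-segment colors, where $f_j = e_{j+1}$ is the shared segment between consecutive gadgets. (2) Establish the local constraint for one gadget: given $A_1$ (three distinct colors) already fixed, and given the left end-segment color $e_1$ (distinct from those three), show that for $I_1$ the shared segment $f_1 = e_2$ can be chosen to be any color not in $A_1 \cup \{e_1\}$ — so at least $10 - 4 = 6$ choices, in particular we have lots of freedom. (3) Show that the inner segments $A_2$ of the second gadget can then be set to any triple of distinct colors avoiding $\{e_2\}$ and avoiding the one or two other segments of $I_2$ that are forced by crossings with already-colored segments of $I_1$; since $I_2$ has only five segments total and at most a constant number are pre-constrained, the five colors for $I_2$ are essentially free. (4) Iterate once more: from $A_2$ and $e_3$, choose $A_3 = (c,c',c'')$ as desired — this is possible because $I_3$'s five segments must be pairwise distinct, we are free to pick them, and the only external constraint on $A_3$ comes through the shared end-segment $e_3$, which we may have already pinned but which can be chosen (back in step 3's analysis of $I_2$) to avoid $\{c,c',c''\}$. (5) Verify that the "pages" (color classes) remain crossing-free, invoking the same per-page check illustrated in Figure~\ref{fig:gad_addition_pages} for the gadget colorings, together with \cref{obs:no-collinear} so that geometric placement is not an issue.

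The heart of the argument — and the step I expect to be the main obstacle — is a careful bookkeeping of exactly which segments are forced to be distinct from which, i.e., reading off the intersection pattern of a single inversion gadget and of two consecutive inversion gadgets from Figure~\ref{fig:inversion_gadget} and Figure~\ref{fig:gad_transmission}. One needs that the "width" of the constraint bottleneck is at most the number of segments of one gadget plus its interface, which must be comfortably below $10$, so that after each gadget there remain enough free colors to realize an arbitrary target on the next gadget's inner segments; getting three gadgets (rather than two) is precisely what gives the slack needed to decouple the incoming colors $A_1$ from the outgoing target $(c,c',c'')$, because the middle gadget $I_2$ absorbs the incoming constraints and the third gadget then starts from a "clean" interface. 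I would structure the proof so that the chain of length three is seen as: $I_1$ converts arbitrary $A_1$ into some controllable interface color $e_2$; $I_2$ is a fully flexible relay that can aim its right interface $e_3$ away from any desired three colors; and $I_3$ then freely installs $(c,c',c'')$. The remaining verifications (proper coloring within each gadget, consistency on shared segments) are routine case checks against the figures.
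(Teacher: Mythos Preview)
Your plan is correct in outline and would go through, but it is considerably more elaborate than what is actually needed, and it misses the one structural observation that makes the paper's proof a two-line calculation. In an inversion gadget the two end-segments $\ell_1,\ell_2$ do \emph{not} cross one another (only the three inner segments cross both rails); consequently, in a chain of three gadgets the four end-segments are pairwise non-crossing and can all receive the \emph{same} colour. The paper exploits exactly this: it sets $C'=C\setminus\{c_1,c_2,c_3,c,c',c''\}$ (so $|C'|\ge 4$), picks four colours $d,e,f,g\in C'$, gives \emph{every} end-segment colour $d$, colours the middle gadget's inner segments with $e,f,g$, and puts $c,c',c''$ on the third gadget's inner segments. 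All crossings are then automatically bichromatic because adjacent blocks of inner segments use disjoint colour sets and every inner segment crosses only end-segments of colour $d\notin$ any of those sets. No interface bookkeeping of $e_j,f_j$ is required at all.

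Your approach---tracking end-segment colours separately and using $I_2$ as a ``relay'' to decouple $A_1$ from $A_3$---arrives at the same conclusion but with an extra layer of case analysis that the single-colour-for-all-rails trick eliminates. Two small corrections: your assertion that all five segments of a gadget pairwise cross is false (the end-segments do not), though you hedge on this and the overcounting is harmless; and your step~(5) invocation of \cref{obs:no-collinear} is misplaced---that observation concerns the embedding of the dependence graph and has nothing to do with the colouring argument here.
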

\begin{proof}
    Assume w.l.o.g., that the inner segments of the first inversion gadget are colored in the colors $c_1, c_2, c_3$.
    Let $C' = C \setminus \{c_1, c_2, c_3, c, c', c''\}$.
    Note that $\vert C'\vert\geq 4$.
    Let $d, e, f, g$ be four pairwise different colors in $C'$.
    We can always color all end segments using $d$ and color the inner segments of the second inversion gadget using $e, f$ and $g$.
    Since no other segment in the arrangement is colored in the colors $c, c', c''$ we can use them to color the inner segments of the third inversion gadget and can do so in any desired order.
    This coloring is shown in \cref{fig:transmission_independence,fig:transmission_independence_crossing}.
\end{proof}

Note that the proof does not assume that any of the inner segments of the inversion gadgets are crossing-free.
Therefore this lemma holds true even if the inner segments of two consecutive inversion gadgets cross as shown in Figure~\ref{fig:transmission_independence_crossing}.

\begin{figure}
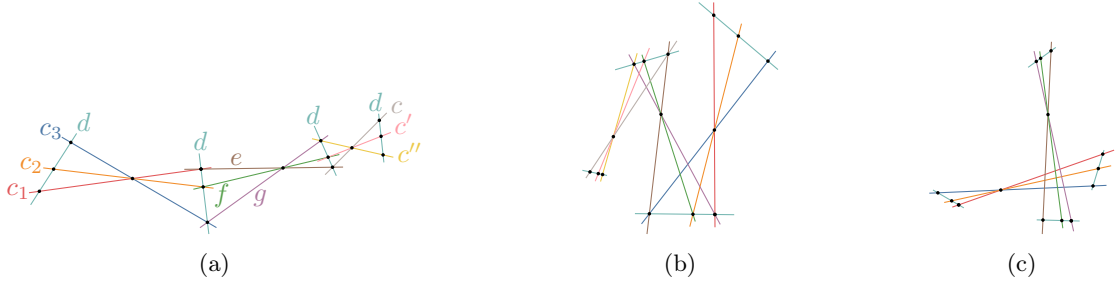

    \centering
    \begin{subfigure}[t]{.45\linewidth}
        \centering
        \includegraphics[page=5]{figures/transmission.pdf}
        \subcaption{}
        \label{fig:transmission_independence}
    \end{subfigure}
    \hfill
    \begin{subfigure}[t]{.25\linewidth}
        \centering
        \includegraphics[page=6]{figures/transmission.pdf}
        \subcaption{}
        \label{fig:transmission_independence_crossing}
    \end{subfigure}
    \hfill
    \begin{subfigure}[t]{.25\linewidth}
        \centering
        \includegraphics[page=7]{figures/transmission.pdf}
        \subcaption{}
        \label{fig:transmission_independence_intersect}
    \end{subfigure}
    \caption{The coloring of the inner segments of the third inversion gadget in the sequence of inversion gadgets (a) is independent of the fixed coloring of the three inner segments of the first inversion gadget with colors $c_1, c_2$ and $c_3$.
    This is also true if the inner segments of consecutive inversion gadgets intersect (b).
    Any crossing of two inversion gadgets only involves at most $6$ pseudo-segments (c).}
    \label{fig:transmission_2}
\end{figure}

Since the paths between gadgets in Schaefers construction are at least of length four, we can state that the coloring of the middle segment of the inversion gadgets connected directly to any von Staudt gadget can be done exactly as described above and shown in Figures~\ref{fig:gad_assignment},~\ref{fig:gad_addition} and~\ref{fig:gad_rest}.

Finally since the placement of all gadgets is done according to a 1-planar drawing, the inner segments of two different inversion gadgets of two different sequences of inversion gadgets representing two edges might cross.
However since the drawing is 1-planar it is sufficient to subdivide every edge of the drawing twice to guarantee that if two inversion gadgets $A$ and $B$ of the same sequence are both involved in a crossing with inversion gadgets of other sequences, then $A$ and $B$ can not be consecutive in the sequence.
Therefore we can use Lemma~\ref{lem:independence_transmission} to consider the coloring of the inner segments of two crossing inversion gadgets independently of any other crossing.
It is then obvious that this can be done with 10 colors, since only 6 pseudolines cross (see Figure~\ref{fig:transmission_independence_intersect}).

As a result of this we can state the following lemma.

\begin{lemma}\label{lem:transmission_coloring}
    Every inversion gadget in Schaefers construction can be colored with at most 10 colors even if all inner segments of inversion gadgets which intersect a von Staudt gadget have fixed colors.
\end{lemma}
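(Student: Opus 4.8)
The plan is to assemble the ingredients established above — \cref{lem:staudt_gadget_coloring,lem:connection_direction,lem:independence_transmission} — into one global proper $10$-coloring of all pseudo-segments of all inversion gadgets that is consistent with the colorings already chosen inside the von Staudt gadgets; the asserted bound for ``every inversion gadget'' then follows because such a global coloring in particular colors each gadget. Concretely, I would describe a coloring procedure that sweeps along each transmission gadget, viewed as a chain of inversion gadgets running along a subdivided edge of the dependence graph $G'$. \emph{Step 1 (boundary conditions).} Using \cref{lem:connection_direction} to fix, for every vertex of $G'$, which incident transmission gadget attaches ``from the top'', \cref{lem:staudt_gadget_coloring} gives a valid $10$-coloring of each von Staudt gadget together with all pseudo-segments of transmission gadgets that cross it. This pins down the colors of the inner segments of the constantly many inversion gadgets directly incident to each von Staudt gadget — precisely the ``fixed colors'' in the statement — using only the palette $C=\{c_1,\dots,c_{10}\}$.

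\emph{Step 2 (propagation along a chain).} For a single transmission gadget, both of its ends carry inner-segment colors prescribed by Step~1. Since every edge of $G'$ is subdivided an odd number of times and its subdivided path has length at least four (and is subdivided twice more for Step~3), the chain of inversion gadgets is long enough to apply \cref{lem:independence_transmission} repeatedly: starting from the fixed colors at one end, after every three inversion gadgets we may reset the inner-segment colors of the next gadget to an arbitrary sorted triple, so in particular we can steer the chain so that its colors match those prescribed at its other end. This yields a consistent $10$-coloring of every inversion gadget of the chain whose inner segments do not additionally meet a gadget of another chain.

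\emph{Step 3 (inter-chain crossings).} Because the drawing of $G'$ produced before the gadget replacement is $1$-planar and every edge was subdivided twice, two inversion gadgets $A,B$ of one chain that are both involved in a crossing with inversion gadgets of other chains are never consecutive in that chain; hence by \cref{lem:independence_transmission} the inner segments of each such gadget may be recolored freely without disturbing the rest of its chain or the choices of Steps~1--2. At each crossing of two inversion gadgets only at most six pseudo-segments meet (\cref{fig:transmission_independence_intersect}), so with ten available colors we can always pick the (at most two) triples there to avoid a monochromatic crossing. Performing this at every inter-chain crossing and then completing the chains as in Step~2 produces the claimed global $10$-coloring.

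The step I expect to be the main obstacle is Step~3: one must verify that the double subdivision genuinely forces every chain to control at most one inversion gadget per crossing at a time, so that the independent local choices made at different crossings never collide, and that these choices remain compatible with the resets used for Step~2. The reason it goes through is the uniform constant slack — any crossing of two inversion gadgets touches only six pseudo-segments while ten colors are available, and each von Staudt interface is handled entirely within a single gadget by \cref{lem:staudt_gadget_coloring} — so once the relevant inversion gadgets are isolated via \cref{lem:independence_transmission}, all remaining coloring decisions are purely local and always feasible.
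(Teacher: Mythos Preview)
Your proposal is correct and follows essentially the same route as the paper. The paper does not give a formal proof of this lemma; rather, it is stated as a summary of the discussion immediately preceding it, which combines exactly the ingredients you list: the fixed boundary colors at the von Staudt gadgets (your Step~1), the propagation along a chain via \cref{lem:independence_transmission} using that paths have length at least four (your Step~2), and the handling of inter-chain crossings via the $1$-planarity of the underlying drawing together with the extra double subdivision and the observation that only six pseudo-segments meet at such a crossing (your Step~3). Your write-up is in fact slightly more explicit than the paper's, in particular in noting that one must steer the chain to match the prescribed colors at \emph{both} ends.
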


We are now prepared to combine the so far stated results into the following theorem.

\begin{theorem}
    Schaefers non-uniform pseudo-segment arrangement construction can be adapted, s.t., all segments can be colored using at most 10 colors.
\end{theorem}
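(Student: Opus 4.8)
The plan is to assemble the statement from the lemmas proved above, verifying only that the various local colorings are globally consistent. First I would recall the modified pipeline: start from Schaefer's dependence graph $G$ and replace it by a graph $G'$ in which every vertex has degree at most four (\cref{lem:degree_replacement}); produce a $1$-bend orthogonal drawing of $G'$ on an $O(n^2)\times O(n^2)$ grid via \cref{thm:biedl}; reroute the edges locally using \cref{lem:connection_direction} so that at every vertex exactly one incident edge connects from the top and all others from the bottom, with the top edge chosen freely and independently per vertex; then apply \cref{thm:brass} to obtain a straight-line drawing in which no three vertices are collinear and no two vertices share a grid line. All of these modifications only subdivide edges and change how they are routed, so by \cref{obs:subdivision} and \cref{obs:no-collinear} they are legitimate outputs of the relevant construction steps. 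Finally, replace each vertex by the appropriate von Staudt gadget and each (subdivided) edge by a transmission gadget, i.e.\ a chain of at least four inversion gadgets.

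Next I would carry out the coloring. For every computation or condition, \cref{lem:staudt_gadget_coloring} colors the corresponding von Staudt gadget together with all transmission-gadget pseudo-segments that cross it using at most $10$ colors, \emph{provided} we may choose which incident transmission gadget connects from which side; \cref{lem:connection_direction} supplies exactly this freedom, so these local colorings can be performed at every gadget. Moreover, the up-to-three transmission gadgets connecting from the bottom have their inner segments colored with pairwise distinct colors and therefore may cross freely, which removes any dependence on the cyclic order of edges around the vertex rectangle. It then remains to colour the interior of each transmission chain so that it interpolates between the colours forced at its two ends. Since each chain has length at least four, the inner segments pinned by the two incident von Staudt colorings belong to the first and last inversion gadget of the chain, and \cref{lem:independence_transmission} guarantees that any pinned sorted colour triple at one end can be transported to an arbitrary sorted colour triple three inversion gadgets later (padding, if necessary, with arbitrary colorings of extra inversion gadgets using the at-least-four spare colors in $C'$). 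This reconciles the two ends and is precisely the content of \cref{lem:transmission_coloring}.

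The last point, and the only genuine subtlety, is the interaction between transmission chains of different edges: because the gadget layout follows a $1$-planar drawing, inner segments of inversion gadgets from two distinct chains may cross. Subdividing every edge of the drawing twice ensures that whenever two inversion gadgets $A,B$ of one chain each meet an inversion gadget of another chain, $A$ and $B$ are non-consecutive; hence by \cref{lem:independence_transmission} their colorings (and those of their crossing partners) may be chosen independently of everything else, and each such crossing involves at most six pseudo-segments (see \cref{fig:transmission_independence_intersect}), so $10$ colors comfortably suffice to avoid a monochromatic crossing there. Combining the three ingredients — $10$-color local colorings at von Staudt gadgets, $10$-color interpolation along each chain, and $10$-color resolution of inter-chain crossings — yields a proper $10$-coloring of the entire non-uniform pseudo-segment arrangement. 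I expect the main obstacle to be purely bookkeeping: checking that no step of the modified construction ever forces a vertex of degree larger than four or destroys the per-vertex freedom to choose the top edge, since both properties are exactly what keep every von Staudt gadget colorable with only $10$ colors.
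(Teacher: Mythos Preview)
Your proposal is correct and follows essentially the same approach as the paper's proof: both assemble the result from \cref{lem:degree_replacement}, \cref{lem:connection_direction}, \cref{lem:staudt_gadget_coloring}, and \cref{lem:transmission_coloring}, first modifying the construction to bound vertex degrees and control edge-connection sides, then coloring the von Staudt gadgets locally with $10$ colors, and finally extending along the transmission chains while handling inter-chain crossings via the $1$-planarity of the underlying drawing. Your write-up is somewhat more explicit about the interpolation step and the role of \cref{lem:independence_transmission}, but the structure and ingredients are the same.
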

\begin{proof}
    By adapting the graph to guarantee low degree using Lemma~\ref{lem:degree_replacement} and creating a 1-planar 9-bend orthogonal drawing as described in Lemma~\ref{lem:connection_direction}. Then we can obtain a straight line drawing (following the steps in Schaefers reduction), which can be used as the basis of the placement of von Staudt and transmission gadgets.
    In the resulting set of pseudo-segments, we can color all von Staudt gadgets according to the observations in Lemma~\ref{lem:staudt_gadget_coloring} using at most 10 colors.
    This also already colors all inner segments of inversion gadgets, which intersect the von Staudt gadgets.
    Since every edge in the drawing we used as a basis for the arrangement is crossed by at most one other edge and no two edges, which are involved in a crossing share an endpoint, we can extend the coloring of all von Staudt gadgets and the immediately connected transmission gadgets as obtained in Lemma~\ref{lem:transmission_coloring} to a coloring of the entire arrangement by coloring the transmission gadgets according to Lemma~\ref{lem:transmission_coloring}. This new coloring still uses at most the already used 10 colors of the gadget coloring.
\end{proof}

\subsection{Lowering the Impact of Making the Arrangement Uniform}

In this section, we will show that the pseudo-segment arrangement from Schaefer's construction has chromatic number $\le\chromaticnumber$ by analyzing how the needed number of colors for the pseudo-segment arrangement changes in the last step of Schaefers proof, i.e., the dual technique by Las Vergnas~\cite{Vergnas86}.

\corColoring*
\label{cor:coloring*}
\begin{proof}
    Schaefer (and therefore our) constructed pseudo-segment arrangement is not uniform, i.e., it can happen that three or more segments cross in a single point.
    However, the machinery only works with uniform pseudo-segment arrangements, so he shows that his construction yields a so-called constructible arrangement and uses the dual of a technique by Las Vergnas~\cite{Vergnas86} to make the arrangement uniform.
    To this end, some pseudo-segments are replaced by 2 or 4 other pseudo-segments as shown in \cref{fig:lasvergnas}, which increases the number of crossings per segments
    by factor 4.
    However, it is straight-forward to see that these replacements increase the chromatic number by at most factor 3.
    When we apply the same technique to the adapted construction the chromatic number increases from 10 to \chromaticnumber.

    \begin{figure}
        \centering
        \includegraphics{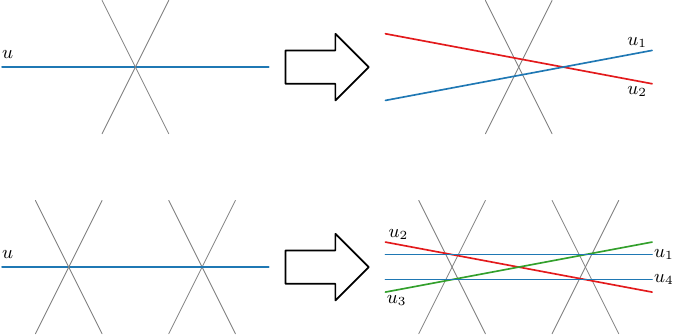}
        \caption{Making a pseudo-segment arrangement uniform, increasing the chromatic number by factor 3.}
        \label{fig:lasvergnas}
    \end{figure}

    Let $\chi'$ be a 10-coloring of the pseudo-segment arrangement we created based on our adapted drawing of $G'$ 
    where every vertex $u$ has a color $c'(u)\in\{c_1,\ldots,c_{19}\}$. We obtain a \chromaticnumber-coloring $\chi$ of $G$ where every vertex $v$ has a color $c(v)\in\{c_1^1,c_1^2,c_1^3,\ldots,c_{10}^1,c_{10}^2,c_{10}^3\}$ as follows. Let $u$ be a vertex in $G'$ with color $c'(u)=c_i$. If $u$ is not replaced in the construction, we set $c(u)=c_i^1$. If $u$ is replaced by two vertices $u_1,u_2$, we set $c(u_1)=c_i^1$ and $c(u_2)=c_i^2$. If $u$ is replaced by four vertices $u_1,u_2,u_3,u_4$, observe that $u_1$ and $u_2$ are not adjacent; see \Cref{fig:lasvergnas}(bottom). Hence, we can set $c(u_1)=c(u_4)=c_i^1$, $c(u_2)=c_i^2$, and $c(u_3)=c_i^3$. 
\end{proof}

\section{Conclusion}

The two most important open question related to our work are to resolve the complexity also for \textit{simple} graphs 
and for more realistic bounds on the geometric thickness, in particular for geometric thickness two. To this end, note that simple graphs are more widely studied in the graph drawing literature, are arguably more natural as mathematical objects and also naturally appear in applications. Hence, it would be interesting to know the precise complexity for simple graphs as well. We believe that our result on multigraphs may become a precursor to understanding the geometric thickness of simple graphs. 

In order to generalize our result to simple graphs, it might be desirable to construct a gadget graph in a way that provides us with some control
both about the geometric embedding as well as how the edges are colored. 
Ideally, we would like to construct a graph in a way that any coloring realizing its geometric thickness leads to the vertices being connected in all colors. 

\begin{question}
\label{q:connected}
 Given $t \in \N$, does there always exist a graph with geometric thickness $t$ such that
    any $t$-colored drawing of $G$ realizing its geometric thickness is connected in all $t$ colors?
\end{question}

Such a connected construction seems elusive and might not even be possible.
It might be worth noting that the minimum number of common edges between any two triangulations
is known to be five and eight~\cite{hutchinson1999representations}.

Regarding the chromatic number of the considered pseudo-segment arrangements, an intriguing question related to our reduction is the following. 
\begin{question}
    \label{q:2coloring} Is
    \segmentStretchability \ER-hard even for pseudo-segment arrangements 
    of chromatic number two?
\end{question}

In this direction,  we already showed that modifications of Schaefer's reduction lower the current best bound on the chromatic number for \ER-hardness from 73 to \chromaticnumber in \Cref{sec:chromaticnumber}. 
We believe that considerably new ideas are needed in order to improve upon the chromatic number and eventually show \Cref{q:2coloring}.
The significance of resolving \Cref{q:2coloring} is that it could be a major step to
show \ER-hardness of many graph drawing problems with bounded parameter.
Note that given a positive answer to \Cref{q:2coloring}, our proof in \Cref{sec:geothickness} implies directly that \geo is \ER-complete already for geometric thickness two, and \sge is \ER-complete for three graphs forming an empty sunflower.

Further, we remark that given positive answers to both \Cref{q:connected,q:2coloring} our reduction implies that  \geo is \ER-complete already for \textit{simple} input graphs and thickness two. Namely, we could replace each edge of multiplicity $t$ in our construction with the gadget graph \cref{q:connected} asks for.

Finally, it is important to study practical solutions for computing graph drawings of low geometric thickness. To this end, note that we know very little about the geometric thickness of even very simple graph classes, e.g., we do not know the geometric thickness of the complete graph $K_n$, for all values of $n$~\cite{DEH00}. Hence it is not too surprising that we aware of efficient embedding algorithms only for  specific settings,  e.g. for maximum degree four  graphs there is an algorithm achieving geometric thickness two~\cite{DuncanEK04lowdeg}.
In more general settings, we are only aware of two inefficient ``all purpose'' approaches to check the geometric thickness of a graph: encoding as an ETR sentence or checking all possible order types with a SAT solver. For the former approach, algorithms to decide an ETR sentence are very slow and might 
not be feasible for more than 5 or 6 vertices. For the latter approach, we may use databases of all order types~\cite{AichholzerAK2002otdatabase,OEIS_A006247} for  $n \leq 11$; however note that  there are over 2~billion different order types with 11 points. Therefore, a scalable general approach is of great interest.

\paragraph{Acknowledgements.}
All authors would like to express their gratitude towards
the 18th European Research Week on Geometric Graphs which was held in Alcal\'{a} de Henares, Spain, in September 2023.
The workshop participants, the city, the university, and the organizers provided a unique experience that made this research possible. 
A preliminary version of this work has been presented at the Latin American Theoretical INformatics Symposium 2024 in Puerto Varas, Chile. We thank the reviewers of that preliminary version for their valuable feedback.
Finally, we thank Tony Huynh for pointing us at relevant literature about graphs with common edges.

\paragraph{Funding.}

T. M. is generously supported by the Netherlands Organisation for Scientific Research (NWO) under project no. VI.Vidi.213.150. 
I. P. is a Serra H\'unter Fellow. Partially supported by grant 2021UPC-MS-67392 funded by the Spanish Ministry of Universities and the European Union (NextGenerationEU) and by grant PID2019-104129GB-I00 funded by MICIU/AEI/10.13039/501100011033.
S. T. has been funded by the Vienna Science and Technology Fund (WWTF) [10.47379/ICT19035] and by the Dutch Research Council (NWO) through Gravitation-grant NETWORKS-024.002.003.

\newpage
\bibliographystyle{alphaurl}
\bibliography{ETR2,lib}

\end{document}